\newcounter{commcount}\setcounter{commcount}{0}
\newcommand{\catCat}{\mathbf{Cat}}
\newcommand{\catEScen}{\mathbf{eScen}} 
\newcommand{\cateScen}{\mathbf{eScen}} 
\newcommand{\El}{\mathsf{El}}
\newcommand{\Func}{\mathsf{Fun}}
\newcommand{\EFunc}{\mathsf{EFun}}
\newcommand{\Bund}{\mathsf{Bund}} 
\newcommand{\sBund}{\mathsf{sBund}}
\definecolor{turquoise}{rgb}{0.2, 0.7, 0.6}
\title{Simplicial methods in the resource theory of contextuality} 
\author{Aziz Kharoof\footnote{aziz.kharoof@bilkent.edu.tr}}
\author{Cihan Okay\footnote{cihan.okay@bilkent.edu.tr}} 
\affil{{\small{Department of Mathematics, Bilkent University, Ankara, Turkey}}}
\date{\today}
\begin{document}

\maketitle

\begin{abstract}
We develop a resource theory of contextuality within the framework of symmetric monoidal categories, extending recent simplicial approaches to quantum contextuality. Building on the theory of simplicial distributions, which integrates homotopy-theoretic structures with probability, we introduce event scenarios as a functorial generalization of presheaf-theoretic measurement scenarios and prove their equivalence to bundle scenarios via the Grothendieck construction. We define symmetric monoidal structures on these categories and extend the distribution functor to a stochastic setting, yielding a resource theory that generalizes the presheaf-theoretic notion of simulations. Our main result characterizes convex maps between simplicial distributions in terms of non-contextual distributions on a corresponding mapping scenario, enhancing and extending prior results in categorical quantum foundations.
\end{abstract}

\tableofcontents

\section{{Introduction}}

The study of contextuality has emerged as a cornerstone of quantum foundations, revealing fundamental constraints on classical explanations of empirical data generalizing Bell inequalities. In recent years, categorical frameworks have been proposed to study contextuality in a systematic and compositional manner, most recently the simplicial framework \cite{okay2022simplicial}, which generalizes the earlier presheaf-theoretic approach \cite{abramsky2011sheaf}.
 In this work, we develop a categorical resource theory of contextuality---framed in the formalism of symmetric monoidal categories—in the sense of Coecke–Fritz–Spekkens \cite{coecke2016mathematical}, extending the approach introduced in \cite{barbosa2023closing}.

Our approach builds on the theory of simplicial distributions \cite{okay2022simplicial,kharoof2022simplicial}, which integrates simplicial sets from homotopy theory with probability theory to model spaces of distributions over measurements and outcomes. This framework extends the earlier presheaf-theoretic approach of \cite{abramsky2011sheaf} and the cohomological approach of \cite{Coho}, and further generalizes them by promoting measurements and outcomes from discrete label sets to structured spaces described by simplicial sets.
An important advantage of the simplicial  approach is its compatibility with categorical methods. In \cite{barbosa2023bundle}, the authors introduced two categories: one for bundle scenarios and one for simplicial scenarios. The former is based on simplicial complexes and acts as an intermediary between the {standard scenarios} of the presheaf-theoretic framework and the fully simplicial scenarios.
The category of bundle scenarios $\catbScen$ and the category of simplicial scenarios $\catsScen$ are connected via the nerve functor
\[
N\colon \catbScen \to \catsScen.
\]
Our first contribution in this paper is the introduction of a new category, called the category of \emph{event scenarios}, which extends the functorial viewpoint of the sheaf-theoretic approach. An event scenario is defined as a functor from the opposite of the simplex category associated to a simplicial complex (representing the measurements) to the category of sets, assigning to each simplex the set of possible outcomes, subject to certain locality conditions.  We prove that the category $\cateScen$ of event scenarios is equivalent to the category $\catbScen$ of bundle scenarios. Our proof of this result is based on a novel use of the Grothendieck construction to reinterpret categories of scenarios.

The notion of contextuality is formalized through the functors of empirical models and simplicial distributions,
\[
\bEmp\colon \catbScen \to \catConv
\qquad \text{and} \qquad
\sDist\colon \catsScen \to \catConv,
\]
which assign to each scenario the corresponding convex set of empirical models and simplicial distributions, respectively. As shown in \cite{barbosa2023bundle}, these two kinds of distributions are connected via a natural isomorphism
\[
\xi\colon \bEmp \xrightarrow{\cong} \sDist \circ N,
\]
which preserves the notion of contextuality across the two frameworks. In this work, we revisit these functors using a more streamlined construction based on a relative version of the Grothendieck construction. By interpreting bundle scenarios as event scenarios, our framework clarifies how standard presheaf-theoretic scenarios embed into this setting and makes the specialization of contextuality more transparent.

Our main motivation in this paper is to develop a resource theory for contextuality grounded in the formalism of symmetric monoidal categories, a mathematical framework for studying general resource theories introduced in \cite{coecke2016mathematical}. For instance, the resource theory of randomness arises from the distribution monad $D\colon \catSet \to \catSet$: the Kleisli category $\catSet_D$ represents the underlying \emph{process theory}, namely the category of stochastic maps, while the category of elements $\El(D)$ captures the corresponding \emph{resource theory} of probability distributions. Both categories admit natural symmetric monoidal structures used to describe parallel composition.
We extend this perspective to contextuality by introducing symmetric monoidal categories for simplicial distributions. In our setting, the process theory is given by the stochastic category of simplicial scenarios, denoted $\catsScen_D$, which arises by enlarging the morphisms of $\catsScen$ to include stochastic maps. The functor of simplicial distributions naturally extends to this stochastic category and is represented by the terminal object, encoding distributions across all scenarios. The corresponding resource theory for contextuality is then obtained by taking the category of elements of this extended functor. This construction generalizes the category of simulations introduced in the presheaf-theoretic framework \cite{barbosa2023bundle}, with morphisms capturing stochastic transformations rather than deterministic combinations of scenario morphisms.
In this former approach, the resource theory of contextuality has been studied via convex structures on empirical models and characterized through internal hom objects—so-called mapping scenarios---within the category of standard scenarios \cite{abramsky2011sheaf, coecke2016mathematical}. Our framework improves on this earlier approach by working with the  categories of event, bundle, and simplicial scenarios, where mapping scenarios can be constructed more naturally and exhibit enhanced categorical properties. In particular, our main result characterizes convex maps between simplicial distributions in terms of non-contextual distributions on the simplicial mapping scenario, thereby extending the corresponding result in the presheaf-theoretic setting to the richer framework of simplicial scenarios.

A summary of our contributions in this work are as follows:
\begin{itemize}
    \item We introduce event scenarios as a natural generalization of measurement scenarios within the presheaf-based framework of contextuality (Definition~\ref{def:funcevents}).
    
    \item A categorical equivalence between event scenarios and bundle scenarios is established in Theorem~\ref{thm:EquivCat}, providing a foundational bridge between the functorial and bundle-based perspectives.
    
    \item The empirical model and simplicial distribution functors are defined via a relative Grothendieck construction (Definitions \ref{def:empirical model functor event} and \ref{def:empirical model functor bundle}), providing a unified categorical framework for representing probabilistic data over scenarios.

    \item We endow the category of event scenarios with a symmetric monoidal structure (Definition \ref{def:tensor event scenarios}), and illustrate its use by constructing contextual empirical models through tensor composition.
    
    \item For a uniform treatment of simplicial scenarios and its stochastic extension, we define the category of scenarios over a monad with a gluing operation (Definition \ref{def:catgory of simplicial scenarios over T}).

    \item  An internal hom structure is introduced, called the mapping scenario, for both event scenarios and bundle scenarios, extending the concept from \cite{barbosa2023closing}, and a simplicial mapping scenario for simplicial scenarios used in our main result Theorem \ref{thm:contexMapp}. 
\end{itemize}

The paper is structured as follows. In Section~\ref{sec:catofscen}, we introduce the categories of event scenarios and bundle scenarios, along with their associated empirical models, and equip these categories with a symmetric monoidal structure. Section~\ref{sec:stochastic simplicial scenarios} develops the notion of stochastic simplicial scenarios via monads with a gluing operation, and defines scenarios and distributions over such monads, including a symmetric monoidal structure on this category. In Section~\ref{sec:mapping scenarios}, we introduce mapping scenarios, providing internal hom constructions for both event and bundle scenarios, as well as their simplicial set counterparts.
The appendices contain background material and technical results: Appendix~\ref{sec:gro} introduces the relative Grothendieck construction; Appendix~\ref{sec:properties of distribution monad} discusses properties of the distribution monad and includes proofs of two technical results. Appendix~\ref{sec:simplicial complexes} reviews simplicial complexes, focusing on simplicial relations and their monoidal structure. Appendix~\ref{sec:comparison results} presents comparison results related to mapping scenarios and empirical models in the setting of standard scenarios from the presheaf-theoretic perspective.

\paragraph{Acknowledgments.}
This work is supported by the Air Force Office of Scientific Research (AFOSR) under award number  FA9550-24-1-0257. The second author also acknowledges support from the Digital Horizon Europe project FoQaCiA, GA no. 101070558.

\section{Categories of scenarios}\label{sec:catofscen}

{
In this section, we introduce \emph{event scenarios}, a new class of scenarios extending the well-known notion of \emph{standard scenarios} from the sheaf-theoretic approach to contextuality \cite{abramsky2011sheaf,barbosa2023closing}. Our definition is inspired by the functorial (presheaf) perspective but generalizes it significantly, accommodating a broader range of scenarios. We define the category of event scenarios and demonstrate that it is equivalent to the category of \emph{bundle scenarios}, another generalization of standard scenarios introduced in \cite{barbosa2023bundle}. A distinctive aspect of our approach is the systematic use of the Grothendieck construction from category theory, providing a fundamental and insightful tool for building and comparing these various categories of scenarios.
}

\subsection{{Event scenarios}}
\label{sec:standard}

{In the presheaf-theoretic approach, measurements are represented by simplicial complexes. Morphisms between scenarios use simplicial relations between simplicial complexes. We will take the monadic approach developed in \cite{barbosa2023bundle} for simplicial relations, which relies on the following key construction.  

\begin{defn}\label{def:hatN}
For a simplicial complex $\Sigma$, the \emph{nerve complex} $\hat N \Sigma$ is the simplicial complex consisting of: 
\begin{itemize}
	\item The vertices given by 
the simplices of $\Sigma$, i.e., $V(\hat N\Sigma)=\Sigma$,
	\item The  
	simplices of $\hat N \Sigma$ 
	given by:
    $$
    \hat N\Sigma = \set{ \set{\sigma_1,\cdots,\sigma_k} \mid \cup_{i=1}^k\sigma_i \in \Sigma }.
    $$
\end{itemize}
\end{defn}
}

{
Simplicial complex maps $\pi\colon\Sigma' \to \hat N\Sigma$  are in one-to-one correspondence with simplcial relations. Moreover, $\hat N$ is a monad on the category $\catComp$ of simplicial complexes. We will write $\catRel$ for the associated Kleisli category.  
To a simplicial complex $\Sigma$, we will associated a category $\catC_\Sigma$ whose objects are the simplices and morphisms are given by simplex inclusions.
Then, a simplicial relation represented by $\pi$ induces a functor 
\[\overline{\pi}\colon \catC_{\Sigma'} \to \catC_{\Sigma}\]
{where a simplex} $\sigma$ {is sent} to $\cup_{\tau \in \pi(\sigma)} \tau$. For simplicity of notation, the opposite of $\bar \pi$ will also be denoted by the same symbol.
}

{Next, we introduce a subcategory that will be key in the definition of event scenarios.}  
A {\it cover} of a simplex $\sigma$ is a family of simplices $\sigma_i\subset \sigma$, where $i=1,\cdots, n$ for some $n\geq 1$, that satisfy {$\cup_{i=1}^n \sigma_i =\sigma$}. Let $\catC_{\sigma_1,\cdots,\sigma_n}$ denote the subcategory of $\catC_\Sigma$ determined by the intersections of the simplices in the cover. The inclusion functor will be denoted by
\[
\chi_{\sigma_1,\cdots,\sigma_n}\colon  \catC^{{\op}}_{\sigma_1,\cdots,\sigma_n} \to \catC^{{\op}}_\Sigma.
\]
In other words, $\chi_{\sigma_1,\cdots,\sigma_n}$ is a diagram of simplices consisting of the intersections of the simplices in the cover.

\begin{defn}\label{def:event scenario} 
A functor 
\[F\colon\catC^{\mathrm{op}}_{\Sigma} \to \catSet\]
 is called  
\emph{local}  
if, for 
every simplex $\sigma \in \Sigma$, and 
cover
$\set{\sigma_1,\cdots,\sigma_n}$ of $\sigma$, the canonical map
\[
F(\sigma) \to \lim F\circ \chi_{\sigma_1,\cdots,\sigma_n}
\]
is injective. 
\end{defn}

\begin{defn}\label{def:funcevents} 
A functor $F\colon \catC^\mathrm{op}_{\Sigma} \to \catSet$ is called an {\emph{event scenario}} if it satisfies the following 
properties:
\begin{itemize} 
    \item {\it Locality:} $F$ is local. 
    \item {\it Non-triviality:} For every $\sigma \in \catC_{\Sigma}$, the set $F(\sigma)$ is non-empty. 
    \item {\it Local surjectivity:} For every morphism $\sigma \hookrightarrow \tau$, the morphism $F(\tau) \to F(\sigma)$ is surjective.
\end{itemize}
We denote by $\EFunc(\Sigma)$ the subcategory of {the functor category} $\Func(\catC^\mathrm{op}_{\Sigma},\catSet)$ consisting of functors of events. 
\end{defn}

\begin{pro}\label{pro:fpiisoutcomes}
Let $\pi\colon \Sigma' \to \hat N \Sigma$ be a simplicial complex map and $F\colon \catC^\mathrm{op}_{\Sigma} \to \catSet$ be a local functor. 
Then, the composite 
\[
 \catC^\mathrm{op}_{\Sigma'} \xrightarrow{\bar \pi} \catC^\mathrm{op}_{\Sigma}  \xrightarrow{F} \catSet
\] 
is also a local functor. {In particular, if $F$ is an event scenario, then so is the composite $F\circ \bar \pi$.} 
\end{pro}
\begin{proof}
Given a simplex $\sigma \in \Sigma'$ and a 
cover
$\set{\sigma_1,\cdots,\sigma_n}$ of $\sigma$, 
$\set{\overline{\pi}(\sigma_1),\cdots,\overline{\pi}(\sigma_n)}$ is a 
cover
$\overline{\pi}(\sigma)$. The induced map 
$
F(\overline{\pi}(\sigma)) \to \lim F(\overline{\pi}(\chi_{\sigma_1,\cdots,\sigma_n}))
$ factors as follows:
\begin{equation}\label{eq:Flim}
\begin{tikzcd}[column sep=huge,row sep=large]
F(\overline{\pi}(\sigma))
\arrow[r,""]
\arrow[dr,""'] & \lim F(\chi_{\overline{\pi}(\sigma_1),\cdots,\overline{\pi}(\sigma_n)})  \arrow[d,""']  \\
&  \lim F(\overline{\pi}(\chi_{\sigma_1,\cdots,\sigma_n})) 
\end{tikzcd}
\end{equation}
where the vertical map 
 is injective. Moreover, since $F$ 
is local, {the horizontal map is injective.} 
{Therefore,} the induced map 
$$
F(\overline{\pi}(\sigma)) \to \lim F(\overline{\pi}(\chi_{\sigma_1,\cdots,\sigma_n}))
$$
is also injective.
\end{proof}

{Using} this result, we can 
{define}
the functor
\begin{equation}\label{eq:SFunc}
\EFunc\colon \catsRel^\op \to \catCat
\end{equation}
that sends a simplicial complex $\Sigma$ to the category  
$\EFunc(\Sigma)$, {and sends a simplicial complex map $\pi\colon \Sigma' \to \hat N \Sigma$ to the functor 
$\pi^{\ast}\colon \EFunc(\Sigma) \to \EFunc(\Sigma')$ which maps $F$ to $F \circ \overline{\pi}$.} 

We are now ready to introduce the main category {of interest} in this section. 
{Our primary tool} is the Grothendieck construction, a standard construction from category theory, which we briefly recall in Section~\ref{sec:gro}.

\begin{defn}
We define the \emph{category of event scenarios}, denoted by $\catEScen$, 
{to be}
the Grothendieck construction of the functor $\EFunc$ in (\ref{eq:SFunc}),
$$
\catEScen=
\int_{\catsRel^{\mathrm{op}}} \EFunc. 
$$
\end{defn}

{Explicitly, the category $\catEScen$ can be described as follows:}
\begin{itemize}
    \item An object in $\catEScen$ is an event scenario.
    \item A morphism in $\catEScen$ is a pair $(\pi,\alpha)\colon F \to G$, where $\pi\colon  \Sigma' \to \hat N \Sigma$ is a simplicial complex map, and $\alpha\colon F \circ \overline{\pi} \to G$ is a natural transformation.
\end{itemize}
The composition {of} $(\pi_1,\alpha)\colon F \to G$ with $(\pi_2,\beta)\colon G \to H$ is {given by}
$$
\left(\pi_1 \diamond \pi_2, \beta \circ (\Id_{\overline{\pi}_2}\star \alpha) \right)
$$
{where $\star$ denotes the whiskering operation.}
The identity on $F$ is given by $(\delta_{\Sigma},\Id_{F})$.

{An important source of examples for event scenarios come from event presheaves of standard scenarios.}

\begin{ex}\label{ex:eventpreshef}  
A \emph{standard scenario} is defined as a pair $S=(\Sigma,O)$ consisting of a simplicial complex and a family $O$ of non-empty sets $O_x$ corresponding to the outcomes of each measurement $x\in \Sigma_{0}$.
The \emph{event presheaf} is introduced as the functor 
$$
\eE_{S} \colon \catC_{\Sigma}^{\op} \to \catSet
$$
that maps each context $\sigma \in \Sigma$
to the set of joint outcomes $\prod_{x\in \sigma} O_x$,
and for an inclusion $\sigma \hookrightarrow \tau$, it maps $s\in \eE_{S}(\tau)$ to the restriction $s|_\tau\in \eE_{{S}}(\sigma)$ defined by projection.
{B}y assigning to each standard scenario $S=(\Sigma,O)$ the associated event presheaf $\eE_{S}\colon \catC_\Sigma^\op \to \catSet$, we obtain a fully faithful functor 
\[
\catScen \to \catEScen
\]
from the category of standard scenarios to the category of event scenarios.
\end{ex}

{Event scenarios generalize standard scenarios.
We now present an explicit example of an event scenario that does not arise from a standard scenario.}

\begin{ex}
Let $\Sigma$ be the simplicial complex $\set{\set{x,y},\set{x,z},\set{y,z}}$.  
{Let} $F\colon\catC^\mathrm{op}_{\Sigma} \to \catSet$ {be defined} by  
\begin{align*}
F(x)=F(y)&=F(z)=\set{0,1}\\
F(\set{x,y})&=\set{(0,0),(1,1)}\\  F(\set{y,z})&=\set{(0,0),(1,1)}\\ F(\set{x,z})&=\set{(0,1),(1,0)} 
\end{align*}
and
$$
F(x\to \set{x,y})=\pr_1, \, F(y\to \set{x,y})=\pr_2 
$$
$$
F(y\to \set{y,z})=\pr_1, \, F(z\to \set{y,z})=\pr_2 
$$
$$
F(x\to \set{x,z})=\pr_2, \, F(z\to \set{x,z})=\pr_1 
$$
where $\pr_i$ denotes the projection to the $i$-th coordinate. 
\end{ex}

\subsection{Bundle scenarios}

We begin by recalling the definition of bundle scenarios from \cite{barbosa2023bundle}. 
The \emph{star of a simplex} $\sigma \in \Sigma$ is defined by
$$
\St(\sigma) = \set{\sigma'\in \Sigma \colon\, \sigma\subset \sigma'}.
$$

\begin{defn}\label{def:BS}
A \emph{bundle scenario} is a map $f \colon \Gamma \to \Sigma$ of simplicial complexes that satisfies: 
\begin{itemize}
\item \emph{Surjectivity:} The map $f$ is surjective on simplices.
\item \emph{Local surjectivity:} For each $\gamma\in \Gamma$, the map $f|_{\St(\gamma)} \colon \St(\gamma) \to \St(f(\gamma))$ is surjective.
\item \emph{Discrete over vertices:} 
For distinct vertices $x,y \in {V(\Gamma)}$, if $f(x)=f(y)$, then $\{x,y\} \notin \Gamma$.
\end{itemize} 
Given a simplicial complex $\Sigma$, we define the category $\Bund(\Sigma)$
whose objects are bundle scenarios over $\Sigma$. 
A morphism $\alpha\colon f \to g$ in this category is a commutative diagram of the form
\begin{equation}\label{eq:alphaftog}
\begin{tikzcd}[column sep=huge,row sep=large]
\Gamma
\arrow[rr,"\alpha"]
\arrow[dr,"f"'] && \Gamma'
\arrow[dl,"g"] \\
&  \Sigma &  
\end{tikzcd}
\end{equation}
\end{defn}

{By Lemma~\ref{lem:lempi1pi2}, we obtain} a functor
\[
\Bund \colon \catsRel^{\mathrm{op}} \to \catCat.
\]
{Using this functor, we can} define the category of bundle scenarios \cite[Definition~3.7]{barbosa2023bundle} via the Grothendieck construction:

\begin{defn}\label{sec:category of bundle scenarios}
The \emph{category of bundle scenarios} $\catbScen$ is defined as the Grothendieck construction of $\Bund$:
\[
\catbScen = \int_{\catsRel^{\mathrm{op}}} \Bund.
\]
\end{defn}

The category $\catbScen$ {is} extensively studied in \cite{barbosa2023bundle}.

Our next goal is to establish an equivalence between the category of event scenarios and the category of bundle scenarios. For this purpose, we use the category-of-elements construction, a particular instance of the Grothendieck construction, as recalled in Section~\ref{sec:gro}. Specifically, we consider functors of the form $F\colon \catC_\Sigma^\op \to \catSet$, where the domain is the category associated with a simplicial complex $\Sigma$. We denote the resulting category of elements by $\catC_F$ and the canonical projection by
\[
\El(F)\colon \catC_F\to \catC_\Sigma.
\]
Given an inclusion map $i\colon \tau \hookrightarrow \sigma$, we write $s|_{\tau}$ for the image $F(i)(s)$.

\begin{pro}\label{pro:ElFsimcomp}
For a local functor 
$F\colon\catC_{\Sigma}^\mathrm{op} \to \catSet$, the category $\catC_F$ of elements 
forms
a simplicial complex, and the projection $\El(F)$ is a simplicial complex map.  
\end{pro}
\begin{proof} 
Let $(\sigma,s)$ and $(\tau,t)$ be objects in $\catC_F$. We say that $(\tau,t) \subset (\sigma,s)$ if there exists a morphism of $\catC_{F}$ from 
$(\sigma,s)$ to $(\tau,t)$, meaning {that} $\tau \subset \sigma$ and $s|_{\tau}=t$. This defines a partial order on the objects of $\catC_F$. Moreover, the projection map $\catC_F\to \catC$ preserves the partial order. Given an object $(\sigma,s)$ of $\catC_{F}$, since $F$ is local,
for any 
cover
$\set{\sigma_1,\cdots,\sigma_n}$ of $\sigma$, the object $(\sigma,s)$ is uniquely determined by its restrictions $(\sigma_{i},s|_{\sigma_i})$ where $1\leq i \leq n$. In particular, the vertices of 
$\catC_F$ will be of the form $(x,s)$, where $x \in {V(\Sigma)}$ and $s\in F(x)$, and every simplex $(\sigma,s)$ is determined by its vertices.
\end{proof}

When the functor $F$ is local, {we denote by $\Gamma_F$ the simplicial complex whose associated category is $\catC_F$.}


\begin{pro}\label{pro:ElEfunBun}
Sending a functor $F\colon\catC^\mathrm{op}_{\Sigma} \to \catSet$ to the simplicial complex map $\El(F)$ 
gives an equivalence of categories
\begin{equation}\label{eq:ElEfunBun}
\El_{\Sigma}\colon \EFunc(\Sigma) \to \Bund(\Sigma).
\end{equation}    
\end{pro}
\begin{proof}
{The proof leverages the well-known equivalence between functor categories with a fixed domain and discrete fibrations over that domain, established through the Grothendieck construction; see, e.g., \cite[Theorem~4.3]{haderi2024operadic}.}

First, we verify that the functor in (\ref{eq:ElEfunBun}) is well-defined. By Proposition \ref{pro:ElFsimcomp}, the projection $\El(F)$ can be seen as a simplicial complex map $\Gamma_F\to \Sigma$.
This map is clearly discrete over vertices, and for each $\sigma \in \Sigma$, since the set $F(\sigma)$ is non-empty, there exists
$(\sigma,s) \in \Gamma_F$ with $\El(F)(\sigma,s)=\sigma$. To prove that $\El(F)$ is locally surjective, consider 
$(\sigma',t) \in \Gamma_F$ and $\sigma' \subset \sigma$. The surjectivity of the induced map $F(\sigma) \to F(\sigma')$ ensures the existence of 
$s \in F(\sigma)$ such that $s|_{\sigma'}=t$. Hence, $(\sigma',t)\subset (\sigma,s)$ and $\El(F)(\sigma,s)=\sigma$.

Next, we construct a functor 
\begin{equation}\label{eq:thefunctorS}
S_{\Sigma} \colon  \Bund(\Sigma) \to \EFunc(\Sigma).
\end{equation}
Given a bundle scenario $f \colon \Gamma \to \Sigma$, for simplices 
$\sigma' \subset \sigma \in \Sigma$ and $\gamma\in \Gamma$ such that $f(\gamma)=\sigma$, there exists a simplex $\gamma'\subset \gamma$ with $f(\gamma')=\sigma'$. Since $f$ is discrete over vertices, such $\gamma'$ is unique. This defines a map
\begin{equation}\label{eq:rsigmasigma}
r_{\sigma,\sigma'} \colon f^{-1}(\sigma) \to f^{-1}(\sigma')
\end{equation}
which sends $\gamma$ to the corresponding unique simplex $\gamma'$. We then define
$$
S_{\Sigma}(f)(\sigma)=f^{-1}(\sigma) \;\; \text{ and }\;\; S_{\Sigma}(f)(\sigma' \hookrightarrow \sigma)=r_{\sigma,\sigma'}.
$$
Moreover, a morphism $\alpha\colon f \to g$ induces a natural transformation 
$S_{\Sigma}(f)\to S_{\Sigma}(g)$.
We now verify that $S_{\Sigma}(f)$ is an event scenario. For every $\sigma \in \Sigma$, the set $f^{-1}(\sigma)\neq \emptyset$ since $f$ is surjective. 
To show that $r_{\sigma,\sigma'}$ is surjective, consider $\gamma \in f^{-1}(\sigma')$, i.e.,
$f(\gamma)=\sigma'$. 
The map $f|_{\St(\gamma)} \colon \St(\gamma) \to \St(f(\gamma))$ is surjective, and since $\sigma \in \St(f(\gamma))$, there exists 
$\gamma \subset \tau$ such that $f(\tau)=\sigma$. Clearly, $r_{\sigma,\sigma'}(\tau)=\gamma$. Finally, for every 
cover
$\set{\sigma_1,\cdots,\sigma_n}$ of $\sigma$, the diagram $S_{\Sigma}(f)(\chi_{\sigma_1,\cdots,\sigma_n})$ is in $\catC_{\Gamma}$, so the canonical map 
$S_{\Sigma}(f)(\sigma)\to \lim S_{\Sigma}(f)(\chi_{\sigma_1,\cdots,\sigma_n})$ is injective.

Given an event scenario $F\colon \catC_{\Sigma} \to \catSet$ and $\sigma \in \Sigma$, we have
$$
S_{\Sigma}(\El_{\Sigma}(F))(\sigma)=(\El_{\Sigma})^{-1}(\sigma)=\set{(\sigma,s)|~s\in F(\sigma)} \cong F(\sigma).
$$
For an inclusion $\sigma' \subset \sigma$, the map $S_{\Sigma}(\El_{\Sigma}(F))(i\colon\sigma' \hookrightarrow \sigma)$ sends 
$(\sigma,s)$ to $(\sigma',F(i)(s))$. This establishes a natural isomorphism
$$
S_{\Sigma}\circ \El_{\Sigma} \cong \Id_{\EFunc(\catC^\mathrm{op}_{\Sigma},\catSet)}.
$$
Conversely, for a bundle scenario $f\colon \Gamma \to \Sigma$, we obtain a bundle scenario  
$
\El_{\Sigma}(S_{\Sigma}(f))\colon \Gamma_{S_{\Sigma}(f)} \to \Sigma
$
where 
$$
\Gamma_{S_{\Sigma}(f)}=\set{(\sigma,s)|~s\in S_{\Sigma}(f)(\sigma)}=\set{(\sigma,s)|~s\in f^{-1}(\sigma)}\cong \Gamma
$$
and $\El_{\Sigma}(S_{\Sigma}(f))(\sigma,s)=\sigma$. For a morphism $\alpha\colon f \to g$, we have 
$\El_{\Sigma}(S_{\Sigma}(\alpha))(\sigma,s)=(\sigma,\alpha(s))$. Thus, we also have a natural isomorphism:
$$
\El_{\Sigma} \circ S_{\Sigma}\cong \Id_{\mathbf{Bund_{\Sigma}}}.
$$ 
\end{proof}

\begin{pro}\label{pro:Equivpi}
A simplicial {complex} map $\pi\colon \Sigma' \to \hat N \Sigma$ induces a commutative square:
\begin{equation}\label{eq:pioverlinepi}
\begin{tikzcd}[column sep=huge,row sep=large]
\EFunc(\Sigma) \arrow[d,"\overline{\pi}^{\ast}"]
\arrow[r,"\El_{\Sigma}"] & \Bund(\Sigma) \arrow[d,"\pi^{\ast}"]\\
\EFunc(\Sigma') 
\arrow[r,"\El_{\Sigma'}"] & \Bund(\Sigma')
\end{tikzcd}
\end{equation}
\end{pro}
\begin{proof}
Given an event scenario $F\colon\catC^{\mathrm{op}} \to \catSet$, we prove the following square is a pullback square:
\begin{equation}
\begin{tikzcd}[column sep=huge,row sep=large]
\hat N\Gamma_{F} \arrow[d,"\hat N \El(F)"] &\arrow[l,"r_F"']  \Gamma_{F\circ \overline{\pi}}
\arrow[d,"\El(F\circ \overline{\pi})"]
 \\
\hat N\Sigma  &\arrow[l,"\pi"]  \Sigma'
\end{tikzcd}
\end{equation}
Here, the top horizontal map $r_F$ acts as follows: If $\sigma' \in \Sigma'$ such that 
$\pi(\sigma')=\set{\sigma_1',\cdots,\sigma_n'}$ and $s\in F(\overline{\pi}(\sigma'))$, then 
$$
r_F(\sigma',s)=\set{(\sigma_1,s|_{\sigma_1}),\cdots,(\sigma_n,s|_{\sigma_n})}.
$$ 
Given $\set{(\tau_1,s_1),\cdots,(\tau_m,s_m)} \in \hat N\Gamma_{F}$ 
and $\tau' \in \Sigma'$ such that $\set{\tau_1,\cdots,\tau_m}=\pi(\tau')$. Since the union $\cup_{i=1}^m (\tau_i,s_i)$ belongs to $\Gamma_F$, there is a unique $s \in F(\cup_{i=1}^n \tau_i)=F(\overline{\pi}(\tau'))$ such that $s|_{\tau_i}=s_i$ for every $1\leq i \leq m$. 
Thus, $(\tau',s)$ is the unique element in $\Gamma_{F\circ \overline{\pi}}$ satisfying 
$$
r_F(\tau',s)=\set{(\tau_1,s_1),\cdots,(\tau_m,s_m)} \;\; \text{and} \;\; \El(F\circ \overline{\pi})(\tau',s)=\tau'.
$$
So far, we have proved that 
$\El_{\Sigma'}\circ \overline{\pi}_{\ast}=\pi_{\ast}\circ \El_{\Sigma}$ in terms of object. This equality also holds for morphisms, since for any natural transformation $\alpha$ from $F$ to $G$ in $\EFunc(\catC^\mathrm{op}_{\Sigma},\catSet)$, the following diagram commutes:
$$
\begin{tikzcd}[column sep=huge,row sep=large]
\hat N \Gamma_F  \arrow[d,"\hat N {\alpha}_\ast"]&    \Gamma_{F\circ \overline{\pi}}
 \arrow[l,"r_F"]  \arrow[d,"{(\Id_{\overline{\pi}}\ast \alpha)_\ast}"]
 \\
\hat N \Gamma_G   &  \arrow[l,"r_G"] \Gamma_{G \circ \overline{\pi}}   
\end{tikzcd}
$$
\end{proof}

By Propositions \ref{pro:ElEfunBun} and \ref{pro:Equivpi}, we have a natural isomorphism of functors 
$$
\El \colon \EFunc \to \Bund.
$$ 
Applying the  
Grothendieck construction
we obtain {the desired} equivalence.

\begin{thm}\label{thm:EquivCat}
The category of elements functor induces an equivalence between the category of event scenarios and the category of bundle scenarios: 
$$
\int_{\catsRel^{\mathrm{op}}} \El \colon \catEScen \xrightarrow{\simeq}  \catbScen.
$$
\end{thm}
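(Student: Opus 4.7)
The plan is to assemble the fiberwise equivalences of Proposition~\ref{pro:ElEfunBun} with the naturality squares of Proposition~\ref{pro:Equivpi} and invoke the standard fact that the Grothendieck construction sends fiberwise equivalences to equivalences of total categories. Concretely, I would first package the collection $\{\El_{\Sigma}\}$ into a (pseudo)natural transformation $\El\colon \EFunc \to \Bund$ of $\catCat$-valued functors on $\catsRel^{\mathrm{op}}$: the components are provided by Proposition~\ref{pro:ElEfunBun}, and the naturality constraints are exactly the squares \eqref{eq:pioverlinepi}, which commute on the nose by Proposition~\ref{pro:Equivpi}. Since those squares commute strictly, no nontrivial invertible $2$-cells are required, and the pseudonaturality coherence axioms reduce to a routine verification.

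Applying the Grothendieck construction, viewed as a $2$-functor on $\catCat$-valued pseudofunctors, to this transformation then yields the desired functor
\[
\int_{\catsRel^{\mathrm{op}}} \El \colon \catEScen \longrightarrow \catbScen,
\]
sending an event scenario $F\colon \catC_{\Sigma}^{\op} \to \catSet$ to the bundle scenario $\El(F)\colon \Gamma_F \to \Sigma$, and a morphism $(\pi,\alpha)\colon F \to G$ to $(\pi,\El_{\Sigma'}(\alpha))$, with the commutativity of \eqref{eq:pioverlinepi} guaranteeing that the source and target match up correctly.

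To conclude that this functor is an equivalence, I would promote the fiberwise quasi-inverses $S_{\Sigma}$ constructed in the proof of Proposition~\ref{pro:ElEfunBun} into a (pseudo)natural transformation $S\colon \Bund \to \EFunc$, and check that the fiberwise isomorphisms $S_{\Sigma}\circ \El_{\Sigma} \cong \Id$ and $\El_{\Sigma}\circ S_{\Sigma} \cong \Id$ are natural in $\Sigma$. The step I expect to require the most care is verifying the compatibility of $S$ with the pullback action of $\pi\colon \Sigma' \to \hat N\Sigma$, i.e., a natural isomorphism $S_{\Sigma'}\circ \pi^{\ast} \cong \overline{\pi}^{\ast}\circ S_{\Sigma}$. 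This is morally dual to the pullback diagram established in the proof of Proposition~\ref{pro:Equivpi}, and concretely reduces to tracking the unique preimage $\gamma'\subset\gamma$ of a simplex across the relation $\pi$, which is precisely the bookkeeping that $S_{\Sigma}$ was defined to encode. Once these naturality data are in place, the Grothendieck construction turns them into a quasi-inverse for $\int_{\catsRel^{\mathrm{op}}} \El$, completing the proof.
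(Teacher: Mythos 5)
Your proposal is correct and follows essentially the same route as the paper: it assembles the fiberwise equivalences of Proposition~\ref{pro:ElEfunBun} with the strictly commuting squares of Proposition~\ref{pro:Equivpi} into a transformation $\El\colon \EFunc \to \Bund$ and applies the Grothendieck construction to obtain the equivalence. The paper simply invokes the general fact (recalled in its appendix) that such fiberwise equivalences induce an equivalence of total categories, whereas you additionally spell out the quasi-inverse by promoting the $S_{\Sigma}$ to a transformation of $\Bund$ --- a harmless elaboration of the same argument.
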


\subsection{Empirical models}
 
{Empirical models are fundamental objects within the sheaf-theoretic framework for contextuality \cite{abramsky2011sheaf}, describing families of probability distributions that satisfy a key constraint known as non-signaling.} In \cite{barbosa2023bundle}, the authors of the present paper generalized empirical models to bundle scenarios over simplicial complexes. In this section, we {extend empirical models further to event scenarios} and present a novel construction of the associated functor for bundle scenarios using the Grothendieck construction. {By virtue of the equivalence established in Theorem~\ref{thm:EquivCat}, the empirical models obtained by these constructions can be directly compared.}

{We begin by recalling the definition of the distribution monad and convex sets.}

\begin{defn}
The \emph{distribution monad} (see \cite[Section VI]{mac2013categories}) is a functor $D\colon  \catSet \to \catSet$ defined as follows:
\begin{itemize}
    \item For a set $X$, the set $D(X)$ of distributions on $X$ is defined to be
$$
D(X)=\set{P\colon X \to [0,\infty) \mid ~ |\set{x\in X:~P(x)\neq 0}|<\infty \,\, \text{and} \,\,\sum_{x\in X}P(x)=1}.    
$$
\item For a map $f\colon X\to Y$, the map $D(f)\colon  D(X) \to D(Y)$ is defined by
$$
P\mapsto \left( y\mapsto \sum_{x\in f^{-1}(y)} P(x) \right). 
$$
\end{itemize}
The unit of the distribution monad, $\delta_X\colon  X \xhookrightarrow{} D(X)$, sends $x\in X$ to the delta distribution $\delta^x$,
		$$
		\delta^x(x') = \left\lbrace
		\begin{array}{ll}
			1 & x'=x\\
			0 & \text{otherwise.}
		\end{array}
		\right.
		$$
\end{defn}

A \emph{convex set} consists of a set $X$ equipped with a ternary operation 
\[
\Span{-,-,-}\colon [0,1]\times X\times X \to X
\]
satisfying certain axioms that generalize convex combinations (see \cite[Definition~3]{jacobs2010convexity}). We adopt the notation $t x +(1-t)y$ instead of $\Span{t,x,y}$ for $t\in[0,1]$ and $x,y \in X$. Equivalently, a convex set can be characterized as an algebra over the distribution monad $D$ (see \cite[Theorem~4]{jacobs2010convexity}). 
We denote the category of convex sets by $\catConv$. There is an adjunction
\begin{equation}\label{eq:Set adjunction Conv}
D:\catSet \adjoint \catConv :U,
\end{equation}
where $U$ is the forgetful functor and $D$ sends a set $X$ to the free convex set $D(X)$.

For any functor $F\colon \catC \to \catSet$, composing with the distribution monad $D$ produces a functor $D_{\ast}(F)=D \circ F\colon  \catC \to \catConv$ that takes values in convex sets. 
According to \cite[Theorem 5.6.5 (i)]{riehl2017category}, the category of convex sets is a complete category, so we have 
the composite functor
$$
\Emp_\Sigma\colon\EFunc(\Sigma) \xrightarrow{D_\ast}  
\Func(\catC^\mathrm{op}_\Sigma,\catConv) \xrightarrow{\lim} 
\catConv. 
$$
Additionally, for functors $G\colon \catC_1 \to \catC_2$ and $F\colon \catC_2 \to \catSet$, there is a natural map 
\[
\lim {D_*(F)} \to \lim {D_*(F \circ G)}.
\]
This induces a functor
\[
\Emp_- \colon \catsRel^{\mathrm{op}} \to \catCat{\slice}\catConv.
\]
Applying the relative Grothendieck construction introduced in Section~\ref{sec:grothendieck}, we thereby obtain the empirical model functor on the category of event scenarios.

\begin{defn}\label{def:empirical model functor event}
We define the empirical model functor {for event scenarios} to be the relative Grothendieck construction of $\Emp_-$: 
$$
\Emp=\int_{\catsRel^{\mathrm{op}}} \Emp_-\colon  \catEScen \to \catConv. 
$$
\end{defn}

{Let $F$ be an event scenario.} 
{By} the definition of limit,  
{there is a canonical map}
\begin{equation}
\Theta_{F}\colon D(\lim F) \to \lim {D_*(F)} .
\end{equation}
\begin{defn}\label{def:conEmp}
An empirical {model} $p$ on $F$ is called \emph{contextual} if it does not lie in the image of $\Theta_F$. Otherwise, it is called \emph{noncontextual}.
\end{defn}
The functors in (\ref{eq:thefunctorS}) form a natural isomorphism  $S\colon  \Bund \to \EFunc$. This isomorphism induces the following functor:
$$
\Emp_-\circ S \colon \catsRel^{\mathrm{op}} \to \catCat{\slice}\catConv.
$$
We can now recover the empirical model functor introduced in \cite{barbosa2023bundle}, {along with the associated definition of contextuality,} on the category of bundle scenarios over simplicial complexes.

\begin{defn}\label{def:empirical model functor bundle}
We define the empirical model functor on bundle scenarios 
to be the relative Grothendieck construction: 
$$
\bEmp = \int_{\catsRel^{\mathrm{op}}} \Emp_-\circ S\colon   \catbScen \to \catConv.
$$
\end{defn}

\subsection{{Monoidal structure}}\label{subsec:MonEvent}

In this section, we show that the tensor product on $\catsComp$ {and $\catRel$}, defined {in Section \ref{subsec:MonComp}}, induces a corresponding tensor product on the category $\catEScen$, thereby endowing it with the structure of a symmetric monoidal category.

\begin{defn}\label{def:FotimesG}
Given functors $F_i\colon \catC_{\Sigma_i} \to \catSet$ for $i = 1, 2$, we define their tensor product
\[
F_1 \otimes F_2 \colon \catC_{\Sigma_1 \otimes \Sigma_2} \to \catSet
\]
by setting:
\begin{itemize}
    \item $(F_1 \otimes F_2)(\sigma) = F_1({\pr_1}(\sigma)) \times F_2({\pr_2}(\sigma))$,
    \item for an inclusion $\sigma \hookrightarrow \sigma'$, 
    \[
    (F_1 \otimes F_2)(\sigma \hookrightarrow \sigma') = 
    F_1(\pr_1(\sigma) \hookrightarrow \pr_1(\sigma')) \times 
    F_2(\pr_2(\sigma) \hookrightarrow \pr_2(\sigma')).
    \]
\end{itemize}
\end{defn}

\begin{pro}
For event scenarios $F\colon \catC^{\op}_{{\Sigma_1}} \to \catSet$ and $G\colon \catC^{\op}_{{\Sigma_2}} \to \catSet$, 
{their tensor product} $F\otimes G$ is also 
an event scenario.
\end{pro}
\begin{proof}
For a simplex ${\sigma} \in \Sigma_1 \otimes \Sigma_2$, the sets $F\left(\pr_{1}({\sigma})\right)$ and $G\left(\pr_{2}({\sigma})\right)$ non-empty, since $F$ and $G$ are non-trivial. Therefore, their product $F\left(\pr_{1}(\sigma)\right) \times G\left(\pr_{2}(\sigma)\right)$ is also non-empty.  Given 
$\sigma \hookrightarrow \sigma'$ in $\Sigma_1 \otimes \Sigma_2$, the map $
F\otimes G(\sigma \hookrightarrow \sigma') 
$ is surjective since $F\left(\pr_{1}(\sigma)\hookrightarrow \pr_{2}(\sigma')\right)$ and $G\left(\pr_{2}(\sigma)\hookrightarrow \pr_{2}(\sigma')\right)$ both are surjective. To prove locality for $F\otimes G$, consider a cover $\sigma_1,\cdots,\sigma_n$ for a simplex $\sigma$ in $\Sigma_1 \otimes \Sigma_2$, we aim to show that the map 
\begin{equation}\label{eq:Fotimescanonicalmap}
(F\otimes G)(\sigma) \to \lim \left((F \otimes G) \circ \chi_{
\sigma_1,\cdots,\sigma_n}\right)
\end{equation}
is injective. Note that $\pr_{1}$ can be seen as a functor from $\catC^{\op}_{\Sigma_1 \otimes \Sigma_2}$ to $\catC^{\op}_{\Sigma_1}$, so as proved in 
Proposition \ref{pro:fpiisoutcomes}, the functor $F \circ \pr_{1} \colon \catC^{\op}_{\Sigma_1\otimes \Sigma_2}\to \catSet$ is local. Similarity, the functor 
$G \circ \pi_{\Sigma_2} \colon \catC^{\op}_{\Sigma_1\otimes \Sigma_2}\to \catSet$ is also local. 
The map in (\ref{eq:Fotimescanonicalmap}) is the product of the canonical maps:
$$
(F\circ \pr_{1})(\sigma) \to \lim \left((F \circ \pr_{1}) \circ \chi_{\sigma_1,\cdots,\sigma_n}\right) \; \; \text{and} \;\; (G\circ \pr_{2})(\sigma) \to \lim \left((G \circ \pr_{2}) \circ \chi_{\sigma_1,\cdots,\sigma_n}\right) 
$$
Since both of these component maps are injective, their product is also injective, completing the proof.
\end{proof}

\begin{defn} \label{def:tensor event scenarios}
We define a tensor product on $\catEScen$ as follows:
\begin{itemize}
    \item For event functors $F_1$ and $F_2$, we define $F_1\otimes F_2$ as in Definition \ref{def:FotimesG}.
    \item Given event scenarios $F_i\colon \catC_{\Sigma_i} \to \catSet$, $F'_i\colon \catC_{\Sigma_i'} \to \catSet$ 
    for $i\in \set{0,1}$, 
    and morphisms $(\pi_1,\alpha)\colon F_1 \to F'_1$, $(\pi_2,\beta)\colon F_2 \to F'_2$, we define 
    $$
    (\pi_1,\alpha) \otimes (\pi_2,\beta)=(\pi_1\boxtimes \pi_2,\alpha \otimes \beta)
    $$ 
where $\pi_1\boxtimes \pi_2$ is defined as in {Corollary \ref{cor:KlisMonoi},}
and for a simplex $\sigma \in \Sigma'_1 \otimes \Sigma'_2$, we set
\begin{equation}\label{eq:alphaotimesbeta}
(\alpha \otimes \beta)_\sigma=\alpha_{\pr_{1}(\sigma)}\times \beta_{\pr_{2}(\sigma)}.
\end{equation}
\end{itemize}
\end{defn}

To illustrate {how the} map in Equation (\ref{eq:alphaotimesbeta}) {works}, suppose that $\sigma = \{(x_1, y_1), \cdots, (x_n, y_n)\}$. Then:
$$
\begin{aligned}
\pr_{1}\left(\overline{\phi_{\Sigma_1,\Sigma_2}\circ (\pi_1 \otimes \pi_2)}(\sigma)\right) 
=\pr_{1}\left(\cup_{i=1}^n (\pi_1(x_i)\times \pi_2(y_i))\right)=\cup_{i=1}^n\pi_1(x_i)=\overline{\pi_1}(\pr_{1}(A)).
\end{aligned}
$$
Similarly, $\pr_{2}\left(\overline{\phi_{\Sigma_1,\Sigma_2}\circ (\pi_1 \otimes \pi_2)}(\sigma)\right) 
=\overline{\pi_2}(\pr_{2}(\sigma))$. Therefore 
{$$
\begin{aligned}
&(F_1 \otimes F_2) (\overline{\pi_1 \boxtimes \pi_2}(\sigma)) \\
&=F_1(\pr_{1}\left(\overline{\phi_{\Sigma_1,\Sigma_2}\circ (\pi_1 \otimes \pi_2)}(\sigma)\right)) 
\times F_2(\pr_{2}\left(\overline{\phi_{\Sigma_1,\Sigma_2}\circ (\pi_1 \otimes \pi_2)}(\sigma)\right)) \\
&=F_1\left(\overline{\pi_1}(\pr_{1}(\sigma))\right) \times F_2\left( \overline{\pi_2}(\pr_{2}(\sigma))\right).
\end{aligned}
$$}

{We define the \emph{one-point event scenario} $\ast\colon \catC_{\Delta^0} \to \catSet$ by sending the unique object to the singleton set.}

\begin{pro}
$(\catEScen,\otimes,\ast)$ is a symmetric monoidal category.
\end{pro}
\begin{proof}
We will show that $\otimes\colon \catEScen \times \catEScen \to \catEScen$ is a functor, as the remaining properties follow straightforwardly. For any event scenario $F\colon \catC_{\Sigma} \to \catSet$, 
by the left-hand diagram in (\ref{eq:deltamon}), we have $\delta_{\Sigma}\boxtimes \delta_{\Sigma}=\delta_{\Sigma \otimes \Sigma}$, and it is clear that $\Id_F \otimes \Id_F$. So we have 
$$
(\delta_{\Sigma},\Id_{F}) \otimes (\delta_{\Sigma},\Id_{F})=(\delta_{\Sigma\otimes \Sigma},\Id_{F \otimes F})
$$
Next, {consider} simplicial complex maps 
$$
\pi_i\colon \Sigma'_i \to \hat N \Sigma_i\;\;\text{ and }\; \;\pi'_i\colon \Sigma''_i \to \hat N \Sigma'_i\;\; \text{for} \; i \in \set{0,1},
$$
{event scenarios}
$$
F_i\colon \catC_{\Sigma_i} \to \catSet \;,\, G_i\colon \catC_{\Sigma'_i} \to \catSet \;,\,H_i\colon \catC_{\Sigma''_i} \to \catSet \;\; \text{for} \; i \in \set{0,1},
$$
and natural tranformations
$$
\alpha\colon F_1\circ \overline{\pi_1} \to G_1\;,\, \beta \colon G_1\circ \overline{\pi'_1} \to H_1 \;,\; \gamma\colon F_2\circ \overline{\pi_2} \to G_2\;,\, \epsilon \colon G_2\circ \overline{\pi'_2} \to H_2 .
$$
Then, we have
$$
\begin{aligned}
\left((\pi'_1,\beta)\circ (\pi_1,\alpha)\right) \otimes \left((\pi'_2,\epsilon)\circ (\pi_2,\gamma)\right)
&=(\pi_1 \diamond \pi'_1,\beta\circ (\Id_{\overline{\pi'_1}}\ast \alpha)) \otimes 
(\pi_2 \diamond \pi'_2,\epsilon \circ (\Id_{\overline{\pi'_2}}\ast \gamma)) \\
&=\left((\pi_1 \diamond \pi'_1) \boxtimes (\pi_2 \diamond \pi'_2), \beta\circ (\Id_{\overline{\pi'_1}}\ast \alpha) \otimes 
\epsilon \circ (\Id_{\overline{\pi'_2}}\ast \gamma)\right)
\end{aligned}
$$
On the other hand, we have
$$
\begin{aligned}
\left((\pi'_1,\beta)\otimes(\pi'_2,\epsilon)\right) \circ \left((\pi_1,\alpha)\otimes(\pi_2,\gamma)\right)
&=(\pi'_1\boxtimes \pi'_2,\beta \otimes \epsilon)\circ(\pi_1\boxtimes \pi_2,\alpha \otimes \gamma)\\
&=\left((\pi_1\boxtimes \pi_2) \diamond (\pi'_1\boxtimes \pi'_2), (\beta\otimes \epsilon) \circ ((\Id_{\overline{\pi'_1\boxtimes \pi'_2}} \ast \alpha) \otimes \gamma)\right)
\end{aligned}
$$
By Corollary \ref{cor:KlisMonoi} $(\pi_1 \diamond \pi'_1) \boxtimes (\pi_2 \diamond \pi'_2)=(\pi_1\boxtimes \pi_2) \diamond (\pi'_1\boxtimes \pi'_2)$, so it remains to prove that
$$
\beta\circ (\Id_{\overline{\pi'_1}}\ast \alpha) \otimes 
\epsilon \circ (\Id_{\overline{\pi'_2}}\ast \gamma)=(\beta\otimes \epsilon) \circ (\Id_{\overline{\pi'_1\boxtimes \pi'_2}} \ast (\alpha \otimes \gamma)).
$$ 
For this, given ${\sigma} \in \Sigma''_1\otimes \Sigma''_2$, we have 
$$
\begin{aligned}
\left( \beta\circ (\Id_{\overline{\pi'_1}}\ast \alpha) \otimes 
\epsilon \circ (\Id_{\overline{\pi'_2}}\ast \gamma)\right)_{\sigma}&=\left(\beta\circ (\Id_{\overline{\pi'_1}}\ast \alpha)\right)_{{\pr_{1}}({\sigma})} \times 
\left(\epsilon \circ (\Id_{\overline{\pi'_2}}\ast \gamma)\right)_{{\pr_{2}}({\sigma})}\\
&=\beta_{{\pr_{1}}({\sigma})} \circ \alpha_{\overline{\pi'_1}(\pr_{1}({\sigma}))} \times \epsilon_{{\pr_{2}}({\sigma})} \circ \gamma_{\overline{\pi'_2}(\pr_{2}({\sigma}))} \\
&=\left(\beta_{\pr_{1}({\sigma})} \times \epsilon_{\pr_{2}({\sigma})} \right)\circ \left(\alpha_{\overline{\pi'_1}(\pr_{1}({\sigma}))} \times \gamma_{\overline{\pi'_2}(\pr_{2}({\sigma}))}\right) \\
&=(\beta \otimes \epsilon)_{\sigma} \circ (\alpha \otimes \gamma)_{\overline{\pi'_1\otimes \pi'_2}({\sigma})} \\
&=\left( (\beta\otimes \epsilon) \circ (\Id_{\overline{\pi'_1\boxtimes \pi'_2}} \ast (\alpha \otimes \gamma)) \right)_{\sigma}
\end{aligned}
$$
\end{proof}

Since the two categories $\catEScen$ and {$\catbScen$} are equivalent, the category {$\catbScen$} also admits a symmetric monoidal structure such that the equivalence established in Theorem~\ref{thm:EquivCat} becomes a monoidal equivalence. {We leave the explicit formulation of tensor products in $\catbScen$ to the reader.}

{Next, we show that the tensor product can be used to construct contextual empirical models from event scenarios that do not individually admit any contextual empirical models.}

 \begin{ex}
{Let $S = (\Sigma, O)$ denote the following scenario:}
\[
\Sigma = \set{\set{a}, \set{b}, \set{c}, \set{a, b}, \set{b, c}},
\]
where $O_x = \ZZ_2$ for every $x \in {V(\Sigma)}$. The associated event presheaf
\[
F = \eE_S \colon \catC_{\Sigma}^{\op} \to \catSet
\]
defines an event scenario with
\[
F(\set{a}) = F(\set{b}) = F(\set{c}) = \ZZ_2, \quad \text{and} \quad F(\set{a, b}) = F(\set{b, c}) = \ZZ_2 \times \ZZ_2.
\]
{We begin by observing that $F$ does not admit a contextual empirical model.}
Using the realization construction from \cite[Section~2.4]{kharoof2022simplicial}, the scenario $(\Sigma, O)$ corresponds to the simplicial scenario 
$(\Delta^1 \vee_{\Delta^0} \Delta^1, \Delta_{\ZZ_m})$. By \cite[Theorem~2.18]{kharoof2022simplicial}, it suffices to show that every simplicial distribution on 
$(\Delta^1 \vee_{\Delta^0} \Delta^1, \Delta_{\ZZ_m})$ is noncontextual. This follows from \cite[Example~3.11]{okay2022simplicial} and \cite[Corollary~4.6]{okay2022simplicial}.
{Now take two copies of this event scenario, i.e., let $F_i = F$ for $i=1,2$.}
We will show that although each of $F_1$ and $F_2$ admits no contextual empirical model, the event scenario $F_1 \otimes F_2$, which is not induced by any standard scenario, does admit a contextual empirical model.
The simplicial complex $\Sigma_1 \otimes \Sigma_2$ is generated by the following four pyramids:
\[
\set{a_1, b_1} \times \set{a_2, b_2}, \quad 
\set{a_1, b_1} \times \set{b_2, c_2}, \quad 
\set{b_1, c_1} \times \set{a_2, b_2}, \quad 
\set{b_1, c_1} \times \set{b_2, c_2}.
\]
An empirical model on the event scenario 
\[
F_1 \otimes F_2 \colon \catC_{\Sigma_1 \otimes \Sigma_2} \to \catSet
\]
is defined by compatible probability distributions on the following sets:
\[
\begin{aligned}
&F_1(\set{a_1, b_1}) \times F_2(\set{a_2, b_2}), \quad 
F_1(\set{a_1, b_1}) \times F_2(\set{b_2, c_2}), \\
&F_1(\set{b_1, c_1}) \times F_2(\set{a_2, b_2}), \quad 
F_1(\set{b_1, c_1}) \times F_2(\set{b_2, c_2}).
\end{aligned}
\]
We now define the empirical model $p \in \Emp(F_1 \otimes F_2)$ as follows:
\begin{equation}\label{eq:pDef1}
p_{\set{a_1,b_1} \times \set{a_2,b_2}}(i_1,j_1,i_2,j_2) = \begin{cases}
\frac{1}{2} & \text{if } (i_1,j_1,i_2,j_2) \in \set{(0,0,1,0), (1,0,0,0)}, \\
0 & \text{otherwise}
\end{cases}
\end{equation}
\begin{equation}\label{eq:pDef2}
p_{\set{a_1,b_1} \times \set{b_2,c_2}}(i_1,j_1,i_2,j_2) = \begin{cases}
\frac{1}{2} & \text{if } (i_1,j_1,i_2,j_2) \in \set{(0,0,0,0), (1,0,0,1)}, \\
0 & \text{otherwise}
\end{cases}
\end{equation}
\begin{equation}\label{eq:pDef3}
p_{\set{b_1,c_1} \times \set{b_2,c_2}}(i_1,j_1,i_2,j_2) = \begin{cases}
\frac{1}{2} & \text{if } (i_1,j_1,i_2,j_2) \in \set{(0,0,0,0), (0,1,0,1)}, \\
0 & \text{otherwise}
\end{cases}
\end{equation}
\begin{equation}\label{eq:pDef4}
p_{\set{b_1,c_1} \times \set{a_2,b_2}}(i_1,j_1,i_2,j_2) = \begin{cases}
\frac{1}{2} & \text{if } (i_1,j_1,i_2,j_2) \in \set{(0,0,0,0), (0,1,1,0)}, \\
0 & \text{otherwise}
\end{cases}
\end{equation}
We identify $F(\set{a_1, b_1})$ with $\set{0,1}_{a_1} \times \set{0,1}_{b_1}$, and similarly for the other sets. Then, the limit of the diagram $\catC_{\Sigma_1 \otimes \Sigma_2}$ is:
\begin{equation}\label{eq:limitofF}
\set{0,1}_{a_1} \times \set{0,1}_{b_1} \times \set{0,1}_{c_1} \times \set{0,1}_{a_2} \times \set{0,1}_{b_2} \times \set{0,1}_{c_2}.
\end{equation}
Suppose there exists a distribution $Q$ on the set in \eqref{eq:limitofF} such that $\Theta_F(Q)$ coincides with the empirical model $p$ defined above. Then from \eqref{eq:pDef1}, we obtain:
\begin{equation}\label{eq:con1}
Q(i_1,j_1,k_1,i_2,j_2,k_2) = 0 \quad \text{if } (i_1,j_1,i_2,j_2) \notin \set{(0,0,1,0), (1,0,0,0)}.
\end{equation}
From \eqref{eq:pDef2}, we also obtain:
\begin{equation}\label{eq:con2}
Q(i_1,j_1,k_1,i_2,j_2,k_2) = 0 \quad \text{if } (i_1,j_1,j_2,k_2) \notin \set{(0,0,0,0), (1,0,0,1)}.
\end{equation}
And from \eqref{eq:pDef3}, we get:
\begin{equation}\label{eq:con3}
Q(i_1,j_1,k_1,i_2,j_2,k_2) = 0 \quad \text{if } (j_1,k_1,j_2,k_2) \notin \set{(0,0,0,0), (0,1,0,1)}.
\end{equation}
Combining \eqref{eq:con1}, \eqref{eq:con2}, and \eqref{eq:con3}, we find:
\begin{equation}\label{eq:con123}
Q(i_1,j_1,k_1,i_2,j_2,k_2) = 0 \quad \text{if } (i_1,j_1,k_1,i_2,j_2,k_2) \notin \set{(0,0,0,1,0,0), (1,0,1,0,0,1)}.
\end{equation}
However, this contradicts \eqref{eq:pDef4}, completing the proof that $p$ is contextual.
\end{ex}

\section{Stochastic simplicial scenarios}
\label{sec:stochastic simplicial scenarios}

{In this section, we introduce a stochastic extension of the category of simplicial scenarios, originally defined in \cite{barbosa2023bundle}. Our construction is based on the notion of a category of scenarios over a monad and its simplicial refinement. This approach not only streamlines the presentation but also reveals two important special cases, corresponding to different choices of the underlying monad.}

\subsection{Monads with a gluing operation}

{For monads that weakly preserve pullbacks---such as the distribution monad---every pullback diagram admits a section. In this section, we abstract the key properties of such monads that enable the construction of stochastic versions of scenarios.  
}

Let $\catC$ be a finitely complete concrete category, and $T$ be a monad  $T\colon\catC\to \catC$.

\begin{defn}\label{def:Glungm}
A monad $T$ {is said to have} a \emph{gluing {operation}} if for every pair of morphisms $f\colon X \to Z$ and $g\colon Y \to Z$ in $\catC$, there exists a morphism
\[
m=m_{f,g}\colon T(X)\times_{T(Z)} T(Y) \to T(X \times_Z Y)\]
 of $\catC$ satisfying the following properties: 
\begin{enumerate}
\item {{\it Section property:}} The map $m_{f,g}$ is a section for the canonical map  $T(X \times_Z Y)\to T(X)\times_{T(Z)} T(Y)$.
 
\item  {\it Naturality:} 
If there are morphisms $h\colon W \to Z$ and $\beta\colon Y \to T(W)$ making the following diagram commute
\begin{equation}\label{eq:naturallll}
\begin{tikzcd}[column sep=huge,row sep=large] 
  & Y \arrow[d,"g"]
\arrow[rd,"\beta"]  & \\
X \arrow[r,"f"] \arrow[rd,hook,"\delta_X"'] & Z  \arrow[rd,hook,"\delta_Z"]  
&  T(W) \arrow[d,"T(h)"] \\
&  T(X)  \arrow[r,"T(f)"']  & T(Z) 
\end{tikzcd}
\end{equation}
then the following equation holds
$$
T(\delta_Y\times \beta)\circ m_{f,g}=m_{T(f),T(h)} \circ (T(\delta_Y)\times T(\beta)).
$$
\item  
{\it Back-and-forth property:}
For every diagram of the form
\begin{equation}\label{Eq:propofGlu}
\begin{tikzcd}[column sep=huge,row sep=large]
E \arrow[d,""]& & 
 \\
X  & Y \arrow[l,""'] & Z\arrow[l,"\pi"']
\end{tikzcd}
\end{equation}
and $(p,q)\in T(E)\times_{T(X)}T(Z)$, we have 
$$
m\left(m(p,T(\pi)(q)),q\right)=m(p,q).
$$

\item {\it Unit preservation:} 
The following diagram commutes:
$$
\begin{tikzcd}
& X \times_Z Y \arrow[dl,"\delta_X\times \delta_Y"'] \arrow[dr,"\delta_{X\times_Z Y}"] & 
 \\
 T(X)\times_{T(Z)} T(Y) \arrow[rr,"m_{f,g}"] && T(X\times_Z Y)
\end{tikzcd}
$$

\item {{\it Weak multiplicativity:}}
For every $s \in T(T(X))$ and $y \in Y$ such that $T(T(f))(s)=T(T(g))(\delta^{\delta^y})$, we have
$$
m\circ(\mu_X \times \mu_Y)(s,\delta^{\delta^y}) =
\mu_{X\times_Z Y}\circ 
T(m) \circ m (s,\delta^{\delta^y})
$$
\end{enumerate}
\end{defn}

{Our canonical example is the distribution monad.}

\begin{pro}\label{pro:distribution monad gluing}
The distribution monad $D\colon \catSet \to \catSet$ has a gluing {operation} $m$ defined by
$$
m(p,q)(x,y)=\frac{p(x)q(y)}{D(f)(p)(f(x))}.
$$     
\end{pro}
\begin{proof}
{See} Section \ref{sec:properties of distribution monad}.
\end{proof}

\subsection{Scenarios over a monad} 

We begin by developing a theory of scenarios over a monad, which we extend to the simplicial setting in the following section. This framework provides a unified treatment of constructions that generalize both the slice category and the Kleisli category.

\begin{defn}\label{def:catBunddd}
For a monad $T\colon\catC\to \catC$ and an object $X$ of $\catC$, we define the category $\Bund_{T}(X)$ consisting of:
\begin{itemize}  
    \item  objects  
    given by morphisms {$f:E\to X$} of $\catC$,
    \item a morphism from $f\colon E \to X$ to $g\colon F \to X$ is a morphism $\alpha\colon E \to T(F)$ of $\catC$ that makes the following diagram commute:
\begin{equation}\label{eq:MorinBun}
      \begin{tikzcd}[column sep=huge,row sep=large]
E \arrow[d,"f"]\arrow[r,"\alpha"] & T(F) 
\arrow[d,"T(g)"]
 \\
X  \arrow[r,hook,"\delta_X"] & T(X) 
\end{tikzcd}
\end{equation}

      \item Given $f\colon E \to X$, $g\colon F \to X$, and $h\colon G \to X$,  the composition of $\alpha\colon f \to g$ with 
      $\beta\colon g \to h$ is defined using the Kleisli composition: 
      $$
      \beta \diamond \alpha= \mu_G\circ T(\beta) \circ \alpha.
      $$ 
\end{itemize}
\end{defn}
 
{The} identity morphism {on an object $f$} 
{of}
$\Bund_{T}(X)$ is given by {the unit} $\delta_E\colon E \to T(E)$
of {the monad} $T$. 
{The composite $\mu_G\circ T(\beta) \circ \alpha$ is guaranteed to be a morphism in $\Bund_{T}(X)$ as a consequence of the following commutative diagram:}
$$
  \begin{tikzcd}[column sep=huge,row sep=large]
E \arrow[d,"f"]\arrow[r,"\alpha"] & T(F) 
\arrow[d,"T(g)"] \arrow[r,"T(\beta)"] & T(T(G)) \arrow[r,"\mu_G"]  \arrow[d,"T(T(h))"] & T(G)\arrow[d,"T(h)"]
 \\
X  \arrow[r,hook,"\delta_X"] & T(X) 
\arrow[r,hook,"T(\delta_X)"] 
\arrow[rr, bend right,equal] &T(T(X))
\arrow[r,"\mu_X"] & T(X) 
\end{tikzcd}
$$

Our construction admits the following two important special cases:
\begin{itemize}
    \item If $X$ is a terminal object in $\catC$, denoted by $\ast$, then $\Bund_{T}(\ast)$ coincides with the Kleisli category $\catC_{T}$.
    \item If $T$ is the identity monad on $\catC$, i.e., $T = \Id_\catC$, then $\Bund_{\Id_\catC}(X)$ is the slice category $\catC/X$.
\end{itemize}

\begin{defn}\label{def:piast}
Given a morphism $\pi\colon Y \to X$ of $\catC$, we define: 
\begin{itemize}
    \item For an object $f\colon E \to X$ in $\Bund_{T}(X)$, the object $\pi^\ast(f)$  is defined by the pullback square:
$$
      \begin{tikzcd}[column sep=huge,row sep=large]
E \arrow[d,"f"] &\arrow[l,""]  E \times_{X} Y 
\arrow[d,"\pi^{\ast}(f)"]
 \\
X  \arrow[ru, phantom, "\llcorner", very near end]  &\arrow[l,"\pi"]  Y 
\end{tikzcd}
$$ 
Here, $\pi^{\ast}(f)$ denotes the canonical projection from the pullback, {and we note that $E \times_X Y$ coincides with $\pi^{\ast}(E)$.}

%
\item For a morphism $\alpha$ as in Diagram (\ref{eq:MorinBun}), we define $\pi^{\ast}(\alpha)=m_{g,\pi} \circ (\alpha \times \delta_Y)$ using the following Diagram \ref{dia:pi star f}. 
\end{itemize}
\end{defn}

{\small
\begin{equation}\label{dia:pi star f}
\begin{tikzcd}[column sep=huge,row sep=large]
E 
\arrow[d,"f"]\arrow[r,"\alpha"] & T(F) 
\arrow[d,"T(g)"]
 \\
X  \arrow[r,hook,"\delta_X"] & T(X) & E \times_X Y \arrow[llu,""] \arrow[r,"\alpha \times \delta_Y"] \arrow[d,"\pi^\ast(f)"] 
& T(F) \times_{T(X)} T(Y)  \arrow[llu,""] \arrow[d,"T(\pi)^{\ast}(T(g))"] \arrow[r,"m_{g,\pi}"] & T(F\times_{X} Y) \arrow[d,"T(\pi^{\ast}(g))"]
\\
&& Y \arrow[llu,"\pi"] \arrow[r,"\delta_Y"] & 
T(Y) \arrow[llu,"T(\pi)"] \arrow[r,equal]& T(Y)
\end{tikzcd}
\end{equation}
}

{In Diagram {(\ref{dia:pi star f})},} the commutativity of the right square follows from Axiom~(1) in Definition~\ref{def:Glungm}. By combining the right two squares, we conclude that $\pi^{\ast}(\alpha) \colon \pi^{\ast}(f) \to \pi^{\ast}(g)$ is indeed a morphism in $\Bund_T(Y)$.

 \begin{ex}\label{ex:piastD}
Consider the distribution monad $D\colon \catSet \to \catSet$. A morphism as in Diagram~\eqref{eq:MorinBun} corresponds to a collection of distributions $\set{\alpha_e}_{e \in E}$ on $F$ such that each distribution $\alpha_e$ is supported on the fiber $g^{-1}(f(e))$, meaning that for every $e \in E$,
\[
\supp(\alpha_e) \subset g^{-1}(f(e)).
\]
The morphism 
$
\pi^{\ast}(\alpha) \colon \pi^{\ast}(f) \to \pi^{\ast}(g)
$
is defined componentwise by
\begin{equation}\label{eq:piastalpha}
\pi^{\ast}(\alpha)_{(e,y)}(e',y') =
\begin{cases}
\alpha_{e}(e') & \text{if } y' = y, \\
0 & \text{otherwise},
\end{cases}
\end{equation}
for elements $(e,y) \in E \times_X Y$ and $(e',y') \in F \times_X Y$.
\end{ex}

\begin{pro}
Definition \ref{def:piast} specifies a functor
$\pi^\ast \colon \Bund_{T}(X) \to \Bund_{T}(Y)$.
\end{pro}
\begin{proof}
By {property} $(4)$ of Definition \ref{def:Glungm}, we have 
$
\pi^{\ast}(\delta_E)=
m_{f,\pi} \circ (\delta_E \times \delta_Y)=\delta_{E\times_X Y}
$. This shows that $\pi^\ast$ maps identity morphisms to identity morphisms. 
To show that $\pi^\ast$ preserves compositions, given objects $f\colon E \to X$, $g\colon F \to X$, $h\colon G \to X$, and morphism $\alpha \colon E \to T(F)$, $\beta \colon F \to T(G)$. We have
$$
\begin{aligned}
\pi^{\ast}(\beta \diamond \alpha)&=m \circ \left((\mu_G \circ T(\beta) \circ \alpha)\times 
 (\mu_Y\circ T(\delta_Y)\circ\delta_Y)\right) \\
&=m \circ (\mu_G \times 
 \mu_Y) \circ (T(\beta) \times 
 T(\delta_Y) )\circ (\alpha \times 
\delta_Y)\\
&=\mu_{G \times_X Y} \circ T(m) \circ m \circ (T(\beta) \times 
 T(\delta_Y) )\circ (\alpha \times 
\delta_Y) \\
&=\mu_{G \times_X Y} \circ T(m) \circ T(\beta \times \delta_Y) \circ m\circ (\alpha \times 
\delta_Y) \\
&=\mu_{G \times_X Y} \circ T\left(m \circ (\beta \times \delta_Y)\right) \circ m\circ (\alpha \times 
\delta_Y) \\
&=\mu_{G \times_X Y} \circ T\left(\pi^{\ast}(\beta)\right) \circ \pi^{\ast}(\alpha)\\
&=\pi^{\ast}(\beta) \diamond \pi^{\ast}(\alpha).
\end{aligned}
$$
In this line of equations we used the following observations: Using the naturality of $\delta$, we obtain $T(\delta_Y)\circ \delta_Y=\delta_{T(Y)}\circ \delta_Y$. So, in line three we applied
{property} $(5)$
of Definition \ref{def:Glungm}.
We also used {property} $(2)$ of Definition \ref{def:Glungm} to replace $m \circ (T(\beta) \times 
 T(\delta_Y) )$ in line four by $T(\beta \times \delta_Y) \circ m$.
\end{proof}
\begin{lem}\label{lem:pi1piast}
For an object $X$ and 
 morphisms $\pi_1\colon Y \to X$, $\pi_2\colon Z \to Y$ of $\catC$, we have the following:
\begin{enumerate}
    \item $\Id^{\ast} \colon  \Bund_{T}(X) \to \Bund_{T}(X)$ is the identity functor.
    \item $(\pi_1 \circ \pi_2)^{\ast}=(\pi_2)^{\ast} \circ (\pi_1)^{\ast}$.
\end{enumerate}
\end{lem}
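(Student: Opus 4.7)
The plan is to unpack the definition of $\pi^{*}$ and invoke the axioms of the gluing operation $m$ from Definition~\ref{def:Glungm}.

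For item (1), I choose the canonical pullback so that $E \times_X X = E$ with projection $f$, giving $\Id^{*}(f) = f$ on objects. For a morphism $\alpha \colon f \to g$, I compute $\Id_{X}^{*}(\alpha) = m_{g, \Id_X} \circ (\alpha \times \delta_X)$. The section axiom (1) implies that $m_{g, \Id_X} \colon T(F) \times_{T(X)} T(X) \to T(F)$ is inverse to the canonical isomorphism $s \mapsto (s, T(g)(s))$. Combining this with the morphism condition \eqref{eq:MorinBun} for $\alpha$ (which says $T(g) \circ \alpha = \delta_X \circ f$), one obtains $\Id_X^{*}(\alpha)(e) = m_{g, \Id_X}(\alpha(e), \delta_X(f(e))) = m_{g, \Id_X}(\alpha(e), T(g)(\alpha(e))) = \alpha(e)$, so $\Id_X^{*}(\alpha) = \alpha$.

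For item (2), on objects the pullback pasting lemma in $\catC$ identifies $(\pi_1 \circ \pi_2)^{*}(f) = E \times_X Z$ with $(\pi_2)^{*}(\pi_1)^{*}(f) = (E \times_X Y) \times_Y Z$ via $(e, z) \mapsto ((e, \pi_2(z)), z)$, and I fix this identification on both the source and target objects. The interesting step is agreement on morphisms. Unravelling definitions, for $(e, z) \in E \times_X Z$ one has
\[
(\pi_1 \circ \pi_2)^{*}(\alpha)(e, z) = m_{g, \pi_1 \circ \pi_2}(\alpha(e), \delta_Z(z)),
\]
while
\[
(\pi_2)^{*}(\pi_1)^{*}(\alpha)\bigl((e, \pi_2(z)), z\bigr) = m_{\pi_1^{*}(g), \pi_2}\Bigl(m_{g, \pi_1}(\alpha(e), \delta_Y(\pi_2(z))), \delta_Z(z)\Bigr).
\]
By naturality of $\delta$, $\delta_Y(\pi_2(z)) = T(\pi_2)(\delta_Z(z))$. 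Setting $p = \alpha(e)$ and $q = \delta_Z(z)$, the morphism condition on $\alpha$ places $(p, q)$ in $T(F) \times_{T(X)} T(Z)$, so the back-and-forth axiom (3), applied to the diagram $F \to X \gets Y \gets Z$, yields exactly the required equality.

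The main obstacle is bookkeeping: one must match the indices of the various $m$'s to the correct codomains and make sure the strict identification of iterated pullbacks with the composite pullback is used consistently on source, target, and inside $T$. Once this is arranged, items (1) and (2) reduce directly to axioms (1) and (3) of Definition~\ref{def:Glungm}, with axiom (4) implicitly validating the base case of unit preservation.
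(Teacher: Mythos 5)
Your proof is correct and follows essentially the same route as the paper: part (2) comes down to the naturality of $\delta$ (i.e.\ $\delta_Y\circ\pi_2 = T(\pi_2)\circ\delta_Z$) together with the back-and-forth property (3) of Definition~\ref{def:Glungm}, which is exactly the paper's argument, written pointwise rather than as composites of maps. Your section-axiom argument for part (1) fills in what the paper dismisses as straightforward, but it is the expected verification and not a different method.
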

\begin{proof}
Part $(1)$ is straightforward. For part $(2)$, the case for objects is clear due to the basic properties of pullbacks.
For morphisms, given $\alpha\colon E \to T(F)$ in $\Bund_{T}(X)$, we have
$$
(\pi_2)^\ast\left((\pi_1)^\ast(\alpha)\right)=(\pi_2)^\ast\left(m_{g,\pi_1}\circ(\alpha\times \delta_Y)\right)=m_{(\pi_1)^\ast(g),\pi_2}
\circ \left(\left(m_{g,\pi_1}\circ(\alpha\times \delta_Y)\right) \times \delta_Z \right).
$$
On the other hand, we have
$$
(\pi_1 \circ \pi_2)^\ast (\alpha)
=m_{g,\pi_1 \circ \pi_2} \circ (\alpha \times \delta_Z).
$$
Since $\delta_Y\circ \pi_2=T(\pi_2) \circ \delta_Z$, the equality follows from {property} $(3)$ of Definition \ref{def:Glungm}.
\end{proof}

As a consequence of Lemma~\ref{lem:pi1piast}, we obtain a functor
\[
\Bund_T \colon \catC^{\op} \to \Cat
\]
which assigns to each object $X$ the category $\Bund_T(X)$, and to each morphism $\pi\colon Y \to X$ the pullback functor $\pi^* \colon \Bund_T(X) \to \Bund_T(Y)$.

\begin{defn}\label{def:category of scenarios over T}
We define $\catScen_T$, the \emph{category of scenarios over $T$}, to be $\int_{\catC^{\mathrm{op}}} \Bund_T$, the Grothendieck construction for 
the functor $\Bund_{T}\colon \catC^{\op} \to \catCat$.    
\end{defn}

The composition of $(\pi,\alpha)\colon f \to g$ with $(\pi',\beta)\colon g \to h$, where $h:G\to Z$ is another object, is given by
$$
(\pi \circ \pi', \beta \diamond\left(m_{g,\pi'} \circ (\alpha\times \delta_Z) \right))
$$
where $Z$ is the codomain of $h$.  
The identity on $f\colon E \to X$ is {given by} $(\Id_X,\delta_{E})$.

\subsection{Simplicial scenarios over a monad}\label{sec:Simversion}

{
Next, we introduce a simplicial version of our theory. A monad $T \colon \catC \to \catC$ extends to a monad on the category $s\catC$ of simplicial objects in $\catC$ by applying $T$ degree-wise; see \cite[Proposition~2.4]{kharoof2022simplicial}. We denote the resulting functor again by $T \colon s\catC \to s\catC$.
However, the gluing operation $m$ does not extend directly, due to the failure of naturality across simplicial degrees. Nevertheless, for every pair of morphisms $f\colon X \to Z$ and $g\colon Y \to Z$ in $s\catC$, there exists a map
\[
m_{f,g} \colon T(X) \times_{T(Z)} T(Y) \to T(X \times_Z Y)
\]
which may not be a morphism in $s\catC$; in particular, it may fail to preserve the simplicial structure. Despite this, the axioms of Definition~\ref{def:Glungm} hold in each degree separately.
}

\begin{defn}\label{def:scatBunddd}
For a monad $T\colon \catC\to \catC$ and a simplicial object $X$ of $s\catC$, we define the category $\sBund_{T}(X)$ to be the category $\Bund_{T}(X)$, 
where $T$ is now considered as the monad on $s\catC$.
\end{defn}

{
\begin{rem}
It is important to note that $\sBund_T(X)$ is \emph{not} the category of simplicial objects in $\Bund_T$. Rather, it is defined as $\Bund_T(X)$ where $T$ is regarded as a monad on $s\catC$, acting degree-wise. This distinction is essential, as $\sBund_T$ reflects the structure induced by $T$ on simplicial objects in $\catC$, rather than imposing a simplicial structure on the category $\Bund_T$ itself.
\end{rem}
}

To define a simplicial version of the functor $\pi^\ast$ from Definition \ref{def:piast}, we need to add a new axiom:
\begin{enumerate}
\item[6.] {\it Gluing with deterministics:} For $f\colon X \to Z$ and $g\colon Y \to Z$ in $\catC$, the following composition:
$$
  \begin{tikzcd}[column sep=huge,row sep=large]
T(X)\times_{T(Z)}Y 
\arrow[r,hook,"\Id_{{T(X)}} \times \delta_Y"] 
\arrow[rr, bend right,"\eta_{f,g}"'] &T(X)\times_{T(Z)}T(Y)
\arrow[r,"m_{f,g}"] & T(X\times_{Z}Y) 
\end{tikzcd}
$$  
is natural {in the sense that commutative squares:}
$$
  \begin{tikzcd}[column sep=huge,row sep=large]
X\arrow[r,"f"] \arrow[d] &Z \arrow[d] & Y \arrow[r,"g"] \arrow[d] & Z \arrow[d] \\
{X'} \arrow[r,"{f'}"]  & {Z'}  & {Y'} \arrow[r,"{g'}"] & {Z'}
\end{tikzcd}
$$  
{induce the following square:}
$$
\begin{tikzcd}[column sep=huge,row sep=large]
T(X)\times_{T(Z)}Y \arrow[r,"\eta_{f,g}"] \arrow[d] &T(X\times_{Z}Y) \arrow[d] &  \\
{T(X')\times_{T(Z')}Y' } \arrow[r,"\eta_{f',g'}"]  & {T(X'\times_{Z'}Y')} 
\end{tikzcd}
$$
\end{enumerate}

 \begin{pro}\label{pro:distribution monad deterministic}
The distribution monad satisfies the gluing with deterministic axiom.      
\end{pro}
\begin{proof}
{See} Section \ref{sec:properties of distribution monad}.
\end{proof} 

\begin{lem}\label{lem:mdeltaNatural}
Suppose we have Diagram (\ref{eq:MorinBun}) in $s\catC$, and let $\pi\colon Y \to X$ be a morphism of $s\catC$.  {If $T$ satisfies the gluing with deterministics axiom, then} for $n\geq 0$, the maps 
$$
m_{g_n,\pi_n}\circ (\alpha_n \times \delta_{Y_n})\colon E_n\times_{X_n}Y_n \to T(F_n\times_{X_n}Y_n)
$$
form a simplicial {set} map.
\end{lem}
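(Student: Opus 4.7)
The plan is to factor the map $m_{g_n,\pi_n}\circ (\alpha_n \times \delta_{Y_n})$ through the map $\eta_{g_n,\pi_n}$ introduced in the gluing with deterministics axiom. Since $\alpha_n \times \delta_{Y_n}$ factors as $(\Id_{T(F_n)} \times \delta_{Y_n}) \circ (\alpha_n \times \Id_{Y_n})$, we obtain
$$
m_{g_n,\pi_n}\circ (\alpha_n \times \delta_{Y_n}) = \eta_{g_n,\pi_n} \circ (\alpha_n \times \Id_{Y_n}).
$$
Thus it suffices to verify that each of these two families of maps assembles into a morphism of simplicial objects, since the component-wise composite of simplicial maps is simplicial.

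For the first factor, I would use that $\alpha \colon E \to T(F)$ is a morphism in $s\catC$ by hypothesis, so $T(F)(\phi) \circ \alpha_n = \alpha_m \circ E(\phi)$ for every simplicial operator $\phi \colon [m] \to [n]$. Pairing this with the trivially simplicial identity on $Y$, and recalling that pullbacks of simplicial objects are computed level-wise, the collection $\{\alpha_n \times \Id_{Y_n}\}_n$ assembles into a simplicial map from $E \times_X Y$ to $T(F) \times_{T(X)} Y$.

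For the second factor, I would invoke axiom (6) of Definition \ref{def:Glungm}. For a simplicial operator $\phi \colon [m] \to [n]$, the simpliciality of $g \colon F \to X$ and $\pi \colon Y \to X$ provides exactly the two commutative squares required as hypotheses in the axiom, namely $g_m \circ F(\phi) = X(\phi) \circ g_n$ and $\pi_m \circ Y(\phi) = X(\phi) \circ \pi_n$. The conclusion of axiom (6) then yields the commutative square linking $\eta_{g_n,\pi_n}$ and $\eta_{g_m,\pi_m}$ across degrees, showing that $\{\eta_{g_n,\pi_n}\}_n$ defines a simplicial map from $T(F) \times_{T(X)} Y$ to $T(F \times_X Y)$.

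Composing the two simplicial maps established above yields the required simplicial structure. The main subtlety is the initial factorization through $\eta$; it is precisely this rewriting that allows the argument to go through, because the raw gluing $m_{f,g}$ is known to fail naturality in $s\catC$, whereas $\eta_{f,g}$ is natural in exactly the degree-wise sense encoded by axiom (6). No further property of $T$ is needed beyond this single axiom.
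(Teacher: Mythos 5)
Your proof is correct and follows essentially the same route as the paper: both factor $m_{g_n,\pi_n}\circ(\alpha_n\times\delta_{Y_n})$ as $\eta_{g_n,\pi_n}\circ(\alpha_n\times\Id_{Y_n})$, use that $\alpha\times\Id_Y$ is a level-wise simplicial map into $T(F)\times_{T(X)}Y$, and invoke the gluing-with-deterministics axiom (applied to the squares coming from the simpliciality of $g$ and $\pi$) to make $\{\eta_{g_n,\pi_n}\}_n$ simplicial. No gaps.
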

\begin{proof}
{The following commutative diagram}
$$
{\small
  \begin{tikzcd}[column sep=huge,row sep=large]
E \arrow[rrd,"\alpha"] \arrow[d,"f"'] &&& &&\\
X \arrow[rrd,hook,"\delta_X"'] &  Y \arrow[rrd,equal] \arrow[l,"\pi"] & T(F) 
\arrow[rrd,equal] \arrow[d,"T(g)"']&&& \\
&& T(X) \arrow[rrd,equal] & Y \arrow[l,"\delta_X \circ \pi"] \arrow[rrd,hook,"\delta_Y"] & T(F) \arrow[d,"T(g)"'] &\\
&&&& T(X) & T(Y) \arrow[l,"T(\pi)"]
\end{tikzcd}
}
$$  
{implies that the simplicial map $\alpha \times \delta_{Y}$ is equal to the composition $(\Id_{T(F)} \times \delta_{Y})\circ (\alpha \times \Id_{Y})$. Therefore, for every $n \geq 0$, we have 
$$
m_{g_n,\pi_n}\circ (\alpha_n \times \delta_{Y_n})=(m_{g_n,\pi_n}\circ (\Id_{T(F_n)} \times \delta_{Y_n}))\circ (\alpha_n \times \Id_{Y_n}).
$$
We get the desired result by the gluing with deterministics axiom and since $\alpha \times \delta_Y$ is a simplicial map.}
\end{proof}
\begin{defn}\label{def:spiast}
Given a morphism $\pi\colon Y \to X$ of $s\catC$, we define the functor 
\[\pi^\ast \colon \sBund_{T}(X) \to \sBund_{T}(Y)\] 
as follows:
\begin{itemize}
    \item For an object $f\colon E \to X$ of $\sBund_{T}(X)$, the map $\pi^\ast(f)_n$ is defined to be     $(\pi_n)^\ast(f_n)$.

\item For a morphism $\alpha$ of $\sBund_{T}(X)$, the map $\pi^\ast(\alpha)_n$ is defined to be $(\pi_n)^\ast(\alpha_n)$.
\end{itemize}
{See Definition \ref{def:piast}.}
\end{defn}

By Lemma \ref{lem:mdeltaNatural}, $\pi^{\ast}(\alpha)$ is well defined simplicial map.
Definition \ref{def:spiast}  induces the functor
\[
\sBund_T\colon s\catC^{\op} \to \catCat.
\]

\begin{defn}\label{def:catgory of simplicial scenarios over T}
We define the \emph{category of simplicial scenarios over $T$} to be the Grothendieck construction:
\[
\catsScen_T = \int_{s\catC^{\mathrm{op}}} \sBund_T.
\]   
\end{defn}

The fully faithful inclusion $\catC \hookrightarrow s\catC$ induces a fully faithful inclusion 
\[\catScen_T \hookrightarrow {\catsScen}_T.\] 
On the other hand, the unit $\delta\colon \Id_{\catC}\to T$ of the monad induces a fully faithful embedding
\begin{equation}\label{dia:id to T delta}
\delta\colon {\catsScen}_{{\Id}}  \hookrightarrow {\catsScen_T}.
\end{equation}

{Next, we consider an important representable functor} 
\begin{equation}\label{eq:FT} 
F_{T,-}\colon s\catC^\op \to \catCat{\slice}\catSet
\end{equation}
defined by 
sending $X$ to the functor 
\[\sBund_T(X)(\Id_X,\cdot) {\colon \sBund_T(X) \to \catSet}.\]
Under this functor, a 
{morphism}
$\pi\colon Y\to X$ {of} $s\catC$, is sent to a morphism of $\catCat{\slice}\catSet$, i.e., {the functor $\pi^{\ast}\colon \sBund_T(X) \to  \sBund_T(Y)$ and the} natural transformation determined by the map
\[
\pi^*_{\Id_X,f}\colon  \Bund_T(X)(\Id_X,f) \to \Bund_T({Y})(\Id_Y , \pi^*(f)).
\] 
{as described in the following diagram:}
$$
\begin{tikzcd}[column sep=huge,row sep=large]
\sBund_T(X) 
\arrow[rr,"\pi^{\ast}"]
\arrow[ddr,"{\sBund_T(X)(\Id_X,\cdot)}"',""{name=A,right}] && \sBund_T(Y)
\arrow[ddl,"{\sBund_T(Y)(\Id_Y,\cdot)}",""{name=B,left}] \\
 &\arrow[Rightarrow, from=A, to=B, "{\pi^*_{\Id_X,\cdot}}"]& \\
&  \catSet&  
\end{tikzcd}
$$ 
By applying the relative Grothendieck construction described in Definition \ref{def:GenGroth}, we obtain the following functor:
\begin{equation}\label{eq:intcatC}    
F_T= \int_{s\catC^{\mathrm{op}}} F_{T,-} \colon s\catScen_T \to \catSet.
\end{equation} 
Unraveling the construction, it turns out that the functor given in (\ref{eq:intcatC}) is equal to the representable functor 
$$
\catsScen_T(\Id_{\Delta^0} ,\cdot)\colon \catsScen_T \to \catSet
$$ 
where $\Delta^0$ is the simplicial object of $\catC$ with the terminal object in every degree.

\subsection{{Stochastic simplicial scenarios and distributions}}\label{sec:SimpDistFunc}

{We now specialize Definition \ref{def:catgory of simplicial scenarios over T} to the identity monad and the distribution monad to capture our main categories of interest. When $T=\Id_{\catsSet}$, the identity monad on the category of simplicial sets, we obtain the category of simplicial scenarios} 
\[
\catsScen = \catsScen_{\Id_{\catsSet}}
\]
first introduced in \cite{barbosa2023bundle}.
An object $f \colon E \to X$ in this category consists of a simplicial set $X$, representing the space of \emph{measurements}, and a simplicial set $E$, representing the space of \emph{events}. In \cite{barbosa2023bundle}, it is shown that the category of bundle scenarios embeds fully faithfully into the category of simplicial scenarios via the nerve functor
\begin{equation}\label{dia:Nerve bundle simplicial scenarios}
N \colon \catbScen \to \catsScen.
\end{equation}

{Next, we extend the category of simplicial scenarios to a stochastic version. As we will see this extension allows a simpler distribution for the functor $\sDist$ of simplicial distributions. This new category is obtained by taking $T$ in Definition \ref{def:catgory of simplicial scenarios over T} the distribution monad $D\colon\catsSet\to \catsSet$. An explicit description is given as follows:}

\begin{defn}\label{def:stochastic simplicial scenarios}
The \emph{stochastic category of simplicial scenarios}, denoted by $\catsScen_D$, is defined as follows:
\begin{itemize}
\item An object in this category is a simplicial set map $f\colon E\to X$.
\item A morphism is a pair $(\pi,\alpha):f\to g$ is given by a diagram of simplicial sets of the form
\begin{equation}\label{dia:morphisms}
\begin{tikzcd}[column sep=huge,row sep=large]
E \arrow[d,"f"] &  \pi^*(E) \arrow[l] \arrow[d] \arrow[r,"\alpha"] & D(F) \arrow[d,"D(g)"] \\
X& \arrow[l,"\pi"] \arrow[r,"\delta_Y"] Y & D(Y) 
\end{tikzcd}
\end{equation}
where the left square is a pullback and the right square commutes.
\end{itemize}
\end{defn}

%

The category of simplicial scenarios embeds into the category of stochastic simplicial scenarios via the fully faithful functor
\begin{equation}\label{eq:deltasScensScenD}
\delta \colon \catsScen \hookrightarrow \catsScen_D,
\end{equation}
{obtained by specializing Diagram (\ref{dia:id to T delta}).}

\begin{pro}\label{pro:BunConv}
Let $\pi \colon Y \to X$ be a simplicial {set} map, and let $f, g$ be objects of $\sBund_D(X)$. Then the induced map
\[
\pi^\ast = \pi^\ast_{f,g} \colon \sBund_D(X)(f, g) \to \sBund_D(Y)(\pi^\ast(f), \pi^\ast(g))
\]
is a morphism in the category $\catConv$, i.e., it preserves convex combinations.
\end{pro}

%
\begin{proof} 
The convex structure on $s\catBund_{D}(X)(f,g)$ defined as follows:
$$
(t \alpha +(1-t)\beta)_{e}(e')=t \alpha_{e}(e')+(1-t)\beta_{e}(e')
$$
where $e \in E_n$, $e' \in F_n$, and $t\in[0,1]$. Now, consider $(e,y)\in E\times_X Y$ and $(e',y')\in F\times_X Y$. By 
Equation (\ref{eq:piastalpha}), if $y'=y$, we have:
$$
\begin{aligned}
\pi^\ast(t\alpha+(1-t)\beta)_{(e,y)}(e',y')&=(t \alpha+(1-t)\beta)_e(e')
\\ 
&=t\alpha_e(e')+(1-t)\beta_e(e') \\
&=t\pi^{\ast}(\alpha)_{(e,y)}(e',y')+(1-t)\pi^{\ast}(\beta)_{(e,y)}(e',y') \\
&=\left(t\pi^{\ast}(\alpha)+(1-t)\pi^{\ast}(\beta)\right)_{(e,y)}(e',y') .
\end{aligned}
$$
If $y'\neq y$, then both sides are zero, which completes the proof.
\end{proof}

{As an immediate consequence, the functor in (\ref{eq:FT}) has its target in the category $\catCat{\slice}\catConv$, and thus we obtain the functor
\[
F_{D,-} \colon \catsSet^{\op} \to \catCat{\slice}\catConv.
\]
}

\begin{defn}\label{def:simplicial distributions functor}
The 
\emph{simplicial distribution functor} is defined to be the relative Grothendieck construction of $\sDist_-$: 
\[
\sDist = \int_{\catsSet^\op} \sDist_-\colon \catsScen_D \to \catConv .
\] 
\end{defn}

By composing this functor with the inclusion in~\eqref{eq:deltasScensScenD}, we recover the simplicial distribution functor introduced in~\cite{barbosa2023bundle}. This relationship is expressed by the following commutative diagram:
\[
\begin{tikzcd}[column sep=huge, row sep=large]
\catsScen \arrow[r,"\sDist"] \arrow[d,"\delta"',hookrightarrow] & \catConv \\
\catsScen_D \arrow[ru,"\sDist"'] 
\end{tikzcd}
\]

{Moreover, the simplicial distributions functor is related to the empirical models functor defined on bundle scenarios via the nerve functor in (\ref{dia:Nerve bundle simplicial scenarios}). As shown in \cite{barbosa2023bundle} there is a natural isomorphism}
\[
\zeta \colon {\bEmp} \to \sDist \circ N.
\]

{The notion of contextuality also extends to the simplicial set setting.} 
Given a simplicial bundle scenario $f \colon E \to X$, let $\sSect(f)$ denote the set of sections of $f$. We define the map
\begin{equation}\label{dia:Theta_f}
\Theta_f \colon D(\sSect(f)) \to \sDist(f)
\end{equation}
to be the transpose of the inclusion 
$(\delta_E)_\ast \colon \sSect(f) \hookrightarrow \sDist(f)$
under the adjunction (\ref{eq:Set adjunction Conv}).

\begin{defn}\label{def:contextuality}
A simplicial distribution $p$ on $f$ is called \emph{contextual} if it does not lie in the image of the map $\Theta_f$. 
Otherwise, $p$ is called \emph{noncontextual}.
\end{defn}

For further details, see \cite[Section 5.2]{barbosa2023bundle}.

Now, we define a tensor product on $\catsScen_D$, the stochastic category of simplicial scenarios: The monoidal product of the scenarios 
\[
f \colon E \to X \quad \text{and} \quad g \colon F \to Y
\]
is given by
\[
f \otimes g \colon E \times F \xrightarrow{f \times g} X \times Y.
\]
The monoidal unit is the identity map $\Id_{\Delta^0}$ on the $0$-simplex.

{To define the monoidal product for morphisms, we use the gluing operation for distributions on simplicial sets. That is, given simplicial sets $X$ and $Y$, we have a simplicial set map
\[
m\colon D(X)\times D(Y) \to D(X\times Y)
\]
obtained by applying the gluing operation level-wise: For distributions $p\in D(X_n)$ and $q\in D(Y_n)$, $m(p,q)(x,y)=p(x)\cdot q(y)$.}
Then, given morphisms $(\pi_1,\alpha_1)\colon f_1 \to g_1$ and $(\pi_2,\alpha_2)\colon f_2 \to g_2$ in $\catsScen_D$, we obtain the following diagram:
$$ 
\begin{tikzcd}[column sep=large ,row sep=large]
E_1 \times E_2 \arrow[d,"{f_1 \otimes f_2}"] &  (E_1\times_{X_1} Y_1) \times (E_2\times_{X_2} Y_2) \arrow[l]
\arrow[d] \arrow[r,"{\alpha_1 \times \alpha_2}"] & D(F_1)\times D(F_2) \arrow[d,"{D(g_1) \times D(g_2)}"] \arrow[r,"m"] & D(F_1\times F_2) \arrow[d,"{D(g_1\times g_2)}"]  \\
X_1 \times X_2 & \arrow[l,"{\pi_1\times \pi_2}"] \arrow[r,"{\delta_{Y_1}\times \delta_{Y_2}}"] Y_1\times Y_2 & D(Y_1)\times D(Y_2) 
\arrow[r,"m"] & D(Y_1\times Y_2) 
\end{tikzcd} 
$$
{where the first square is a pullback and the other two squares commute.}
By property (4) in Definition \ref{def:Glungm}, we conclude that 
$
\left(\pi \times \pi',\, m \circ (\alpha \times \alpha')\right)
$
is a morphism in $\catsScen_D$, which we take as the definition of the tensor product 
\begin{equation}\label{eq:tensor stochastic}
(\alpha_1,\pi_1)\otimes (\alpha_2,\pi_2):=\left(\pi \times \pi',\, m \circ (\alpha \times \alpha')\right).
\end{equation}
{Then $(\catScen_D, \otimes, \Id_{\Delta^0})$ forms a symmetric monoidal category. We leave the verification of the monoidal axioms to the reader.}

\section{Mapping scenarios}
\label{sec:mapping scenarios} 
 
%

{In this section, we introduce the mapping scenario construction for the categories $\catEScen$, $\catbScen$, and $\catsScen$. Compared to the original formulation of mapping scenarios in the sheaf-theoretic framework---which tends to be more intricate---our approach yields a cleaner and more uniform treatment across these three categories. We then establish the connections between these constructions via the functors introduced in \cite{barbosa2023bundle}, which relate the categories.}

\subsection{Mapping scenarios {of event and bundle scenarios}}\label{subsec:MapBund}
 
In this section, we define the notion of a mapping scenario within the broader framework of scenarios introduced in Section \ref{sec:catofscen}. We demonstrate that the convex set of empirical models on a mapping scenario is naturally isomorphic to the set of empirical models defined on the analogous construction given in \cite[Definition 4.1]{barbosa2023closing} for standard scenarios.

For a simplex $\sigma$ in $\Sigma$, let $\Delta_\sigma$ denote the subsimplicial complex of $\Sigma$ generated by $\sigma$, i.e., $\Delta_\sigma$ consists of non-empty subsets of $\sigma$. Given a functor $F \colon \catC_\Sigma^\op \to \catSet$, we write $F|_{\sigma}$ to denote its restriction to the subcategory $\catC_{\Delta_\sigma}^\op$.

\begin{defn}
Given functors $F\colon \catC_{\Sigma}^{\mathrm{op}}\to \catSet$ and $G\colon \catC_{\Sigma'}^{\mathrm{op}}\to \catSet$, we define 
\[[F,G] \colon \catC_{\Sigma'}^{\mathrm{op}}\to \catSet\] 
as follows:
\begin{itemize}
    \item {For $\sigma\in \Sigma'$,} we define $[F,G](\sigma)=\EFunc(F,G|_{\sigma})$.
    \item For 
{an inclusion of simplices}    
    \( j\colon \sigma \hookrightarrow \tau \), we define 
    \[[F,G](j)(\pi,\alpha)=(\pi|_{\Delta_\sigma}, \Id_{\overline{j}} \ast \alpha),\] 
    where {\( \overline{j} \colon \catC_{\Delta_\sigma}^{\mathrm{op}} \hookrightarrow \catC_{\Delta_\tau}^{\mathrm{op}} \) is the functor induced by $j$.}
\end{itemize}
\end{defn}

\begin{rem}\label{rem:NatDeter}
If $F$ is {locally} surjective, then the natural transformation $\alpha\colon F\circ \overline{\pi} \to G|_{\sigma}$ is fully determined by $\alpha_{\sigma}$. Thus, for 
$\sigma=\set{x_1,\cdots,x_n}$, an element $(\pi,\alpha) \in [F,G](\sigma)$ is determined by the simplex $\pi(\sigma)=\set{\tau_1,\cdots,\tau_n} \in 
\hat N \Sigma$ and the set map $\alpha_{\sigma}\colon F(\cup_{i=1}^n \tau_i) \to G(\sigma)$.
\end{rem}
\begin{pro}\label{pro:FGEvent}
If $G$ is   
{an event scenario,}
then $[F,G]$ is    
{also an event scenario.}
\end{pro}
\begin{proof}
First, it is clear that for every $\sigma \in \Sigma'$ we have $[F,G](\sigma) \neq \emptyset$, since $G(\sigma) \neq \emptyset$. To prove local surjectivity, consider
$j\colon \sigma \hookrightarrow \tau$ in $\catC_{\Sigma'}$ and $(\pi,\beta) \in \EFunc(F,G|_{\sigma})$. Extend  
$\pi\colon \Delta_\sigma \to \hat N{\Sigma}$ to a map $\pi' \colon \Delta_\tau \to \hat N\Sigma$. Since $G$ is locally surjective, we can (using the axiom of choice), pick for each 
$a\in G(\sigma)$ a preimage $a'\in G(\tau)$. So we also get a choice of one preimage for $a\in G(\sigma)$ in $G(\tau')$ for 
every simplex $\sigma \subset \tau' \subset \tau$. Using these preimages, we define maps $\alpha_{\tau'}$ that make the following 
diagram commute 
$$
\begin{tikzcd}[column sep=huge,row sep=large]
F(\overline{\pi'}(\tau')) \arrow[d,""] \arrow[r,"\alpha_{\tau'}"]&   G(\tau')  
\arrow[d,""]
 \\
F(\overline{\pi'}(\sigma))=F(\overline{\pi}(\sigma))  \arrow[r,"\beta_{\sigma}"]&  G(\sigma)   
\end{tikzcd}
$$
Because of the compatible choices of the preimages, this defines a natural map 
$\alpha\colon F \circ \overline{\pi'} \to G|_{\Delta_\tau}$ that satisfies $\Id_{\overline{j}}\ast \alpha=\beta$.
Finally, to prove locality, given a simplex $\sigma \in \Sigma'$ and a cover $\set{\sigma_1,\cdots,\sigma_n}$ of $\sigma$, we prove that the canonical map 
$$
L\colon [F,G](\sigma) \to \lim [F,G]\circ \chi_{\sigma_1,\cdots,\sigma_n}
$$ 
is injective. Given $(\pi,\alpha), (\pi',\beta) \in [F,G](\sigma)$, and $\left((\pi_1,\alpha_1),\cdots,(\pi_n,\alpha_n)\right) \in \lim [F,G]\circ \chi_{\sigma_1,\cdots,\sigma_n}$ such that 
$$
L(\pi,\alpha)=\left((\pi_1,\alpha_1),\cdots,(\pi_n,\alpha_n)\right)=L(\pi',\beta).
$$
First, we have $\pi|_{\sigma_i}=\pi_i=\pi'|_{\sigma_i}$ for every $1\leq i \leq n$. Since $\set{\sigma_1,\cdots,\sigma_n}$ is a cover of $\sigma$, we conclude that $\pi=\pi'$. Therefore, 
$\alpha_{\sigma}$ and $\beta_{\sigma}$ both making the following diagram commute:

$$
\begin{tikzcd}[column sep=huge,row sep=large]
F (\overline{\pi}(\sigma)) \arrow[d,""] \arrow[rr,""] & &  G(\sigma)  
\arrow[d,""]
 \\
\lim F\circ \overline{\pi}\circ \chi_{\sigma_1,\cdots,\sigma_n}  \arrow[rr,"\left((\alpha_1)_{\sigma_1}{,}\cdots {,}(\alpha_n)_{\sigma_n}\right)"]& &
\lim G\circ \chi_{\sigma_1,\cdots,\sigma_n}   
\end{tikzcd}
$$
Since $G$ is local, we {obtain} that $\alpha=\beta$.
\end{proof}

{

\begin{rem}
The mapping scenario construction for event scenarios generalizes the original sheaf-theoretic notion of mapping scenarios. A detailed comparison and proof of this generalization are provided in Section \ref{sec:comparison of mapping scenarios}.
\end{rem}

We now introduce the notion of mapping scenarios for bundle scenarios, providing an analogue of the construction defined for event scenarios.
}

\begin{defn}
Given bundle scenarios $f\colon \Gamma \to \Sigma$ and $g\colon \Gamma' \to \Sigma'$, we define the simplicial complex
\[
\Gamma(f, g) = \bigsqcup_{\sigma \in \Sigma'} \catbScen(f, g|_{g^{-1}(\Delta_{\sigma})}),
\]
where $\Delta_{\sigma}$, as before, denotes the subsimplicial complex generated by $\sigma$.
More explicitly, a simplex in $\Gamma(f, g)$ is a diagram of the form:
\begin{equation}\label{dia:morphisms}
\begin{tikzcd}[column sep=huge,row sep=large]
\hat N \Gamma \arrow[d,"\hat N f"] &  \hat N \Gamma\times_{\hat N \Sigma} \Delta_{\sigma}  \arrow[l] \arrow[d,""] \arrow[r,"\alpha"] & g^{-1}(\Delta_{\sigma}) \arrow[ld,"g|_{g^{-1}(\Delta_{\sigma})}"] \\
\hat N \Sigma& \arrow[l,"\pi=\set{\tau_1,\cdots,\tau_k}"] \Delta_{\sigma} 
\end{tikzcd}
\end{equation}
for some $\sigma \in \Sigma'$. We denote such a diagram by $(\sigma; \pi, \alpha)$.
The face relation in $\Gamma(f, g)$ is defined as follows:
\[
(\sigma; \pi, \alpha) \subset (\tilde{\sigma}; \tilde{\pi}, \beta) \quad \text{if and only if} \quad 
\sigma \subset \tilde{\sigma}, \quad \tilde{\pi}|_{\Delta_{\sigma}} = \pi, \quad \text{and} \quad 
\beta|_{\hat N \Gamma \times_{\hat N \Sigma} \Delta_{\sigma}} = \alpha.
\]
Finally, the \emph{mapping scenario} is defined as the projection
\[
[f, g] \colon \Gamma(f, g) \to \Sigma',
\]
sending $(\sigma; \pi, \alpha)$ to $\sigma$.
\end{defn}

\begin{lem}\label{lem:LhatNf}
If $f\colon \Gamma \to \Sigma$ is discrete over vertices, then for any $\set{\tau_1,\cdots,\tau_n} \in \hat N\Sigma$, there is a {bijection}
$$
L_{\set{\tau_1,\cdots,\tau_n}} \colon (\hat 
 N f)^{-1}\left(\set{\tau_1,\cdots,\tau_n}\right) \to f^{-1}(\cup_{i=1}^n \tau_i),
$$
given by mapping $\set{\gamma_1,\cdots,\gamma_k}$ to the union $\cup_{i=1}^k\gamma_i$.
\end{lem}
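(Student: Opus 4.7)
The plan is to establish the bijection in two stages: first verify well-definedness of $L$, then exhibit an explicit inverse. The crucial input is that discreteness over vertices forces $f$ to be injective on the vertex set of every simplex of $\Gamma$: indeed, if $x,y$ are two distinct vertices of a simplex $\gamma \in \Gamma$, then $\{x,y\} \in \Gamma$, so by hypothesis $f(x) \neq f(y)$. Consequently, for any simplex $\gamma \in \Gamma$, the restriction $f|_{V(\gamma)}$ is a bijection onto $V(f(\gamma))$.

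For well-definedness, if $\{\gamma_1,\ldots,\gamma_k\} \in (\hat N f)^{-1}(\{\tau_1,\ldots,\tau_n\})$, then by definition of $\hat N \Gamma$ the union $\cup_{i=1}^k \gamma_i$ is a simplex of $\Gamma$, and since $f$ commutes with unions of vertex sets we have $f(\cup_{i=1}^k \gamma_i) = \cup_{i=1}^k f(\gamma_i) = \cup_{i=1}^n \tau_i$, so $L(\{\gamma_1,\ldots,\gamma_k\})$ indeed lies in $f^{-1}(\cup_{i=1}^n \tau_i)$.

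To construct the inverse, given $\gamma \in f^{-1}(\cup_{i=1}^n \tau_i)$, note that $f|_{V(\gamma)}$ is a bijection onto $V(\cup_{i=1}^n \tau_i) = \cup_{i=1}^n V(\tau_i)$ by the observation above. Define $\gamma_i := (f|_{V(\gamma)})^{-1}(\tau_i) \subseteq V(\gamma)$. Each $\gamma_i$ is a simplex of $\Gamma$ (as a subset of the simplex $\gamma$), $f(\gamma_i) = \tau_i$, and $\cup_{i=1}^n \gamma_i = \gamma$; hence $\{\gamma_1,\ldots,\gamma_n\} \in \hat N \Gamma$ lies in $(\hat N f)^{-1}(\{\tau_1,\ldots,\tau_n\})$ and maps to $\gamma$ under $L$.

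Finally, I would verify that the two constructions are mutually inverse. One direction---starting from $\gamma$, building $\{\gamma_i\}$, and taking their union---returns $\gamma$ by construction. The other direction, starting from $\{\gamma_1,\ldots,\gamma_k\}$ and recovering it from $\gamma = \cup \gamma_i$, is the step where I expect the only subtlety: one must see that each $\gamma_j$ coincides with the preimage $(f|_{V(\gamma)})^{-1}(f(\gamma_j))$. This uses injectivity of $f|_{V(\gamma)}$ once more, since every $\gamma_j$ sits inside $V(\gamma)$ and is determined by its image there. This also forces $k = n$, as the $\tau_i$ are distinct and each admits a unique preimage $\gamma_i$. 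This mutual inversion is the main (though mild) obstacle, and all of it reduces to the vertex-injectivity consequence of the discreteness hypothesis.
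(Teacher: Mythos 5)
Your proof is correct and follows essentially the same route as the paper: the inverse you construct, $\gamma \mapsto \{(f|_{V(\gamma)})^{-1}(\tau_1),\ldots,(f|_{V(\gamma)})^{-1}(\tau_n)\}$, is exactly the paper's inverse $\gamma \mapsto \{r_{\cup_{i}\tau_i,\tau_1}(\gamma),\ldots,r_{\cup_{i}\tau_i,\tau_n}(\gamma)\}$, with the unique subsimplex over each $\tau_j$ obtained directly from vertex-injectivity (a consequence of discreteness over vertices) instead of citing the restriction maps of (\ref{eq:rsigmasigma}). The mutual-inversion checks, including the reduction to $k=n$, coincide with the paper's argument.
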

\begin{proof}
First, since $f$ is discrete over vertices and 
$\cup_{i=1}^k \gamma_i \in \Gamma$, we may assume that $k=n$ and 
$f(\gamma_j)=\sigma_j$ for every $1\leq j \leq n$. 
The inverse of $L_{\set{\tau_1,\cdots,\tau_n}}$ is given by sending $\gamma \in f^{-1}(\cup_{i=1}^n \tau_i)$ to 
 $$
 \set{r_{\cup_{i=1}^n\tau_i,\tau_1}(\gamma),\cdots,r_{\cup_{i=1}^n\tau_i,\tau_n}(\gamma)},
 $$
 as defined in the map (\ref{eq:rsigmasigma}). It is clear that
 $r_{\cup_{i=1}^n\tau_i,\tau_j}(\cup_{i=1}^n\gamma_i)=\gamma_j$ for every $1\leq j \leq n$. Conversely,
we obtain that $\cup_{j=1}^nr_{\cup_{i=1}^n\tau_i,\tau_j}(\gamma)=\gamma$ since $f$ is discrete over vertices.
\end{proof}

\begin{pro}\label{pro:MApEl=ElMap}
Let $F\colon \catC_{\Sigma} \to \catSet$ and $G\colon \catC_{\Sigma'} \to \catSet$ be {event scenarios}. Then there exists an isomorphism of simplicial complexes
\[
\Gamma(\El(F), \El(G)) \cong \catC_{[F, G]},
\]
which makes the following diagram commute:
\begin{equation}\label{dia:Moralphaa}
\begin{tikzcd}[column sep=huge,row sep=large]
\Gamma(\El(F),\El(G))
\arrow[rr,"\cong"]
\arrow[dr,"{[\El(F),\El(G)]}"'] && \catC_{[F,G]}  
\arrow[dl,"\El({[F,G]})"] \\
&  \Sigma' &  
\end{tikzcd}
\end{equation}
\end{pro}
\begin{proof}
Given $(\sigma;\pi,\alpha) \in \Gamma(\El(F),\El(G))$. We have a simplex 
$\sigma \in \Sigma'$, with $\pi(\sigma)=\set{\tau_1,\cdots,\tau_k} \in \hat N \Sigma$,
and a simplicial complex map $\alpha$ 
making the following diagram commute:
\begin{equation}\label{eq:NCFalphaElG}
\begin{tikzcd}[column sep=huge,row sep=large]
\pi^{\ast}(\hat N \catC_F)
\arrow[rr,"\alpha"]
\arrow[dr,"\pi^{\ast}(\El(F))"'] && \El(G)^{-1}(\Delta_{\sigma})  
\arrow[dl,""] \\
&  \Delta_{\sigma} &  
\end{tikzcd}
\end{equation}
See Diagram (\ref{dia:piastf}). By Proposition \ref{pro:ElEfunBun}, the map $\El(F)$ is locally surjective, so by \cite[Propositions A.6 and A.7]{barbosa2023bundle}, we have that the left-hand projection 
$\pi^{\ast}(\El(F))$ in (\ref{eq:NCFalphaElG}) is also locally surjective. Therefore, $\alpha$ is determined by its restriction
$\alpha|_{\pi^{\ast}(\El(F))^{-1}(\sigma)}$. Using Lemma \ref{lem:LhatNf} and the definition of $\El$, we obtain 
$$
\pi^{\ast}(\El(F))^{-1}(\sigma)\cong (\hat N  \El(F))^{-1}(\set{\tau_1,\cdots,\tau_k})\cong \El(F)^{-1}(\cup_{i=1}^k\tau_i) \cong F(\cup_{i=1}^k\tau_i).
$$
So $\alpha$ corresponds to a map $\alpha' \colon F(\cup_{i=1}^k\tau_i) \to G(\sigma)$. Then $(\pi,\alpha')$ form{s} an object 
in $[F,G](\sigma)=\EFunc(F,G|_{\sigma})$. The isomorphism in diagram (\ref{dia:Moralphaa}) is defined by mapping $(\sigma;\pi,\alpha)$ to $(\sigma,(\pi.\alpha'))$.
\end{proof}
\begin{cor}
The map $[f,g]\colon  \Gamma(f,g) \to \Sigma'$ is a bundle scenario.    
\end{cor}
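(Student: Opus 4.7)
The plan is to deduce this as a direct corollary of the preceding proposition combined with the equivalence of categories established in Theorem~\ref{thm:EquivCat}. Specifically, since being a bundle scenario is a property of a simplicial complex map, and Proposition~\ref{pro:MApEl=ElMap} exhibits $[f,g]$ as isomorphic (over $\Sigma'$) to $\El([F,G])$ for suitable event scenarios $F,G$, it suffices to verify that $\El([F,G])$ is a bundle scenario.

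First, by Theorem~\ref{thm:EquivCat}, the functor $\El$ gives an equivalence of categories $\EFunc(\Sigma) \simeq \Bund(\Sigma)$, so the given bundle scenarios $f\colon \Gamma\to\Sigma$ and $g\colon \Gamma'\to\Sigma'$ can be identified with $\El(F)$ and $\El(G)$ for some event scenarios $F$ and $G$ (up to isomorphism in the respective categories of bundle scenarios). Then I would invoke Proposition~\ref{pro:FGEvent}, which guarantees that $[F,G]\colon \catC_{\Sigma'}^{\op}\to \catSet$ is itself an event scenario, since $G$ is. Applying the functor $\El_{\Sigma'}$ of Proposition~\ref{pro:ElEfunBun} (whose well-definedness includes verification that the resulting simplicial complex map is surjective, locally surjective, and discrete over vertices), we conclude that $\El([F,G])\colon \Gamma_{[F,G]}\to \Sigma'$ is a bundle scenario.

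Finally, the commutative diagram (\ref{dia:Moralphaa}) in Proposition~\ref{pro:MApEl=ElMap} shows that the isomorphism of simplicial complexes $\Gamma(f,g)=\Gamma(\El(F),\El(G)) \cong \catC_{[F,G]}=\Gamma_{[F,G]}$ intertwines the projection $[f,g]$ with $\El([F,G])$. Since the property of being a bundle scenario (surjectivity, local surjectivity, and discreteness over vertices) is preserved under isomorphism of simplicial complex maps, we obtain that $[f,g]$ is a bundle scenario as well.

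The only delicate point is ensuring the identification $f=\El(F)$ is literal rather than only up to isomorphism; however, since the equivalence of Theorem~\ref{thm:EquivCat} preserves all relevant structural properties and the mapping scenario construction $\Gamma(-,-)$ is manifestly functorial in $f$ and $g$, this causes no obstruction. Hence no new verification of the three defining conditions of Definition~\ref{def:BS} is needed: they are inherited directly through the equivalence and the isomorphism of Proposition~\ref{pro:MApEl=ElMap}.
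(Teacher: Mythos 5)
Your argument is correct and follows essentially the same route as the paper's proof, which likewise writes $f\cong \El(S_{\Sigma}(f))$, $g\cong \El(S_{\Sigma'}(g))$ and combines Propositions~\ref{pro:ElEfunBun}, \ref{pro:FGEvent}, and \ref{pro:MApEl=ElMap}, using that the bundle-scenario conditions are invariant under isomorphism over the base. Your closing remark about the identification being only up to isomorphism matches the paper's (equally brief) treatment of that point, so there is nothing to add.
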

\begin{proof}
Since for every bundle scenario $f\colon \Gamma \to \Sigma$ is isomorphic to $\El(S_{\Sigma}(f))$ in $\Bund(\Sigma)$, the result follows from Propositions 
\ref{pro:ElEfunBun}, \ref{pro:FGEvent}, and \ref{pro:MApEl=ElMap}.    
\end{proof}

{The mapping scenarios introduced for event and bundle scenarios do not admit an adjunction with the tensor product defined in Section~\ref{subsec:MonEvent}. As a consequence, they do not yield a closed monoidal structure with respect to that tensor product. We choose to address this limitation in the simplicial set formulation, which will be developed in the next section.
}

\subsection{{Simplicial m}apping scenario{s}}

We now construct the mapping scenario in the category $\catsScen$.

\begin{defn}
Let $f\colon E\to X$ and $g\colon F\to Y$ be simplicial scenarios.
The simplicial mapping scenario is the simplicial set map
$$
\Map(f,g)\colon  E(f,g) \to Y
$$
where the space of events is defined by
$$
E(f,g)_n = \coprod_{y\in Y_n} \catsScen(f,g_y)
$$ 
Here, $g_y\colon y^*(F)\to \Delta^n$ is the pull-back along the $n$-simplex $y\colon \Delta^n\to Y$.  
\end{defn} 

An $n$-simplex in 
$E(f,g)$ consists of $y\in Y_n$, $x \in X_n$, and the following {commutative} diagram {of simplicial sets}
%
$$
\begin{tikzcd}[column sep=huge,row sep=large]
x^{\ast}(E)
\arrow[rr,"\alpha"]
\arrow[dr,""'] && y^{\ast}(F)  
\arrow[dl,""] \\
&  \Delta^n&  
\end{tikzcd}
$$
We denote such a simplex by $(y; x, \alpha)$. To describe the simplicial structure on $E(f, g)$, consider a simplex $y \in Y_n$ and an ordinal map $\theta \colon [m] \to [n]$. These induce the following pullback square:
\begin{equation}\label{dia:thetainEfg}
\begin{tikzcd}[column sep=huge, row sep=large]
y^{\ast}(F) \arrow[d,"g_y"] & (y \circ \theta)^{\ast}(F) \arrow[l] \arrow[d,"g_{y \circ \theta}"] \\
\Delta^n & \Delta^m \arrow[l,"\theta"]
\end{tikzcd}
\end{equation}
Diagram~\eqref{dia:thetainEfg} corresponds to a morphism $(\theta, \Id) \colon g_y \to g_{y \circ \theta}$ in the category $\catsScen$. Consequently, we obtain maps
\[
\catsScen\left(f, (\theta, \Id)\right) = (\theta, \Id)_\ast \colon \catsScen(f, g_y) \to \catsScen(f, g_{y \circ \theta})
\]
for each $y \in Y_n$ and $\theta \colon [m] \to [n]$. These maps collectively define the simplicial structure on the set $E(f, g)$.

{\begin{rem}
The category $\catsScen$ contains the category of simplicial sets as a full subcategory: every simplicial set $X$ can be viewed as the simplicial bundle scenario $X \to \ast$. Moreover, for such objects we have a natural isomorphism
\[
\catsScen(X \to \ast,\, Y \to \ast) \cong \catsSet(X, Y),
\]
with composition preserved accordingly. In particular, we have:
\begin{equation}\label{eq:MapandMap}
\Map(X \to \ast,\, Y \to \ast) \cong \Map(X, Y) \to \ast,
\end{equation}
where the left-hand side denotes the {mapping scenario} between bundle scenarios, and the right-hand side is the {standard} mapping space of simplicial sets, e.g., see \cite{goerss2009simplicial}, regarded as a scenario over $\ast$.
\end{rem}
}

We define a symmetric monoidal structure on the category of simplicial bundle scenarios. The monoidal product of two scenarios 
\[
f \colon E \to X \quad \text{and} \quad g \colon F \to Y
\]
is given by
\[
f \otimes g \colon E \times F \xrightarrow{f \times g} X \times Y.
\]
The monoidal unit is the identity map $\Id_{\Delta^0}$ on the $0$-simplex. {Note that this symmetric monoidal structure is inherited from the one defined in Equation (\ref{eq:tensor stochastic}) on the category of stochastic simplicial scenarios, via the embedding given in (\ref{eq:deltasScensScenD}).}

{As the next result shows, however, our simplicial mapping scenario construction fails to be closed monoidal with respect to this monoidal structure.}

\begin{pro}\label{pro:tensor vs mapping simplicial}
There is a natural inclusion
\begin{equation}\label{eq:AdjunMap}
\catsScen(f\otimes g,h) \hookrightarrow \catsScen(f,\Map(g,h))   .
\end{equation}
\end{pro}
\begin{proof}
To explain the inclusion in (\ref{eq:AdjunMap}), consider the maps $f\colon E\to X$, $g\colon F \to Y$, and $h\colon G \to Z$. For 
$
({(\pi',\pi'')},\alpha) \in \catsScen(f\otimes g,h)
$
 as illustrated in the following diagram:
\begin{equation}\label{dia:ftimesgtoh}
\begin{tikzcd}[column sep=huge,row sep=large]
E\times F \arrow[d,"f \times g"'] &  E\times_X Z \times_{Y} F  
\arrow[l] \arrow[d,""] \arrow[r,"\alpha"] &  \arrow[ld,"h"] G \\
X \times Y & \arrow[l,"\text{${(\pi',\pi'')}$}"] Z 
\end{tikzcd}
\end{equation}
We construct the morphism $(\pi',\beta)\in \catsScen(f,\Map(g,h))$ as follows: For each level $n$, define:
$$
\beta_n \colon E_n\times_{X_n} Z_n \to E(g,h)_n
$$ 
which maps $(e,z)$ to $(z;\pi''(z)\colon \Delta^n \to Y, \beta')$, where 
$\beta_n'(\tilde{e})=\alpha_n(e,z,\tilde{e})$ for every $\tilde{e}\in g_n^{-1}\left(\pi''(z)\right)$.
%
\end{proof}

{Here is an example illustrating why the map in~(\ref{eq:AdjunMap}) is not a bijection in general. The example is formulated at the level of sets, which can be viewed as a special case of simplicial sets---specifically, as discrete simplicial sets.}

\begin{ex}
We define the sets $X = \{x\}$, $Y = \{y_1, y_2\}$, $Z = \{z\}$, $E = \{e_1, e_2\}$, $F = \{s, t_1, t_2\}$, and $G = \{u_1, u_2\}$. We also consider the maps
\[
f\colon E \to X, \quad g\colon F \to Y, \quad h\colon G \to Z,
\]
where $h(s) = y_1$ and $h(t_1) = h(t_2) = y_2$. For $(\pi, \alpha) \in \catsScen(f \otimes g, h)$, we have the following two cases:
\begin{itemize}
    \item If $\pi(z) = (x, y_1)$, then $\alpha$ can be any set map from $\{(e_1, s), (e_2, s)\}$ to $\{u_1, u_2\}$.
    \item If $\pi(z) = (x, y_2)$, then $\alpha$ can be any set map from $\{(e_1, t_1), (e_1, t_2), (e_2, t_1), (e_2, t_2)\}$ to $\{u_1, u_2\}$.
\end{itemize}
{So the total number of morphisms is
\[
|\catsScen(f \otimes g, h)| = 2^2 + 2^4 = 20.
\]
Now, we compute $|\catsScen(f, \Map(g, h))|$. Since $|Z| = 1$, we have 
\[
E(g, h) = \catsScen(g, h),
\]
which contains $6$ morphisms. Therefore, there is a one-to-one correspondence between $\catsScen(f, \Map(g, h))$ and $\catSet(\{e_1, e_2\}, \catsScen(g, h))$. As a result,
\[
|\catsScen(f, \Map(g, h))| = 6^2 = 36.
\]
}
\end{ex}

Nevertheless, 
we have the following important bijection that will be used later:
\begin{equation}\label{eq:Idtimesfg}
\catsScen(\Id_{\Delta^0}\otimes f,g) \cong \catsScen(\Id_{\Delta^0},\Map(f,g)).
\end{equation}

We now compare the mapping framework for simplicial bundle scenarios with the one introduced for bundle scenarios in Section~\ref{subsec:MapBund}. {To this end,} we introduce a nerve functor that assigns to each simplicial complex a corresponding simplicial set.


 \begin{defn}\label{def:N}
Let $N\colon \catsComp \to \catsSet$ be the functor that sends a simplicial complex $\Sigma$ to a simplicial set $N\Sigma$, called the \emph{nerve space} of $\Sigma$. The set of $n$-simplices of $N\Sigma$ is defined by
\[
(N\Sigma)_n = \left\{ (\sigma_1, \sigma_2, \ldots, \sigma_n) \in \Sigma^n \;\middle|\; \bigcup_{i=1}^n \sigma_i \in \Sigma \right\}, \quad n \geq 1,
\]
and we define ${V(N\Sigma)} = \{()\}$.

The simplicial structure maps are defined as follows:
\begin{itemize}
    \item {Face maps:}
    \[
    d_i(\sigma_1, \sigma_2, \ldots, \sigma_n) = 
    \begin{cases}
        (\sigma_2, \ldots, \sigma_n) & i = 0, \\
        (\sigma_1, \ldots, \sigma_i \cup \sigma_{i+1}, \ldots, \sigma_n) & 0 < i < n, \\
        (\sigma_1, \ldots, \sigma_{n-1}) & i = n.
    \end{cases}
    \]
    
    \item {Degeneracy maps:}
    \[
    s_j(\sigma_1, \sigma_2, \ldots, \sigma_n) = (\sigma_1, \ldots, \sigma_j, \emptyset, \sigma_{j+1}, \ldots, \sigma_n), \quad 0 \leq j \leq n.
    \]
\end{itemize}

Furthermore, a simplicial complex map $f\colon \Gamma \to \Sigma$ induces a simplicial set map $Nf\colon N\Gamma \to N\Sigma$, defined in degree $n$ by
\[
(Nf)_n(\gamma_1, \gamma_2, \ldots, \gamma_n) = \big(f(\gamma_1), f(\gamma_2), \ldots, f(\gamma_n)\big).
\]
\end{defn}

{The simplicial sets $N[f, g]$ and $\Map(Nf, Ng)$ are closely related but not, in general, isomorphic. In the next result, we make this relationship precise by constructing explicit maps between them.
}

\begin{pro}\label{pro:NMap=MapN}
{
There exists simplicial set maps 
\[
l\colon N{[f, g]} \to \Map(Nf, Ng) \quad \text{and} \quad t\colon \Map(Nf, Ng) \to N{[f, g]},
\]
such that $l \circ t = \Id_{\Map(Nf, Ng)}$.
}
\end{pro}

\begin{proof}
{We begin by defining the maps $l$ and $t$.} Given bundle scenarios $f \colon \Gamma \to \Sigma$ and $g \colon \Gamma' \to \Sigma'$, we define the map:
$$
l_n\colon  \left(N\Gamma(f,g) \right)_n \to E(Nf,Ng)_n
$$ 
by
$$
l_n\left((\sigma_1;\pi_1,\alpha^{(1)}), \cdots, 
(\sigma_n;\pi_n,\alpha^{(n)})\right)=\left((\sigma_1, \cdots, \sigma_n);
(\overline{\pi}_1(\sigma_1),\cdots,\overline{\pi}_n(\sigma_n)),\beta\right)
$$
where $\overline{\pi}_i(\sigma_i)=\cup_{\tau \in  \pi_i(\sigma_i)} \tau$ and $\beta_n$ is described as {follows}:
$$
\beta_n(\gamma_1,\cdots,\gamma_n)=\left(\alpha^{(1)}\left(L^{-1}_{\pi_1(\sigma_1)}(\gamma_1)\right),\cdots, 
\alpha^{(n)}\left(L^{-1}_{\pi_n(\sigma_n)}(\gamma_n)\right)\right),
$$
see Lemma \ref{lem:LhatNf}. 
{We} define 
$$
t_n\colon E(Nf,Ng)_n \to  \left(N\Gamma(f,g) \right)_n
$$ 
by
$$
t_n\left((\sigma_1, \cdots, \sigma_n);
(\tau_1,\cdots,\tau_n),\alpha\right)
=\left((\sigma_1;\set{\tau_1},\alpha_1|_{f^{-1}(\tau_1)}),\cdots,(\sigma_n;\set{\tau_n},\alpha_1|_{f^{-1}(\tau_n)})\right).
$$
The verification that these maps are simplicial is straightforward. We show that $l_n \circ t_n =\Id_{E(Nf,Ng)_n}$ for every $n \geq 0$. Given 
$\left((\sigma_1, \cdots, \sigma_n);
(\tau_1,\cdots,\tau_n),\alpha\right) \in E(Nf,Ng)_n$, we have
$$
\begin{aligned}
l_n\circ t_n\left((\sigma_1, \cdots, \sigma_n);
(\tau_1,\cdots,\tau_n),\alpha\right)
&=l_n\left((\sigma_1;\set{\tau_1},\alpha_1|_{f^{-1}(\tau_1)}),\cdots,(\sigma_n;\set{\tau_n},\alpha_n|_{f^{-1}(\tau_n)})\right)  \\
&=\left((\sigma_1, \cdots, \sigma_n);
(\tau_1,\cdots,\tau_n),\beta\right)
\end{aligned}
$$
where 
$$
\begin{aligned}
\beta_n(\gamma_1,\cdots,\gamma_n)
&=\left(\alpha_1\left(L^{-1}_{\set{\tau_1}}(\gamma_1)\right),\cdots, 
\alpha_1\left(L^{-1}_{\set{\tau_n}}(\gamma_n)\right)\right) \\
&=\left(\alpha_1(\gamma_1),\cdots, 
\alpha_1(\gamma_n)\right) \\
&=\alpha_n(\gamma_1,\cdots,\gamma_n) .
\end{aligned}
$$
\end{proof}

\subsection{{Convex maps between simplicial distributions}}
 

{
In this section, we characterize convex maps between simplicial distributions as convex combinations of maps induced by morphisms in the category of simplicial scenarios. This extends the main result of \cite[Theorem~44]{barbosa2023closing}, which addresses the case of standard measurement scenarios.
Our perspective highlights the structural role of the mapping space $\Map(f,g)$ in capturing the convex morphisms between simplicial distributions.
}

{We begin by observing that morphisms between simplicial scenarios induce convex maps between the corresponding convex sets of simplicial distributions.}
The bijection in (\ref{eq:Idtimesfg}) gives
\begin{align*}
\catsScen(f,g) &= \catsScen(\Id_{\Delta^0} \otimes f, g) \\
& \cong \catsScen(\Id_{\Delta^0}, \Map(f,g))\\
& = \sSect(\Map(f,g)).
\end{align*}
We denote this natural isomorphism by
\begin{equation}\label{eq:etafg}
\zeta_{f,g} \colon \catsScen(f,g) \xrightarrow{\cong} \sSect(\Map(f,g)).
\end{equation}
{On the other hand, the functor $\sDist \colon \catsScen \to \catConv$ introduced in Section~\ref{sec:SimpDistFunc} induces} 
\[
\sDist_{f,g} \colon \catsScen(f,g) \to \catConv(\sDist(f), \sDist(g)).
\]
Since $\catConv(\sDist(f), \sDist(g))$ is a convex set, we obtain the convex map
\begin{equation}\label{eq:transsDisfg}
\sDist_{f,g}' \colon D\left(\catsScen(f,g)\right) \to \catConv\left(\sDist(f), \sDist(g)\right),
\end{equation}
which is the transpose of $\sDist_{f,g}$ under the adjunction in (\ref{eq:Set adjunction Conv}).
{Combining} 
the maps in (\ref{eq:etafg}) and (\ref{eq:transsDisfg}), we conclude that a convex combination of sections on {the mapping scenario} $\Map(f,g)$ 
{induces} 
a convex map  $\sDist(f)\to\sDist(g)$. 

{Next,} we extend this {observation} to simplicial distributions on $\Map(f,g)$.
Given a simplicial bundle scenario $f \colon E \to X$, we define a diagram $\chi_f$ in 
{in the category of sets}
consisting of the following maps:
\[
d_i^E \colon f_n^{-1}(x) \to f_{n-1}^{-1}(d_i(x)) \quad \text{and} \quad s_j^E \colon f_{n-1}^{-1}(x) \to f_n^{-1}(s_j(x)),
\]
where {$n \geq 0$}, $x \in X_n$, and $d_i$, $s_j$ denote the face and degeneracy maps, respectively.

\begin{pro}\label{pro:limitchilimitDchi}
The limit of the diagram $\chi_f$ 
in $\catSet$ is isomorphic to $\sSect(f)$, and the limit of $D(\chi_{f})$ in $\catConv$ is isomorphic to $\sDist(f)$.   
\end{pro}
Given simplicial bundle scenarios $f\colon E\to X$ and $g\colon F\to Y$, for $y \in Y_n$ we have 
$$
\Map(f,g)_n^{-1}(y)=\catsScen(f,g_y)
$$
Moreover, we have the map 
$$
\sDist_{f,g_y}\colon \catsScen(f,g_y) \to \catConv(\sDist(f),\sDist(g_y))
$$
{where the target can be identified with $\catConv(\sDist(f),D(g_n^{-1}(y)))$ since $\sDist(g_y)$ can be identified with $D(g_n^{-1}(y))$.}
Due to the functoriality of $\sDist$, for every simplex $y \in Y_n$ and ordinal map $\theta \colon [m] \to [n]$, the following diagram commutes:
\begin{equation}\label{dia:Diagram1}
\begin{tikzcd}[column sep=huge,row sep=large]
\catsScen(f,g_y) \arrow[d,"\text{$(\theta,\Id)_\ast$}"'] \arrow[rr,"\sDist_{f,g_y}"] && \catConv\left(\sDist(f),D(g_n^{-1}(y))\right) 
\arrow[d,"\text{$\sDist(\theta,\Id)_\ast$}"] \\
\catsScen(f,g_{y\circ \theta})  \arrow[rr,"\sDist_{f,g_{y\circ \theta}}"] && \catConv\left(\sDist(f),D(g_m^{-1}(y\circ \theta))\right) 
\end{tikzcd}
\end{equation}
%
{This diagram} induces the following diagram: 
$$
\begin{tikzcd}[column sep=huge,row sep=large]
\catsScen(f,g_y) \arrow[dddr,hook,"\delta_{\catsScen(f,g_y)}"] \arrow[dd,"\text{$(\theta,\Id)_\ast$}"'] \arrow[rr,"\sDist_{f,g_y}"] && \catConv\left(\sDist(f),D(g^{-1}(y))\right) 
\arrow[dd,"\text{$\sDist(\theta,\Id)_\ast$}"] \\
&&  \\
\catsScen(f,g_{y\circ \theta})  \arrow[dddr,hook,"\delta_{\catsScen(f,g_{y\circ \theta})}"']  \arrow[rr,"\sDist_{f,g_{y\circ \theta}}"] && 
\catConv\left(\sDist(f),D(g^{-1}(y\circ \theta))\right)  \\
& D(\catsScen(f,g_y)) \arrow[dd,"\text{$D((\theta,\Id)_\ast)$}"'']  \arrow[ruuu,"\sDist_{f,g_y}'"]  \\ 
&& \\
& D(\catsScen(f,g_{y\circ \theta}))  \arrow[ruuu,"{\sDist_{f,g_{y\circ \theta}}'}"']
\end{tikzcd}
$$
%
%
{Therefore} we have the following 
{commutative} diagram: 
\begin{equation}\label{dia:Diagramofdiagrams}
\begin{tikzcd}[column sep=huge,row sep=large]
\chi_{\Map(f,g)} \arrow[dr,hook,"\delta_{\chi_{\Map(f,g)}}"'] \arrow[rr,""] && \catConv(\sDist(f),D(\chi_{g})) \\
& D(\chi_{\Map(f,g)})   \arrow[ru]
\end{tikzcd}
\end{equation}
%
%
By taking the limit 
{of}
Diagram (\ref{dia:Diagramofdiagrams}) and using Proposition \ref{pro:limitchilimitDchi}, we obtain the following commutative diagram:
%
\begin{equation}\label{dia:limitDiagramofdiagrams}
\begin{tikzcd}[column sep=huge,row sep=large]
\sSect(\Map(f,g)) \arrow[dr,hook,"(\delta_{E(f,g)})_\ast"'] \arrow[rr,""] && \catConv(\sDist(f),\sDist(g)) \\
& \sDist(\Map(f,g))  \arrow[ru,"\mu_{f,g}"']
\end{tikzcd}
\end{equation}
The top map in Diagram (\ref{dia:limitDiagramofdiagrams}) is equal to $\sDist_{f,g}\circ \zeta_{f,g}^{-1}$. 
{This way}
we have obtained a convex map 
$$
\mu_{f,g}\colon \sDist(\Map(f,g)) \to \catConv(\sDist(f),\sDist(g))
$$ 
which extends $\sDist_{f,g}$ in the following sense:
\begin{pro}    
The following diagram commutes:
\begin{equation}\label{dia:sDisFfg1}
\begin{tikzcd}[column sep=huge,row sep=large]
\catsScen(f,g) \arrow[rr,"\sDist_{f,g}"]  \arrow[drr,hook,"(\delta_{E(f,g)})_\ast\circ \,\zeta_{f,g}"']
&&  \catConv(\sDist(f),\sDist(g))  \\
&&\sDist(\Map(f,g)) \arrow[u,"\mu_{f,g}"']
\end{tikzcd}
\end{equation}
{Here, $\zeta_{f,g}$ is the isomorphism defined in (\ref{eq:etafg}).}

\end{pro}

Using Diagram (\ref{dia:sDisFfg1}), the naturality of $\delta$, and the definition of $\Theta$, we obtain the following commuting square:
\begin{equation}\label{dia:sDisFfg2}
\begin{tikzcd}[column sep=huge,row sep=large]
\catsScen(f,g)\arrow[d,hook,"\delta_{\catsScen(f,g)}"] \arrow[rr,"\sDist_{f,g}"]  
&&  \catConv(\sDist(f),\sDist(g))  \\
D(\catsScen(f,g))  \arrow[rr,"\Theta_{\Map(f,g)} \circ D(\zeta_{f,g})"] &&\sDist(\Map(f,g)) \arrow[u,"\mu_{f,g}"']
\end{tikzcd}
\end{equation}

Now, we present a generalization of \cite[Theorem 44]{barbosa2023closing}.

\begin{thm}\label{thm:contexMapp}
Let $f$ and $g$ be simplicial bundle scenarios. A convex map 
\[
\varphi \colon \sDist(f) \to \sDist(g)
\]
is a convex combination of maps induced by morphisms in $\catsScen(f,g)$ if and only if there exists a noncontextual simplicial distribution $p$ on $\Map(f,g)$ such that
\[
\varphi = \mu_{f,g}(p).
\]
\end{thm}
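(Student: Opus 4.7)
\medskip

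\noindent\textbf{Proof proposal.} The plan is to reduce the theorem to a single identity between two convex maps out of $D(\catsScen(f,g))$, and then use the definition of noncontextuality to read off both implications. A convex map $\varphi$ is a \emph{convex combination of maps induced by morphisms in $\catsScen(f,g)$} precisely when $\varphi$ lies in the image of the convex map
\[
\sDist'_{f,g} \colon D(\catsScen(f,g)) \to \catConv(\sDist(f),\sDist(g))
\]
from (\ref{eq:transsDisfg}), since $\sDist'_{f,g}$ is by definition the transpose of $\sDist_{f,g}$ under the free/forgetful adjunction (\ref{eq:Set adjunction Conv}). On the other hand, by Definition~\ref{def:contextuality} a simplicial distribution $p$ on $\Map(f,g)$ is noncontextual iff $p = \Theta_{\Map(f,g)}(q)$ for some $q \in D(\sSect(\Map(f,g)))$, and by the bijection (\ref{eq:etafg}) we may rewrite $q = D(\zeta_{f,g})(q')$ for a unique $q' \in D(\catsScen(f,g))$.

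The key step is to establish the identity
\begin{equation}\label{eq:key-identity}
\mu_{f,g} \circ \Theta_{\Map(f,g)} \circ D(\zeta_{f,g}) = \sDist'_{f,g}
\end{equation}
of convex maps $D(\catsScen(f,g)) \to \catConv(\sDist(f),\sDist(g))$. For this I would invoke the universal property of the free convex set $D(\catsScen(f,g))$: two convex maps out of it coincide as soon as they agree after pre-composition with the unit $\delta_{\catsScen(f,g)}$. Agreement on the unit is exactly the content of the commuting square~(\ref{dia:sDisFfg2}), which we already have: the top-right path sends $\phi$ to $\sDist_{f,g}(\phi)$, while the bottom path is $\mu_{f,g} \circ \Theta_{\Map(f,g)} \circ D(\zeta_{f,g}) \circ \delta_{\catsScen(f,g)}$, and this equals $\sDist_{f,g}$ by that diagram. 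Both maps in (\ref{eq:key-identity}) are convex: $\sDist'_{f,g}$ by construction as a transpose, and the other composite because $D(\zeta_{f,g})$, $\Theta_{\Map(f,g)}$, and $\mu_{f,g}$ are all morphisms in $\catConv$.

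Once (\ref{eq:key-identity}) is in hand, both implications are immediate. For the $(\Leftarrow)$ direction, suppose $p$ is noncontextual with $p = \Theta_{\Map(f,g)}(D(\zeta_{f,g})(q'))$; then $\mu_{f,g}(p) = \sDist'_{f,g}(q')$, which exhibits $\varphi = \mu_{f,g}(p)$ as a convex combination of the maps $\sDist_{f,g}(\phi)$ with weights $q'(\phi)$. For the $(\Rightarrow)$ direction, if $\varphi = \sDist'_{f,g}(q')$ for some $q' \in D(\catsScen(f,g))$, set $p := \Theta_{\Map(f,g)}(D(\zeta_{f,g})(q'))$; this $p$ is by construction in the image of $\Theta_{\Map(f,g)}$, hence noncontextual, and $\mu_{f,g}(p) = \varphi$ again by (\ref{eq:key-identity}).

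I do not expect a genuine obstacle here: the argument is essentially a bookkeeping consequence of the diagram~(\ref{dia:sDisFfg2}), the fact that $\zeta_{f,g}$ is an isomorphism, and the free-convex-set universal property. The only point that deserves a careful sentence is the verification that $\mu_{f,g} \circ \Theta_{\Map(f,g)} \circ D(\zeta_{f,g})$ is indeed a convex map, so that the uniqueness clause of the adjunction applies; this is clear because each factor is a morphism in $\catConv$ ($\Theta_{\Map(f,g)}$ by its construction as the convex extension of a set map, $\mu_{f,g}$ from its definition via the limit in $\catConv$ in Diagram~(\ref{dia:limitDiagramofdiagrams}), and $D(\zeta_{f,g})$ as the image of a bijection under $D$).
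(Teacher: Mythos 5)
Your proposal is correct and takes essentially the same route as the paper: you establish the key identity $\mu_{f,g}\circ \Theta_{\Map(f,g)} \circ D(\zeta_{f,g})=\sDist'_{f,g}$ by combining the convexity of the left-hand composite with the commuting square~(\ref{dia:sDisFfg2}) and the universal property of the free convex set $D(\catsScen(f,g))$, and then conclude both implications from the bijectivity of $\zeta_{f,g}$, exactly as in the paper's argument (your write-up just makes the two directions and the convexity check more explicit).
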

\begin{proof}
Since the composition $\mu_{f,g}\circ \Theta_{\Map(f,g)} \circ D(\zeta_{f,g})$ lies in $\catConv$, by Diagram 
(\ref{dia:sDisFfg2}), we get that 
\begin{equation}\label{eq:FThetasDist}
\mu_{f,g}\circ \Theta_{\Map(f,g)} \circ D(\zeta_{f,g})=\sDist'_{f,g}.
\end{equation}
Using Equation (\ref{eq:FThetasDist}) and the fact that $\zeta_{f,g}$ is bijective, we obtain the desired result.
\end{proof}

{For details on how this result extends \cite[Theorem 44]{barbosa2023closing} see Section \ref{sec:convex maps between empirical models}.}

\bibliography{bib.bib}

\begin{thebibliography}{10}

\bibitem{okay2022simplicial}
C.~Okay, A.~Kharoof, and S.~Ipek, ``Simplicial quantum contextuality,'' {\em
  {Quantum}}, vol.~7, p.~1009, May 2023.

\bibitem{abramsky2011sheaf}
S.~Abramsky and A.~Brandenburger, ``The sheaf-theoretic structure of
  non-locality and contextuality,'' {\em New Journal of Physics}, vol.~13,
  no.~11, p.~113036, 2011.

\bibitem{coecke2016mathematical}
B.~Coecke, T.~Fritz, and R.~W. Spekkens, ``A mathematical theory of
  resources,'' {\em Information and Computation}, vol.~250, pp.~59--86, 2016.

\bibitem{barbosa2023closing}
R.~S. Barbosa, M.~Karvonen, and S.~Mansfield, ``Closing {B}ell: Boxing black
  box simulations in the resource theory of contextuality,'' in {\em Samson
  Abramsky on Logic and Structure in Computer Science and Beyond}
  (A.~Palmigiano and M.~Sadrzadeh, eds.), vol.~25 of {\em Outstanding
  Contributions to Logic}, Springer, 2023.

\bibitem{kharoof2022simplicial}
A.~Kharoof and C.~Okay, ``Simplicial distributions, convex categories and
  contextuality,'' {\em Theory and Applications of Categories}, vol.~44,
  no.~13, pp.~372--409, 2025.

\bibitem{Coho}
C.~Okay, S.~Roberts, S.~D. Bartlett, and R.~Raussendorf, ``Topological proofs
  of contextuality in quantum mechanics,'' {\em Quantum Information \&
  Computation}, vol.~17, no.~13-14, pp.~1135--1166, 2017.

\bibitem{barbosa2023bundle}
R.~S. Barbosa, A.~Kharoof, and C.~Okay, ``A bundle perspective on
  contextuality: Empirical models and simplicial distributions on bundle
  scenarios,'' {\em arXiv preprint arXiv:2308.06336}, 2023.

\bibitem{haderi2024operadic}
R.~Haderi, C.~Okay, and W.~H. Stern, ``The operadic theory of convexity,'' {\em
  Applied Categorical Structures}, vol.~33, no.~17, 2025.

\bibitem{mac2013categories}
S.~Mac~Lane, {\em Categories for the working mathematician}, vol.~5.
\newblock Springer Science \& Business Media, 2013.

\bibitem{jacobs2010convexity}
B.~Jacobs, ``Convexity, duality and effects,'' in {\em IFIP International
  Conference on Theoretical Computer Science}, pp.~1--19, Springer, 2010.

\bibitem{riehl2017category}
E.~Riehl, {\em Category theory in context}.
\newblock Courier Dover Publications, 2017.

\bibitem{goerss2009simplicial}
P.~G. Goerss and J.~F. Jardine, {\em Simplicial homotopy theory}.
\newblock Springer Science \& Business Media, 2009.

\bibitem{kharoof2023homotopical}
A.~Kharoof and C.~Okay, ``Homotopical characterization of strongly contextual
  simplicial distributions on cone spaces,'' {\em Topology and its
  Applications}, vol.~352, p.~108956, 2024.

\end{thebibliography}
\bibliographystyle{ieeetr}

\normalfont 

\appendix

\section{Grothendieck construction}
\label{sec:gro}

In this section, we introduce the Grothendieck construction \cite{mac2013categories} and a relative variant that extends to $2$-categories \cite[Section XII.3]{mac2013categories}.

Let $\catC$ be a category. We will write {$\catCat$} for the category of small categories. Let $F\colon \catC\to \catCat$ be a functor. The Grothendieck construction of $F$ is the category $\int_{\catC} F$ consisting of:
\begin{itemize}
\item objects given by pairs $(c,x)$ where $c$ is an object of $\catC$ and $x$ is an object of $F(c)$,
\item morphisms $(c,x)\to (d,y)$ given by pairs $(h,\gamma)$ where $h\colon c\to d$ is a morphism of $\catC$ and $\gamma\colon F(h)(x) \to y$ is a morphism of $F(d)$.
\end{itemize}
For simplicity of notation we will sometimes omit the underlying category and write $\int F$. There is a canonical functor obtained by projecting onto the first factors
\[
\int F\to \catC.
\]
There is also a covariant version of the Grothendieck construction which applies to a functor of the form $F\colon \catC^\op \to \catCat$. The objects of $\int F$ in this case consists of pairs $(c,x)$ as before. A morphism $(c,x)\to (d,y)$ is given by a morphism {$h\colon d\to c$} of $\catC$ and a morphism {$\gamma\colon F(h)(x) \to y$} of $F(d)$.

A {crucial} fact we will use is that if $F,G\colon \catC\to \catCat$ are two functors together with a natural isomorphism $F\to G$ then there is an equivalence between the associated Grothendieck constructions
\[
\int F \xrightarrow{\simeq} \int G.
\] 

An important special case of the (contravariant) Grothendieck construction is when the functor $F$ takes values in sets instead of categories. The Grothendieck construction of such a functor $F\colon\catC\to \catSet$ is called the \emph{category of elements}. We will write $\catC_F$ for this category. Its objects consists of pairs $(c,x)$ where $c$ is an object of $\catC$ and $x\in F(c)$. A morphism $(c,x)\to (d,y)$ in this category is given by a morphism ${h}\colon c\to d$ of $\catC$ such that $F(h)(x)=y$.

\subsection{Relative Grothendieck construction}
\label{sec:grothendieck}

In this section, we introduce a relative version of the Grothendieck construction which uses additional data from $2$-categorical structures. A $2$-category consists of:
\begin{itemize}
\item objects,
\item $1$-morphisms between objects,
\item $2$-morphisms between morphisms. 
\end{itemize}
We will consider strict $2$-categories, also known as categories enriched over the category of small categories equipped with the Cartesian monoidal structure. A canonical example is $\catCat$, whose objects are small categories, whose 1-morphisms are functors, and whose 2-morphisms are natural transformations.

\begin{defn}\label{def:Sli2cat} 
Let $\catC$ be a $2$-category, and let $c$ be an object of $\catC$. 
The \emph{{thick} slice category}
$\catC\slice c$ of $\catC$ over $c$ is defined as follows:
\begin{itemize}
    \item The objects of $\catC{\slice}c$ are morphisms of $\catC$ of the form $f\colon d\to c$.
    \item A morphism from $f\colon d\to c$ to $g\colon e \to c$ is a pair $(h,\eta)$, where $h\colon d \to e$ 
    is a morphism of $\catC$, and $\alpha\colon f \to g\circ h$ is a $2$-morphism. See Diagram (\ref{dia:mapinSlice}).
\end{itemize}
\begin{equation}\label{dia:mapinSlice}
    \begin{tikzcd}[column sep=huge,row sep=large]
d
\arrow[rr,"h"]
\arrow[dr,"f"',""{name=A,right}] && e
\arrow[dl,"g",""{name=B,left}] 
 \arrow[Rightarrow, from=A, to=B, "\eta"]\\
&  c &  
\end{tikzcd}
\end{equation}

We define the functor $\Pi\colon\catC{\slice}c \to \catC$ that maps an 
object $f\colon d\to c$ to $c$, and a morphism $(h,\eta)$ to $h$. Now, we can define a relative version of the Grothendieck construction.
\end{defn}
\begin{defn}\label{def:GenGroth}
Given categories $\catC$ and $\catE$, and a functor 
$
F\colon \catC \to \catCat{\slice}\catE
$, let $\bar{F}$ denote the composition $\Pi \circ F\colon \catC \to \catCat$. 
We define the \emph{relative Grothendieck construction} to be the functor 
\[
\int_{\catC}F\colon \int_{\catC} \bar{F} \to \catE,
\] 
defined by:
\begin{itemize}
    \item For an object $(c,x)$ of $\int_{\catC} \bar{F}$, consisting of an object $c$ of $\catC$ and an object of $x$ of $\bar{F}(c)$, we set $(\int_{\catC} F)(c,x)= F(c)(x)$. 
    \item For a morphism $(h,\gamma)\colon (c,x) \to (d,y)$ of $\int_{\catC} \bar{F}$, consisting of a morphism $h\colon c \to d$ of $\catC$ and 
 a morphism   $\gamma\colon \bar{F}(h)(x) \to y$ of $\bar{F}(d)$. If $F(h)=(\bar{F}(h),\eta)$,  
then we set
    $$
    (\int_{\catC} F)(h,\gamma)= F(d)(\gamma) \circ \eta_x.
    $$
\end{itemize}
\end{defn}

Note that the last composite has the form
\[
F(c)(x) \xrightarrow{\eta_x} F(d)(\bar{F}(h)(x))
\xrightarrow{F(d)(\gamma)} F(d)(y).
\]
where
\begin{itemize}
\item $\eta$ is a natural transformation $F(c)\to F(d) \circ \bar{F}(h)$, and
\item $F(d)$ is a functor $\bar{F}(d)\to E$.
\end{itemize}


%

\section{Properties of the distribution monad}
\label{sec:properties of distribution monad}

\begin{proof}[{\bf Proof of Proposition \ref{pro:distribution monad gluing}}]
This section $m$ defined in \cite[Equation $(15)$]{kharoof2023homotopical} and the third {property} is proved as part $(1)$ of Lemma 3.26 in the same paper. We prove the remaining axioms. 
For {property} $(2)$, we prove a stronger property. Given the commutative diagram 
\begin{equation}\label{eq:NaturalDDD}
\begin{tikzcd}[column sep=huge,row sep=large] 
  & Y \arrow[d,"g"]
\arrow[rd,"\beta"]  & \\
X \arrow[r,"f"] \arrow[rd,"\alpha"'] & Z  \arrow[rd,hook,"\gamma"]  
&  Y' \arrow[d,"g{'}"] \\
&  X'  \arrow[r,"f{'}"']  & Z' 
\end{tikzcd}
\end{equation}
in $\catSet$, we prove that 
$$
T(\alpha\times \beta)\circ m_{f,g}=m_{f',g'} \circ (T(\alpha)\times T(\beta)).
$$
Indeed, for $p,q \in D(X)\times_{D(Z)} D(Y)$ and $(x',y') \in X'\times_{Z'} Y'$, we have the following:
\begin{equation}\label{eq:Naturalityy1}
\begin{aligned}
D(\alpha\times \beta) \left(m_{f,g}(p,q)\right)(x',y')&=
\sum_{(\alpha \times \beta)(x,y)=(x',y')} m_{f,g}(p,q)(x,y) \\
&=\sum_{(\alpha \times \beta)(x,y)=(x',y')} \frac{p(x)q(y)}{D(f)(p)(f(x))}.
\end{aligned}
\end{equation}
On the other hand, we have
\begin{equation}\label{eq:Naturalityy2}
\begin{aligned}
m_{f',g'} \left((D(\alpha)(p),D(\beta)(q)\right)(x',y')&=
\frac{D(\alpha)(p)(x')\,D(\beta)(q)(y')}{D(f')\left(D(\alpha)(p)\right)(f'(x'))} \\
&= \sum_{\alpha(x)=x'}\frac{p(x)}{D(\gamma)\left(D(f)(p)\right)(f'(x'))} \, \sum_{\beta(y)=y'}q(y) \\
&= \sum_{\alpha(x)=x'}\frac{p(x)}{\sum_{\gamma(z)=f'(x')}D(f)(p)(z)} \, \sum_{\beta(y)=y'}q(y)  \\
&= \sum_{\alpha(x)=x'}\frac{p(x)}{D(f)(p)(f(x))} \, \sum_{\beta(y)=y'}q(y) \\
&=
\sum_{\alpha(x)=x', \; \beta(y)=y'} \frac{p(x)q(y)}{D(f)(p)(f(x))}.
\end{aligned}
\end{equation}
In the fourth equation we used the injectivity of $\gamma$. Again, by the injectivity of $\gamma$, the equations $\alpha(x)=x'$,  $\beta(y)=y'$ implies that $(x,y) \in X\times_Z Y$. Therefore, the last parts of equations (\ref{eq:Naturalityy1}) and (\ref{eq:Naturalityy2}) are equal.  

Now, we prove {property} $(4)$. Given $(x,y), (x',y') \in X \times_{Z}Y$, we have 
$$
\begin{aligned}
m(\delta_X \times \delta_Y(x,y))(x',y')&=m(\delta^x,\delta^y)(x',y')
=\frac{\delta^x(x')\delta^y(y')}{D(f)(\delta^x)(f(x'))}\\
&= \frac{\delta^x(x')\delta^y(y')}{\delta^{f(x)}(f(x'))} =\begin{cases}
1 &    (x',y')=(x,y) \\
0 &     \;\text{otherwise.}
\end{cases}
\end{aligned}
$$
So $m(\delta_X \times \delta_Y(x,y))=\delta^{(x,y)}=
\delta_{X\times_Z Y}(x,y)$.

Finally, we will show that for $P \in D(D(X))$ and $q\in D(Y)$ where 
$D(D(f))(P)=D(D(g))(\delta^q)$, we get 
$$
m\circ(\mu_X \times \mu_Y)(P,\delta^{q}) =
\mu_{X\times_Z Y}\circ 
D(m) \circ m (P,\delta^{q}),
$$
{w}hich is stronger than {property} $(5)$. Given $(x,y) \in X\times_Z Y$, we have
\begin{equation}\label{eq:multD1}
\begin{aligned}
m\circ(\mu_X \times \mu_Y)(P,\delta^{q})(x,y)&=m(\mu_X(P),
\mu_Y(\delta^{q}))(x,y)\\
&=m(\mu_X(P),q)(x,y) \\
&=\frac{\mu_X(P)(x)q(y)}{D(g)(q)(g(y))}\\
&=\sum_{p'\in D(X)}P(p') 
\frac{p'(x)q(y)}{D(f)(g)(g(y))}.
\end{aligned}
\end{equation}
On the other hand,
\begin{equation}\label{eq:multD2}
\begin{aligned}
\mu_{X\times_Z Y}\left(D(m)(m(P,\delta^q)) \right)(x,y)&=
\sum_{s\in D(X\times_Z Y)} 
D(m)\left(m(P,\delta^q)\right)(s)s(x,y) \\
&=\sum_{s\in D(X\times_Z Y)} 
\sum_{m(p',q')=s}m(P,\delta^q)(p',q')s(x,y) \\
&=\sum_{(p',q')\in D(X)\times_{D(Z)}D(Y)} 
m(P,\delta^q)(p',q')m(p',q')(x,y) \\
&=\sum_{p'\in D(X) \;\text{s.t}\; 
D(f)(p')=D(g)(q)}m(P,\delta^q)(p',q) m(p',q)(x,y)
\\
&=\sum_{p'\in D(X) \; \text{s.t} 
\;D(f)(p')=D(g)(q)}P(p') 
\frac{p'(x)q(y)}{D(f)(g)(g(y))}.
\end{aligned}
\end{equation}
In the last equation in (\ref{eq:multD2}) we used the following argument 
$$
m(P,\delta^q)(p',q)=\frac{P(p')\delta^q(q)}{D(D(g))(D(g)(q))}=\frac{P(p')}{\delta^{D(g)(q)}(D(g)(q))}=P(p').
$$

Note that 
$$
\begin{aligned}
\sum_{p'\in D(X) \; \text{s.t} 
\;D(f)(p')=D(g)(q)}P(p')&=
D(D(f))(P)(D(g)(q))\\
&=D(D(g))(\delta^q)(D(g)(q))\\
&=\delta^{D(g)(q)}(\delta^q)(D(g)(q))=1.
\end{aligned}
$$
Therefore, the last parts of equations (\ref{eq:multD1}) and (\ref{eq:multD2}) are equal.

\end{proof}

\begin{proof}[{\bf Proof of Proposition \ref{pro:distribution monad deterministic}}]
Given Diagram (\ref{eq:NaturalDDD}) in $\catSet$ without requiring $\gamma$ to be injective, we prove that the following diagram commutes:
$$
\begin{tikzcd}[column sep=huge,row sep=large]
D(X)\times_{D(Z)} Y \arrow[d,"D(\alpha)\times \beta"] \arrow[r,"\eta_{f,g}"]&   D(X \times_Z Y)  
\arrow[d,"D(\alpha\times \beta)"]
 \\
D(X')\times_{D(Z')} Y'  \arrow[r,"\eta_{f',g'}"]&  D(X' \times_{Z'} Y')   
\end{tikzcd}
$$
Given $(p,\tilde{y}) \in D(X)\times_{D(Z)} Y$ and  $(x',y')\in X\times_Z Y$, we have the following:
$$
\begin{aligned}
D\left(\alpha \times \beta)(\eta(p,\tilde{y})\right)(x',y')&= \sum_{(\alpha \times \beta)(x,y)=(x',y')}\eta(p,\tilde{y})(x,y) \\
& =\sum_{(\alpha \times \beta)(x,y)=(x',y')}p(x)\cdot\delta^{\tilde{y}}(y) \\
&=\begin{cases}
\sum_{f(x)=g(\tilde{y}),\,
\alpha(x)=x'}p(x)   & y'= \beta(\tilde{y}) \\
0 &  y'\neq \beta(\tilde{y}).
\end{cases}
\end{aligned}
$$
On the other hand, we have
$$
\begin{aligned}
\eta\left(D(\alpha)(p),  \beta(\tilde{y})\right)(x',y')&= D(\alpha)(p)(x') \cdot\delta^{\beta(\tilde{y})}(y') \\
&=\begin{cases}
\sum_{\alpha(x)=x'}p(x)   & y'= \beta(\tilde{y}) \\
0 &  y'\neq \beta(\tilde{y}).
\end{cases}
\end{aligned}
$$
Since $(p,\tilde{y}) \in D(X)\times_{D(Z)} Y$, we have $D(f)(p)=D(g)(\delta^{\tilde{y}})=\delta^{g(\tilde{y})}$. That means that the support of $p$ is $\set{x:\,f(x)=g(\tilde{y})}$. We conclude that
$$
D\left(\alpha \times \beta)(\eta(p,\tilde{y})\right)(x',y')=\eta\left(D(\alpha)(p),  \beta(\tilde{y})\right)(x',y').
$$
\end{proof}

\section{Simplicial complexes}

\label{sec:simplicial complexes}

\subsection{{Simplicial relations}}

{The nerve complex functor
\[
\hat N \colon \catsComp \to \catsComp
\]
defined in Definition~\ref{def:hatN} forms a monad.} The unit $\delta_\Sigma \colon \Sigma \to \hat N\Sigma$ of the monad is given by sending each vertex $x$ to the singleton simplex it generates, i.e., $x \mapsto \{x\}$. 
{The} multiplication $\mu_\Sigma \colon \hat N^2\Sigma \to \hat N\Sigma$ is defined by mapping each vertex $\{\sigma_1, \cdots, \sigma_n\}$ to the union $\bigcup_{i=1}^n \sigma_i$.
We denote the associated Kleisli category by $\catsComp_{\hat N}$ and write $\diamond$ for Kleisli composition. A morphism $\pi \colon \Sigma' \to \hat N\Sigma$ in this category corresponds to a simplicial relation from $\Sigma'$ to $\Sigma$. Accordingly, we identify the category $\catsRel$ of simplicial complexes with simplicial relations as the Kleisli category $\catsComp_{\hat N}$.

On the functor point of view, we will denote the functor induced by such a simplicial complex map by 
\[\overline{\pi}\colon \catC_{\Sigma'} \to \catC_{\Sigma}\]
{where a simplex} $\sigma$ {is sent} to $\cup_{\tau \in \pi(\sigma)} \tau$. 
We have 
\begin{equation}\label{pianddoverlinepi}
\overline{\pi}(\delta_{\Sigma})=\Id_{\catC_{\Sigma}}, \;\;\; \overline{\pi_2 \diamond \pi_1}= \overline{\pi_2} \circ \overline{\pi_1}.
\end{equation}

\subsection{Pullback of bundles} 
\label{sec:pullback of bundles} 

Given a bundle scenario $f \colon \Gamma \to \Sigma$ and a simplicial complex map $\pi \colon \Sigma' \to \hat N \Sigma$, we define the bundle scenario $\pi^\ast(f)$ as the right vertical map in the following pullback square:
\begin{equation}\label{dia:piastf}
\begin{tikzcd}[column sep=huge,row sep=large]
\hat N\Gamma \arrow[d,"\hat N f"] &\arrow[l,""]  \pi^{\ast} (\hat N\Gamma)
\arrow[d,"\pi^\ast(f)"] 
 \\
\hat N\Sigma \arrow[ru, phantom, "\llcorner", very near end]  &\arrow[l,"\pi"]  \Sigma'
\end{tikzcd}
\end{equation}
see \cite[Propositions A.6 and A.7]{barbosa2023bundle}. In addition, a morphism $\alpha \colon f \to g$ between bundle scenarios induces a morphism $\pi^{\ast}(\alpha) \colon \pi^\ast(f) \to \pi^\ast(g)$ between the corresponding pullbacks: 
$$
\begin{tikzcd}[column sep=huge,row sep=large]
\pi^{\ast} (\hat N\Gamma)
\arrow[rr,"\pi^{\ast}(\alpha)"]
\arrow[dr,"\pi^\ast(f)"'] && \pi^{\ast} (\hat N\Gamma')
\arrow[dl,"\pi^{\ast}(g)"] \\
&  \Sigma'&  
\end{tikzcd}
$$
Moreover, it is straightforward to check that
a simplicial complex map $\pi\colon \Sigma' \to \hat N \Sigma$ induces a functor 
\[\pi^{\ast}\colon \Bund(\Sigma) \to  \Bund(\Sigma').\]    

\begin{lem}\label{lem:lempi1pi2}
{Let $\pi_1 \colon \Sigma' \to \Sigma$ and $\pi_2 \colon \Sigma'' \to \Sigma'$ be simplicial complex maps. Then the following properties hold:}
\begin{enumerate}
    \item The pullback $(\delta_{\Sigma})^{\ast} \colon \Bund(\Sigma) \to \Bund(\Sigma)$ along the unit map is the identity functor.
    \item Pullback is functorial with respect to Kleisli composition: 
    \[
    (\pi_1 \diamond \pi_2)^{\ast} = \pi_2^{\ast} \circ \pi_1^{\ast}.
    \]
\end{enumerate}
\end{lem}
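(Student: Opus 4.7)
The plan is to treat the two parts separately, using only the defining pullback construction of Section~\ref{sec:pullback of bundles} together with the monad structure on $\hat N$.

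For part (1), I would first invoke the naturality of the unit $\delta\colon \Id \Rightarrow \hat N$, which gives the commutative square $\hat N f \circ \delta_\Gamma = \delta_\Sigma \circ f$. By the universal property of the pullback defining $\delta_\Sigma^{\ast}(\hat N\Gamma)$, this induces a canonical simplicial complex map $\iota\colon \Gamma \to \delta_\Sigma^{\ast}(\hat N\Gamma)$ over $\Sigma$, so that $\delta_\Sigma^{\ast}(f) \circ \iota = f$. To show $\iota$ is an isomorphism, I would work at the level of vertices: a vertex of the pullback is a pair $(\sigma, x)$ with $\sigma \in \Gamma$, $x \in V(\Sigma)$, and $\hat N f(\sigma) = \delta_\Sigma(x) = \{x\}$, which forces $f(\sigma) = \{x\}$ as a simplex of $\Sigma$. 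Since $f$ is discrete over vertices, any simplex of $\Gamma$ whose image under $f$ is a vertex must itself be a vertex, so $\sigma \in V(\Gamma)$ and $\iota$ is a bijection on vertices. Because simplicial complexes are determined by their vertex sets together with the family of simplices, and both sides have the same simplex data inherited from $\Gamma$, this upgrades to an isomorphism of simplicial complexes. For a morphism $\alpha\colon f \to g$, naturality of the pullback construction forces $\delta_\Sigma^{\ast}(\alpha) = \alpha$ under this identification.

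For part (2), the plan is to exploit the pasting lemma for pullbacks together with the Kleisli identity $\pi_1 \diamond \pi_2 = \mu_\Sigma \circ \hat N\pi_1 \circ \pi_2$. Both $(\pi_1 \diamond \pi_2)^{\ast}(\hat N\Gamma)$ and $\pi_2^{\ast}(\hat N \pi_1^{\ast}(\hat N\Gamma))$ can be described concretely at the simplex level: a simplex of the former is a pair $(\sigma, B)$ with $\sigma \in \Sigma''$ and $B \in \hat N\Gamma$ satisfying $\hat N f(B) = (\pi_1 \diamond \pi_2)(\sigma) = \bigcup_{\tau \in \pi_2(\sigma)} \pi_1(\tau)$, while a simplex of the latter is a pair $(\sigma, \{(\tau_i, b_i)\}_i)$ with $\{\tau_i\} = \pi_2(\sigma) \in \hat N\Sigma'$ and $\hat N f(b_i) = \pi_1(\tau_i)$. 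I would define the comparison in both directions: from the latter to the former send $(\sigma, \{(\tau_i, b_i)\})$ to $(\sigma, \bigcup_i b_i)$; from the former to the latter, split $B$ according to the partition induced by $\pi_2(\sigma)$ using the compatibility $\hat N f(B) = \bigcup_i \pi_1(\tau_i)$. Verifying that these are mutually inverse simplicial complex maps over $\Sigma''$ and are compatible with the projections to $\hat N\Gamma$ is a direct computation. Functoriality on morphisms $\alpha\colon f \to g$ then follows by chasing the induced maps between the two presentations.

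The main obstacle will be the bookkeeping in part (2), particularly showing that the simplex-level description of $\hat N \pi_1^{\ast}(\hat N\Gamma)$ really does decompose an element of $\hat N\Gamma$ along the partition provided by $\pi_2(\sigma)$. This relies on the fact that $f$ is discrete over vertices, so the simplices $b_i \subset B$ with $f(b_i) = \pi_1(\tau_i)$ are uniquely determined by their image, matching the restriction maps $r_{\sigma, \sigma'}$ introduced in (\ref{eq:rsigmasigma}). Once this uniqueness is in hand, both directions of the comparison are forced, and the pasting argument closes cleanly.
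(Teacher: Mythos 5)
Your proof is correct in substance but follows a genuinely different route from the paper's. The paper disposes of the lemma almost entirely by citation: $(\delta_\Sigma)^{\ast}(f)=f$ is Lemma~A.9 of \cite{barbosa2023bundle}, the object-level identity in part (2) is Lemma~A.8, the morphism-level identity is Equation~(40) there, and the only argument actually written out is the naturality square for $\delta$ giving $(\delta_\Sigma)^{\ast}(\alpha)=\alpha$. You instead verify everything by hand on the concrete pullbacks, which amounts to reproving the outsourced results: your observation that a simplex of $\Gamma$ mapping onto a single vertex of $\Sigma$ must itself be a vertex is exactly where discreteness over vertices enters in part (1), and your union/splitting comparison in part (2), with the splitting pinned down uniquely by the restriction maps $r_{\sigma,\sigma'}$ of (\ref{eq:rsigmasigma}), is the right mechanism and does invert the union map. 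The benefit of your route is self-containedness and visibility of where each bundle-scenario axiom is used; the cost is bookkeeping, and two points deserve care. First, your simplex-level encoding of $(\pi_1\diamond\pi_2)^{\ast}(\hat N\Gamma)$ is slightly off: the image of a simplex $\sigma\in\Sigma''$ under $\pi_1\diamond\pi_2$ is the set $\set{\overline{\pi_1}(\pi_2(x)) \mid x\in\sigma}$ of simplices of $\Sigma$, not the single union $\bigcup_{\tau\in\pi_2(\sigma)}\pi_1(\tau)$, and a simplex of a pullback is a priori a set of vertex pairs whose two projections are simplices; that it can be re-encoded as a pair (a simplex of $\Sigma''$ together with compatible fibre data) itself requires the discreteness/uniqueness argument you invoke only at the end, so that argument should come before the encoding rather than after. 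Second, strictly speaking you construct canonical isomorphisms, while the lemma asserts equalities, which the paper needs so that $\Bund\colon\catsRel^{\op}\to\catCat$ is a strict functor fed into the Grothendieck construction; with any naive model of the fibre product the two sides are only canonically isomorphic, and the paper resolves this by adopting the conventions of \cite{barbosa2023bundle} (Section~\ref{sec:pullback of bundles}). This is a matter of convention rather than a mathematical gap, but your write-up should either fix a model making the comparisons identities or state explicitly that the identification is being made.
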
  
\begin{proof}
Part (1): For a bundle scenario $f\colon \Gamma \to \Sigma$, by \cite[Lemma A.9]{barbosa2023bundle} we have $(\delta_{\Sigma})^{\ast}(f)=f$. On the other hand, for a morphism $\alpha\colon f \to g$ in 
$\Bund(\Sigma)$, due to the naturality of $\delta$, we obtain the following commuting diagram: 
$$
\begin{tikzcd}[column sep=huge,row sep=large]
\hat N\Gamma
\arrow[rr,"\hat N \alpha"]
\arrow[dr,"\hat N f"']  && \hat N\Gamma'
\arrow[dl,""] \\
&  \hat N \Sigma 
&& \Gamma
\arrow[rr,"\alpha"]
\arrow[dr,"f"'] \arrow[lllu,hook',""]&& \Gamma'
\arrow[dl,"g"]  \arrow[lllu,hook',"\delta_{\Gamma'}"']\\
&& &&  \Sigma  \arrow[lllu,hook',"\delta_{\Sigma}"]&  
\end{tikzcd}
$$
Thus, $(\delta_{\Sigma})^{\ast}(\alpha)=\alpha$.

Part (2): For objects, using \cite[Lemma A.8]{barbosa2023bundle}, we obtain 
$$
(\pi_1 \diamond \pi_2)^{\ast}(f)=(\pi_2)^{\ast} \circ (\pi_1)^{\ast}(f)
$$
For morphisms, the equality holds by \cite[Equation (40)]{barbosa2023bundle}.
\end{proof}

\subsection{Monoidal structure on simplicial complexes}\label{subsec:MonComp}

{In this section, we define a product on the category of simplicial complexes that equips it with the structure of a cartesian monoidal category. This construction satisfies the property that the functor $\hat N$ (as defined in Section~\ref{sec:standard}) becomes a monoidal monad with respect to the resulting monoidal structure. As a result, we show that the induced Kleisli category inherits a symmetric monoidal structure.
}

\begin{defn}
For simplicial complexes $\Sigma_1$ and $\Sigma_2$, we define their product $\Sigma_1 \otimes \Sigma_2$ as the simplicial complex with:
\begin{itemize}
    \item $V(\Sigma_1 \otimes \Sigma_2) = V(\Sigma_1) \times V(\Sigma_2)$, the Cartesian product of the vertex sets.
    \item A subset $\sigma \subset V(\Sigma_1) \times V(\Sigma_2)$ is a simplex if and only if $\pr_1(\sigma) \in \Sigma_1$ and $\pr_2(\sigma) \in \Sigma_2$, where $\pr_1$ and $\pr_2$ denote the projections onto the first and second factors, respectively.
\end{itemize}
For a pair of simplicial complex maps $f \colon \Sigma_1' \to \Sigma_1$ and $g \colon \Sigma_2' \to \Sigma_2$, we define $f \otimes g$ to act as $f \times g$ on the vertices of {$\Sigma_1' \otimes \Sigma_2'$}.
\end{defn}

Note that if $\Sigma$ is generated by $\set{\sigma_1,\cdots,\sigma_n}$, and $\Sigma'$ is generated by 
$\set{\tau_1,\cdots,\tau_m}$, then $\Sigma \otimes \Sigma'$ is generated by 
$\set{\sigma_i \times \tau_j: 1\leq i \leq n,\; 1\leq j \leq m}$.

\begin{pro}
$(\catsComp,\otimes,\Delta^0)$ is a cartesian monoidal category.    
\end{pro} 

{For every pair of simplicial complexes $\Sigma_1$ and $\Sigma_2$, we define the map
\begin{equation}\label{eq:phiSigma}
\phi_{\Sigma_1, \Sigma_2} \colon \hat N \Sigma_1 \otimes \hat N \Sigma_2 \to \hat N (\Sigma_1 \otimes \Sigma_2)
\end{equation}
by sending each vertex $(\sigma, \tau)$ to the product $\sigma \times \tau$.
We now show that the map in Equation (\ref{eq:phiSigma}) is a well-defined simplicial complex map. First, note that $\sigma \times \tau$ is indeed a vertex of $\hat N(\Sigma_1 \otimes \Sigma_2)$ because $\pr_1(\sigma \times \tau) = \sigma$ and $\pr_2(\sigma \times \tau) = \tau$. 
Next, let $\tilde\sigma = \{(\sigma_1, \tau_1), \cdots, (\sigma_n, \tau_n)\}$ be a simplex in $\hat N \Sigma_1 \otimes \hat N \Sigma_2$. By definition, we have that $\pr_{1}(\tilde\sigma) \in \hat N \Sigma_1$ and $\pr_{2}(\tilde\sigma) \in \hat N \Sigma_2$, which implies that $\bigcup_{i=1}^n \sigma_i \in \Sigma_1$ and $\bigcup_{i=1}^n \tau_i \in \Sigma_2$.
By the definition of $\phi_{\Sigma_1, \Sigma_2}$, we have
\[
\phi_{\Sigma_1, \Sigma_2}(\tilde\sigma) = \{ \sigma_1 \times \tau_1, \cdots, \sigma_n \times \tau_n \}.
\]
To show that $\bigcup_{i=1}^n (\sigma_i \times \tau_i) \in \Sigma_1 \otimes \Sigma_2$, observe that
\[
\pr_1\left( \bigcup_{i=1}^n (\sigma_i \times \tau_i) \right) = \bigcup_{i=1}^n \sigma_i \in \Sigma_1, \quad
\pr_2\left( \bigcup_{i=1}^n (\sigma_i \times \tau_i) \right) = \bigcup_{i=1}^n \tau_i \in \Sigma_2.
\]
This confirms that $\phi_{\Sigma_1, \Sigma_2}$ is well-defined as a simplicial complex map.
}

\begin{pro}
The {monad} $\hat N\colon \catsComp \to \catsComp$, equipped with the natural map in (\ref{eq:phiSigma}) forms a monoidal monad on 
$(\catsComp,\otimes,\Delta^0)$.   
\end{pro}
\begin{proof}
We define the coherence maps to be $\delta_{\Delta^0}\colon \Delta^0 \to \hat N (\Delta^0)$ and $\phi_{\Sigma,\Sigma'}$ 
{using Equation (\ref{eq:phiSigma}).}
It is straightforward to verify the commutativity of the following coherence diagrams: 
$$
\begin{tikzcd}[column sep=huge,row sep=large]
 (\hat N\Sigma \otimes \hat N \Sigma') \otimes \hat N \Sigma'' \arrow[d,"\phi_{\Sigma,\Sigma'}\otimes \Id_{\Sigma''}"]
\arrow[r,""] & \hat N\Sigma \otimes (\hat N \Sigma' \otimes \hat N \Sigma'') \arrow[d,"\Id_{\Sigma} \otimes \phi_{\Sigma',\Sigma''}"]\\
\hat N(\Sigma \otimes \Sigma') \otimes \hat N \Sigma''  \arrow[d,"\phi_{\Sigma\otimes \Sigma',\Sigma''}"]
 & \hat N\Sigma \otimes \hat N (\Sigma' \otimes \Sigma'')   \arrow[d,"\phi_{\Sigma,\Sigma' \otimes \Sigma''}"]\\
\hat N\left((\Sigma \otimes \Sigma') \otimes \Sigma''\right) \arrow[r,""] & \hat N\left(\Sigma \otimes (\Sigma' \otimes \Sigma'')\right)
\end{tikzcd}%
$$
$$
\begin{tikzcd}[column sep=huge,row sep=large]
\Delta^0  \otimes \hat N \Sigma\arrow[d,""]    
\arrow[r,"\delta_{\Delta^0} \otimes \Id_{\hat N \Sigma} "]&   \hat N \Delta^0 \otimes \hat N \Sigma  
\arrow[d,"\phi_{\Delta^0,\Sigma}"]     &
\hat N \Sigma \otimes \Delta^0 \arrow[d,""]    
\arrow[r,"\Id_{\hat N \Sigma}\otimes \delta_{\Delta^0}"]&   \hat N \Sigma \otimes \hat N \Delta^0  
\arrow[d,"\phi_{\Sigma,\Delta^0}"]
 \\
\hat N \Sigma & \arrow[l,""]  \hat N ( \Delta^0 \otimes \Sigma )     &
\hat N \Sigma & \arrow[l,""]  \hat N (\Sigma \otimes \Delta^0) 
\end{tikzcd}
$$
Since these coherence conditions hold, $\hat N\colon (\catsComp,\otimes,\Delta^0) \to (\catsComp,\otimes,\Delta^0)$ is a lax monoidal functor. Finally, we verify that the unit $\delta$ and the multiplication $\mu$ are monoidal natural transformations. For the unit, we need to prove that the following diagrams commute: 
\begin{equation}\label{eq:deltamon}
\begin{tikzcd}[column sep=huge,row sep=large]
\Sigma \otimes \Sigma' \arrow[d,equal]    
\arrow[r,"\delta_{\Sigma} \otimes \delta_{\Sigma'}"] &   \hat N \Sigma \otimes \hat N \Sigma'  
\arrow[d,"\phi_{\Sigma,\Sigma'}"]     
&&
\Delta^0 \arrow[ld,equal] \arrow[rd,"\delta_{\Delta^0}"]    
 \\
\Sigma \otimes \Sigma'  \arrow[r,"\delta_{\Sigma \otimes \Sigma'}"]  & \hat N (\Sigma \otimes \Sigma')    
&
\Delta^0  \arrow[rr,"\delta_{\Delta^0}"] && \hat N \Delta^0 
\end{tikzcd}
\end{equation}
The left-hand diagram in (\ref{eq:deltamon}) commutes because for vertices $x \in V(\Sigma)$ and $y \in V(\Sigma')$, we have $\{(x,y)\}=\{x\}\times\{y\}$, and the right-hand diagram obviously commutes.
 For the multiplication, we need to prove that the following diagrams commute: 
\begin{equation}\label{eq:mumon}
\begin{tikzcd}[column sep=huge,row sep=large]
\hat N^2 \Sigma \otimes \hat N^2 \Sigma' \arrow[d,"\phi_{\hat N \Sigma,\hat N \Sigma'}"']    
\arrow[r,"\mu_{\Sigma} \otimes \mu_{\Sigma'}"] &   \hat N \Sigma \otimes \hat N \Sigma'  
\arrow[dd,"\phi_{\Sigma,\Sigma'}"]     
&
\Delta^0 \arrow[d,"\delta_{\Delta^0}"'] \arrow[rrdd,"\delta_{\Delta^0}"]    
 &\\
\hat N(\hat N \Sigma \otimes \hat N\Sigma')  \arrow[d,"\hat N(\phi_{\Sigma,\Sigma'})"'] &    
&
  \hat N\Delta^0 \arrow[d,"\delta_{\hat N \Delta^0}"'] && \\
\hat N^2(\Sigma \otimes \Sigma')  \arrow[r,"\mu_{\Sigma \otimes \Sigma'}"] & \hat N (\Sigma \otimes \Sigma') &
\hat N^2 \Delta^0  \arrow[rr,"\mu_{\Delta^0}"] && \hat N \Delta^0
\end{tikzcd}
\end{equation}
For the left-hand diagram in (\ref{eq:mumon}), given a vertex 
$(\set{\sigma_1,\cdots,\sigma_n},\set{\tau_1,\cdots,\tau_m})$ in $\hat N ^2 \Sigma \otimes \hat N^2 \Sigma'$, we have the following:
$$
\begin{aligned}
&\phi_{\Sigma,\Sigma'}\circ(\mu_{\Sigma}\otimes \mu_{\Sigma'})\left(\set{\sigma_1,\cdots,\sigma_n},\set{\tau_1,\cdots,\tau_m}\right)\\
&=\phi_{\Sigma,\Sigma'}(\cup_{i=1}^n \sigma_i,\cup_{i=1}^m\tau_i)\\
&=\cup_{i=1}^n \sigma_i\times \cup_{j=1}^m\tau_j\\
&=\cup_{i,j} \sigma_i \times \tau_j \\
&=\mu_{\Sigma \otimes \Sigma'}(\set{\sigma_i \times \tau_j:~i,j}) \\
&=\mu_{\Sigma \otimes \Sigma'}\circ \hat N(\phi_{\Sigma,\Sigma'})(\set{(\sigma_i,\tau_j):~i,j}) \\
&=\mu_{\Sigma \otimes \Sigma'}\circ \hat N(\phi_{\Sigma,\Sigma'})\left(\set{\sigma_1,\cdots,\sigma_n} \times \set{\tau_1,\cdots,\tau_m}\right)\\
&=\mu_{\Sigma \otimes \Sigma'}\circ \hat N(\phi_{\Sigma,\Sigma'}) \circ \phi_{\hat N \Sigma,\hat N \Sigma'}
\left(\set{\sigma_1,\cdots,\sigma_n},\set{\tau_1,\cdots,\tau_m}\right).
\end{aligned}
$$
The right-hand diagram in (\ref{eq:mumon}) commutes since 
$\mu_{\Delta^0}\circ \delta_{\hat N \Delta^0}=\Id_{\hat N \Delta^0}$.
\end{proof}

\begin{cor}\label{cor:KlisMonoi}
The Kleisli category $\catsComp_{\hat N} = \catsRel$ inherits a symmetric monoidal structure, with the tensor product defined as follows:
\begin{itemize}
    \item On objects, the tensor product is given by $\Sigma_1 \boxtimes \Sigma_2 = \Sigma_1 \otimes \Sigma_2$.
    
    \item On morphisms, for $\pi_1 \colon \Sigma_1 \to \hat N \Sigma'_1$ and $\pi_2 \colon \Sigma_2 \to \hat N \Sigma'_2$, their tensor product is defined by
    \begin{equation}\label{eq:pi1otimespi2}
    \pi_1 \boxtimes \pi_2 = \phi_{\Sigma'_1, \Sigma'_2} \circ (\pi_1 \otimes \pi_2).
    \end{equation}
\end{itemize}
\end{cor}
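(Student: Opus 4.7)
The plan is to invoke the classical fact that any symmetric lax monoidal monad $(T,\phi,\delta,\mu)$ on a symmetric monoidal category $(\catC,\otimes,I)$ induces a symmetric monoidal structure on its Kleisli category $\catC_T$, with tensor product on objects inherited from $\catC$ and on morphisms $\pi_i\colon X_i\to TY_i$ defined by $\pi_1\boxtimes \pi_2 = \phi_{Y_1,Y_2}\circ(\pi_1\otimes \pi_2)$. The preceding proposition has just equipped $\hat N$ with exactly this structure on $(\catsComp,\otimes,\Delta^0)$, so the corollary amounts to specializing the general construction and verifying that the formulas in the statement match. I would record this as the overarching strategy and then check three concrete items.

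First, I would verify that $\boxtimes$ is well-defined on morphisms (immediate, since $\phi_{\Sigma'_1,\Sigma'_2}$ is a simplicial complex map) and that it preserves identities: the left diagram in (\ref{eq:deltamon}) gives
\[
\delta_{\Sigma_1}\boxtimes \delta_{\Sigma_2} \;=\; \phi_{\Sigma_1,\Sigma_2}\circ(\delta_{\Sigma_1}\otimes \delta_{\Sigma_2}) \;=\; \delta_{\Sigma_1\otimes\Sigma_2},
\]
which is the identity on $\Sigma_1\otimes\Sigma_2$ in $\catsRel$. Second, I would check compatibility with Kleisli composition: given $\pi_i\colon \Sigma_i\to \hat N\Sigma'_i$ and $\pi'_i\colon \Sigma'_i\to \hat N\Sigma''_i$, unravelling $(\pi'_1\boxtimes\pi'_2)\diamond(\pi_1\boxtimes\pi_2)$ yields
\[
\mu_{\Sigma''_1\otimes \Sigma''_2}\circ \hat N\bigl(\phi_{\Sigma''_1,\Sigma''_2}\circ(\pi'_1\otimes\pi'_2)\bigr)\circ \phi_{\Sigma'_1,\Sigma'_2}\circ (\pi_1\otimes\pi_2),
\]
and applying naturality of $\phi$ followed by the left diagram in (\ref{eq:mumon}) rewrites this as $\phi_{\Sigma''_1,\Sigma''_2}\circ\bigl((\pi'_1\diamond \pi_1)\otimes(\pi'_2\diamond \pi_2)\bigr) = (\pi'_1\diamond \pi_1)\boxtimes(\pi'_2\diamond \pi_2)$.

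Finally, the associator, unitors and braiding on $(\catsRel,\boxtimes,\Delta^0)$ are defined by post-composing the corresponding isomorphisms of $(\catsComp,\otimes,\Delta^0)$ with $\delta$; their coherence (pentagon, triangle, hexagon) axioms reduce, through the monoidal naturality of $\delta$ in (\ref{eq:deltamon}), to the corresponding axioms in $\catsComp$, which hold by the cartesian symmetric monoidal structure. The step I expect to be the main obstacle is the bifunctoriality check, since it is the one place where the monoidal naturality of $\mu$ is genuinely needed; everything else is a formal consequence of the lax monoidal monad structure established in the preceding proposition.
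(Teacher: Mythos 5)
Your proposal is correct and follows essentially the same route as the paper: the corollary is exactly the specialization of the classical fact that a (symmetric) monoidal monad induces a symmetric monoidal structure on its Kleisli category, with the identity and bifunctoriality checks resting on the unit diagram in (\ref{eq:deltamon}) and the multiplication diagram in (\ref{eq:mumon}) established in the preceding proposition, just as you describe.
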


\section{{Comparison results}}
\label{sec:comparison results}

\subsection{{Comparison of mapping scenarios}}
\label{sec:comparison of mapping scenarios}

In \cite[Section~4]{barbosa2023closing}, for every pair of standard scenarios $S_1$ and $S_2$, the authors define a new scenario $[S_1, S_2]$ as follows:
\begin{itemize}
    \item $\Sigma_{[S_1, S_2]} = \Sigma_{S_2}$.
    \item $O_{[S_1, S_2], x} = \set{(U, \alpha) \mid U \subset V(\Sigma_{S_1}) \; \text{and} \; \alpha \colon \prod_{x \in U} O_{S_1,x} \to O_{S_2,x}}$ for every $x \in V(\Sigma_{[S_1, S_2]})$.
\end{itemize} 
An empirical model $p$ on $[S_1, S_2]$ satisfying the predicate $g_{S_1, S_2}$ (see \cite[Definition~4.1]{barbosa2023bundle}) consists of compatible distributions $\set{p_{\sigma}}_{\sigma \in \Sigma_{S_2}}$, where for 
$\sigma = \set{x_1, \cdots, x_n}$, $p_{\sigma}$ is a distribution on the following set:
\begin{equation}\label{eq:setFGgST}
\left\{ \set{(\tau_1, \alpha_1), \cdots, (\tau_n, \alpha_n)} \;\middle|\; \cup_{i=1}^n \tau_i \in \Sigma_{S_1} \; \text{and} \; \alpha_i \colon \prod_{y \in \tau_i} O_{S_1, y} \to O_{S_2, x_i} \right\}.
\end{equation}
We denote the set of empirical models on $[S_1, S_2]$ satisfying $g_{S_1, S_2}$ by $\Emp([S_1, S_2], g_{S_1, S_2})$. 
In fact, we can define a functor
\[
([S_1, S_2], g_{S_1, S_2}) \colon \catC^{\op}_{\Sigma_{S_2}} \to \catSet
\]
by assigning to each $\sigma \in \Sigma_{S_2}$ the set in~\eqref{eq:setFGgST}. This leads to the following result.

\begin{pro}
There is a natural isomorphism of functors
\[
([S_1, S_2], g_{S_1, S_2}) \cong [\eE_{S_1}, \eE_{S_2}].
\]
\end{pro}

\begin{proof}
By Remark \ref{rem:NatDeter}, if $\sigma=\set{x_1,\cdots,x_n}$, then an element $(\pi,\alpha) \in [\eE_S,\eE_T](\sigma)$ is determined by the simplex 
$\pi(\sigma)=\set{\pi(x_1),\cdots,\pi(x_n)} \in \hat N \Sigma_{T}$ and the set map $\alpha_{\sigma}$. Because of the naturality of 
$\alpha$, for every $1\leq j \leq n$ we have the following commuting diagram:
$$
\begin{tikzcd}[column sep=huge,row sep=large]
 \prod_{y\in \cup_{i=1}^n\pi(x_i)}O_{S,y} \arrow[d,""] \arrow[r,"\alpha_{\sigma}"]&    \prod_{i=1}^n O_{T,x_i}  
\arrow[d,""]
 \\
 \prod_{y\in \pi(x_j)}O_{S,y} \arrow[r,"\alpha_{x_j}"]&  O_{T,x_j}   
\end{tikzcd}
$$
This implies that $\alpha_{\sigma}$ is uniquely determined by the maps $\alpha_{x_j}$. Therefore, we can write 
$$
[\eE_S,\eE_T](\sigma)=\set{(\set{\tau_1,\cdots,\tau_n},\set{\alpha_{x_1},\cdots,\alpha_{x_n}}):\; \cup_{i=1}^n \tau_i \in \Sigma_S\; \text{and}\;\alpha_{x_j} \colon \prod_{y\in \tau_j}O_{S,y} \to O_{T,x_j}}
$$
which is naturally isomorphic to $([S,T],g_{S,T})(\sigma)$.  
\end{proof}
\begin{cor}
The functor $([S,T],g_{S,T})$ is {an event scenario}.   
\end{cor}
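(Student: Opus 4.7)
The plan is to deduce this corollary as an immediate consequence of the preceding proposition combined with Proposition~\ref{pro:FGEvent}. The preceding proposition established a natural isomorphism $([S,T], g_{S,T}) \cong [\eE_S, \eE_T]$, and the three defining properties of an event scenario (locality, non-triviality, and local surjectivity) are each invariant under natural isomorphism of $\catSet$-valued functors. Hence it suffices to verify that $[\eE_S, \eE_T]$ is an event scenario.

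For this, I would first invoke Example~\ref{ex:eventpreshef}, which identifies event presheaves of standard scenarios with objects of $\catEScen$ via the fully faithful embedding $\catScen \hookrightarrow \catEScen$; in particular $\eE_T$ is an event scenario. Proposition~\ref{pro:FGEvent}, which asserts that $[F,G]$ is an event scenario whenever $G$ is one (with no hypothesis on $F$), then applies directly to the pair $(\eE_S, \eE_T)$ and yields the conclusion for $[\eE_S, \eE_T]$. Transporting back across the natural isomorphism finishes the argument.

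The only thing to double-check, and this is essentially automatic, is the invariance of the event scenario axioms under natural isomorphism. Non-triviality transfers because componentwise bijections preserve non-emptiness; local surjectivity transfers because for an inclusion $\sigma \hookrightarrow \tau$ the naturality square has vertical isomorphisms, so surjectivity of one horizontal map is equivalent to surjectivity of the other; and locality transfers because the canonical comparison map $G(\sigma) \to \lim G \circ \chi_{\sigma_1,\ldots,\sigma_n}$ is conjugate, via the components of the isomorphism and the induced isomorphism of limits, to the corresponding map for $F$, hence injective whenever the latter is. I do not anticipate any substantive obstacle: the entire argument reduces to quoting the two prior results and observing this invariance, which is why the result is stated as a corollary rather than a theorem.
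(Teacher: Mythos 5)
Your argument is correct and is exactly the (implicit) argument the paper intends: combine the natural isomorphism $([S,T],g_{S,T})\cong[\eE_S,\eE_T]$ with Example~\ref{ex:eventpreshef} and Proposition~\ref{pro:FGEvent}, and transport the event-scenario axioms across the isomorphism. Your extra check that locality, non-triviality, and local surjectivity are invariant under natural isomorphism is a harmless and valid elaboration of what the paper leaves unsaid.
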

\begin{cor}\label{cor:EmpSTEmpeE}
There is a convex isomorphism between $\Emp([S_1, S_2], g_{S_1, S_2})$ and $\Emp[\eE_{S_1}, \eE_{S_2}]$, which preserves the (non)contextuality of empirical models (see Definition~\ref{def:conEmp}).
\end{cor}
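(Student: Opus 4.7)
The plan is to derive the corollary directly from the natural isomorphism of functors established in the preceding proposition, by applying the empirical model functor and then invoking naturality of the canonical map $\Theta$ used to define (non)contextuality.

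First, I would observe that the preceding proposition yields a natural isomorphism $\phi \colon ([S_1, S_2], g_{S_1, S_2}) \xrightarrow{\cong} [\eE_{S_1}, \eE_{S_2}]$ in the functor category $\Func(\catC^{\op}_{\Sigma_{S_2}}, \catSet)$. Both sides are event scenarios (the left by the corollary just proved, the right by Proposition~\ref{pro:FGEvent}), so $\phi$ is in particular an isomorphism in $\EFunc(\Sigma_{S_2})$ and hence in $\catEScen$. Since $\Emp \colon \catEScen \to \catConv$ is a functor (Definition~\ref{def:empirical model functor event}), it sends $\phi$ to an isomorphism $\Emp(\phi) \colon \Emp([S_1,S_2], g_{S_1,S_2}) \xrightarrow{\cong} \Emp[\eE_{S_1}, \eE_{S_2}]$ in $\catConv$, which gives the desired convex isomorphism.

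For the preservation of (non)contextuality, I would unpack the definition of $\Theta_F \colon D(\lim F) \to \lim D_\ast(F)$ and argue that it is natural in $F$: given a natural isomorphism $\phi$ as above, the induced maps $\lim \phi$ and $\lim D_\ast(\phi)$ fit into a commutative square
\[
\begin{tikzcd}[column sep=large]
D(\lim ([S_1,S_2], g_{S_1,S_2})) \arrow[r,"\Theta"] \arrow[d,"D(\lim \phi)"',"\cong"] & \lim D_\ast ([S_1,S_2], g_{S_1,S_2}) \arrow[d,"\lim D_\ast(\phi)","\cong"'] \\
D(\lim [\eE_{S_1},\eE_{S_2}]) \arrow[r,"\Theta"] & \lim D_\ast [\eE_{S_1},\eE_{S_2}]
\end{tikzcd}
\]
in which the vertical maps are bijections since $\phi$ is a natural isomorphism and both $D$ and $\lim$ preserve isomorphisms. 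Consequently, an empirical model $p$ lies in the image of the top $\Theta$ if and only if $\Emp(\phi)(p)$ lies in the image of the bottom $\Theta$, so $\Emp(\phi)$ restricts to a bijection between noncontextual models and therefore also between contextual ones.

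The only subtle step is the naturality of $\Theta$, which follows from the universal property of limits: both $D(\lim F)$ and $\lim D_\ast(F)$ are functorial in $F$, and $\Theta_F$ is the unique map induced by the cone $\{D(F(\sigma \hookrightarrow \tau))\}$, so it commutes with morphisms of event scenarios. No further computation is required; the result is essentially a formal consequence of functoriality and naturality once the isomorphism of functors has been established.
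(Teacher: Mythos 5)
Your argument is correct and is exactly the intended one: the paper states this corollary without proof as an immediate consequence of the natural isomorphism $([S_1,S_2],g_{S_1,S_2})\cong[\eE_{S_1},\eE_{S_2}]$, obtained by applying the (functorial) empirical model construction $\lim D_\ast(-)$ and using the naturality of $\Theta$ in the scenario to transport (non)contextuality. Your filled-in details, including the commuting square witnessing naturality of $\Theta$, are sound and add nothing beyond what the paper's implicit argument requires.
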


\subsection{{Convex maps between empirical models}}
\label{sec:convex maps between empirical models}

We now explain how to obtain \cite[Theorem 44]{barbosa2023closing} using Theorem \ref{thm:contexMapp}. According to \cite[Theorems 3.18 and 4.18]{barbosa2023bundle}, 
there are fully faithful functors $\El(\eE)\colon \catScen \to \catbScen$ and 
$N\colon \catbScen \to \catsScen$, as well as natural isomorphisms $\eta$ and $\zeta$ as follows:
\begin{equation}\label{dia:BundTheorems}
    \begin{tikzcd}[column sep=huge,row sep=large]
\catScen \arrow[r,"\El(\eE)"]
\arrow[dr,"\Emp"',""{name=C,right}]
 & \catbScen
\arrow[r,"N"]
\arrow[d,"",""{name=A,right}] & \catsScen
\arrow[dl,"\sDist",""{name=B,left}] 
 \arrow[Rightarrow, from=C, to=A, "\eta"] \arrow[Rightarrow, from=A, to=B, "\zeta"]\\
&  \catConv &  
\end{tikzcd}
\end{equation}
where the middle vertical arrow is the functor $\bEmp$. As explained in \cite[Corollary 5.4]{barbosa2023bundle}, Diagram (\ref{dia:BundTheorems}) induces the following diagram: 

\begin{equation}\label{dia:BundCorollaries}
    \begin{tikzcd}[column sep=huge,row sep=large]
D(\catScen) \arrow[r,"D(\El(\eE))"]
\arrow[dr,"{\Emp'}"',""{name=C,right}]
 & D(\catbScen)
\arrow[r,"D(N)"]
\arrow[d,"",""{name=A,right}] & D(\catsScen)
\arrow[dl,"{\sDist'}",""{name=B,left}] 
 \arrow[Rightarrow, from=C, to=A, "{\eta'}"] \arrow[Rightarrow, from=A, to=B, "{\zeta'}"]\\
&  \catConv &  
\end{tikzcd}
\end{equation}
where $\Emp'$ and $\sDist'$ are the transposes of the functors of $\Emp$ and $\sDist$ under the adjunction $D:\catCat \adjoint \catConvCat :U$, respectively. Meanwhile, $\eta$ and $\zeta$
are natural isomorphisms.
This leads to the following commutative diagram:
\begin{equation}\label{dia:sDistEmp}
\begin{tikzcd}[column sep=huge,row sep=large]
D\left(\catScen(S,T)\right)  \arrow[d,"{\Emp'_{S,T}}"] \arrow[rr,"D(N\circ \El(\eE))_{S,T}","\cong"']  
&&  D\left(\catsScen(N\El(\eE_S),N\El(\eE_T))\right)  \arrow[d,"{\sDist'_{N\El(\eE_S),N\El(\eE_T)}}"] \\
\catConv\left(\Emp(S),\Emp(T)\right) \arrow[rr,"{(\zeta' \circ \eta')_T \circ -\circ (\zeta' \circ \eta')_S^{-1}}","\cong"'] && 
\catConv\left(\sDist(N\El(\eE_S)),\sDist(N\El(\eE_T))\right) 
\end{tikzcd}
\end{equation}

\begin{pro}\label{pro:sDistNfgMap}
There exists a surjective convex map
\[
\sDist(N{[f,g]}) \to \sDist(\Map(Nf,Ng))
\]
such that a simplicial distribution $p \in \sDist(\Map(Nf,Ng))$ is noncontextual if and only if it is the image of a noncontextual simplicial distribution in $\sDist(N{[f,g]})$.
\end{pro}

\begin{proof}
According to Proposition \ref{pro:NMap=MapN} the map $l\colon N{[f,g]} \to \Map(Nf,Ng)$ has a section, so it induces the following commutative diagram 
$$
\begin{tikzcd}[column sep=huge,row sep=large]
D(\sSect(N{[f,g]})) \arrow[d,"D(l_\ast)",tail, twoheadrightarrow] \arrow[rr,"\Theta_{N{[f,g]}}"]  
&&  \sDist(N{[f,g]}) \arrow[d,"l_\ast",tail, twoheadrightarrow] \\
D(\sSect(\Map(Nf,Ng))) \arrow[rr,"\Theta_{\Map(Nf,Ng)}"]   && \sDist(\Map(Nf,Ng)) 
\end{tikzcd}
$$
where the vertical map{s} are surjective.  {The desired result follows by diagram chasing.}
\end{proof}

\begin{pro}\label{pro:themapprescontex}
There exists a surjective convex map
\[
r_{S,T}\colon \Emp([S,T],g_{S,T}) \twoheadrightarrow \sDist(\Map(N\El(\eE_S),N\El(\eE_T)))
\]
such that a simplicial distribution $p \in \sDist(\Map(N\El(\eE_S),N\El(\eE_T)))$ is noncontextual if and only if there exists a noncontextual empirical model $q \in \Emp([S,T],g_{S,T})$ {satisfying} $r_{S,T}(q) = p$.
\end{pro}
%
\begin{proof}
By Corollary~\ref{cor:EmpSTEmpeE}, there is an isomorphism
\[
\Emp([S,T],g_{S,T}) \cong \Emp[\eE_S,\eE_T].
\]
Next, \cite[Theorem~5.10(1)]{barbosa2023bundle} provides an isomorphism
\[
\Emp[\eE_S,\eE_T] \cong \bEmp(\El[\eE_S,\eE_T]),
\]
and Proposition~\ref{pro:MApEl=ElMap} gives an isomorphism
\[
\bEmp(\El[\eE_S,\eE_T]) \cong \bEmp[\El(\eE_S),\El(\eE_T)].
\]
Then, \cite[Theorem~5.10(2)]{barbosa2023bundle} yields an isomorphism
\[
\bEmp[\El(\eE_S),\El(\eE_T)] \cong \sDist(N[\El(\eE_S),\El(\eE_T)]).
\]
Finally, Proposition~\ref{pro:sDistNfgMap} provides a surjective map
\[
\sDist(N[\El(\eE_S),\El(\eE_T)]) \to \sDist\left(\Map(N\El(\eE_S),N\El(\eE_T))\right).
\]
Moreover, all of the maps above are convex and preserve non-contextuality in the sense that the non-contextual elements in the codomain are precisely the images of the non-contextual elements in the domain.
\end{proof}

Using the map in Proposition \ref{pro:themapprescontex}, we define the map $\tilde{\mu}_{S,T}$ to be 
$$
({(\zeta' \circ \eta')_T^{-1} \circ -\circ (\zeta' \circ \eta')_S})\circ \mu_{N\El(\eE_S),N\El(\eE_T)} \circ r_{S,T}
$$
Then we have the following commutative 
diagram:
\begin{equation}\label{dia:FTildeF}
\begin{tikzcd}[column sep=huge,row sep=large]
\Emp([S,T],g_{S,T})  \arrow[d,"\tilde{\mu}_{S,T}"] \arrow[rr,"r_{S,T}",tail, twoheadrightarrow]  
&&  \sDist(\Map(N\El(\eE_S),N\El(\eE_T)))  \arrow[d,"\mu_{N\El(\eE_S),N\El(\eE_T)}"] \\
\catConv(\Emp(S),\Emp(T)) \arrow[rr,"{(\zeta' \circ \eta')_T \circ -\circ (\zeta' \circ \eta')_S^{-1}}","\cong"'] && \catConv(\sDist(N\El(\eE_S)),\sDist(N\El(\eE_T))) 
\end{tikzcd}
\end{equation}
Now, suppose we are given a convex map $\tilde{\varphi}\colon \Emp(S) \to \Emp(T)$. By Diagram (\ref{dia:sDistEmp}) the map 
$\tilde{\varphi}$ is induced by a probabilistic procedure if and only if the map 
$$
\varphi=(\zeta' \circ \eta')_T \circ \tilde{\varphi} \circ (\zeta' \circ \eta')_S^{-1} \colon \sDist(N\El(\eE_S)) \to\sDist(N\El(\eE_T))
$$
is a convex combination of maps induced by morphisms in $\catsScen(N\El(\eE_S),N\El(\eE_T))$. By Theorem \ref{thm:contexMapp}, this happens if and only if there exist 
a noncontextual simplicial distribution $p$ on $\Map(N\El(\eE_S),N\El(\eE_T))$ such that $\varphi=\mu_{N\El(\eE_S),N\El(\eE_T)}(p)$. 
Using Proposition \ref{pro:themapprescontex} and Diagram (\ref{dia:FTildeF}), 
this is equivalent to the existence of a noncontextual empirical model $\tilde{p} \in \Emp([S,T],g_{S,T})$ with $\tilde{\varphi}=\mu_{S,T}(\tilde{p})$.

\end{document}